\begin{document}

\title{Locally Stationary Distributions: A Framework for Analyzing Slow-Mixing Markov Chains}
\author{Kuikui Liu\thanks{MIT. Email: \texttt{liukui@mit.edu}}
  \and Sidhanth Mohanty\thanks{MIT. Email: \texttt{sidm@mit.edu}. Supported by CSAIL.}
  \and Prasad Raghavendra\thanks{UC Berkeley. Email: \texttt{raghavendra@berkeley.edu}.
  Supported by NSF CCF-2342192.}
  \and Amit Rajaraman\thanks{MIT. Email: \texttt{amit\_r@mit.edu}. Supported by an Akamai Presidential Fellowship.}
  \and David X. Wu\thanks{UC Berkeley. Email: \texttt{david\_wu@berkeley.edu}. Supported by NSF GRFP DGE-2146752.}}
\date{\today}
\maketitle

\begin{abstract}
Many natural Markov chains fail to mix to their stationary distribution in polynomially many steps. Often, this slow mixing is inevitable since it is computationally intractable to sample from their stationary measure. 

Nevertheless, Markov chains can be shown to always converge quickly to measures that are \emph{locally stationary}, i.e., measures that don't change over a small number of steps.  These locally stationary measures are analogous to local minima in continuous optimization, while stationary measures correspond to global minima.

While locally stationary measures can be statistically far from stationary measures, do they enjoy provable theoretical guarantees that have algorithmic implications?  We study this question in this work and demonstrate three algorithmic applications of locally stationary measures:
\begin{enumerate}
    \item We show that Glauber dynamics on the hardcore model can be used to find independent sets of size $\Omega\left(\frac{\log d}{d} \cdot n\right)$ in triangle-free graphs of degree at most $d$. 

    \item Let $W$ be a symmetric real matrix with bounded spectral diameter and $v$ be a unit vector. Given the matrix $M = \lambda vv^\top + W$ with a planted rank-one spike along vector $v$, for sufficiently large constant $\lambda$, Glauber dynamics on the Ising model defined by $M$ samples vectors $x \in \{\pm 1\}^n$ that have constant correlation with the vector $v$.  
    \item Let $M = A_{\bG} - \frac{d}{n}\bone\bone^\top$ be a centered version of the adjacency matrix where the graph $\bG$ is drawn from a sparse 2-community stochastic block model with signal-to-noise ratio $\lambda$.
    We show that for sufficiently large constant $\lambda$, Glauber dynamics on the Ising model defined by $M$ samples vectors $x \in \{\pm 1\}^n$ that have constant correlation with the hidden community vector $\bsigma$.
\end{enumerate}
In other words, Glauber dynamics subsumes the spectral method for \emph{spiked Wigner} and \emph{community detection}, by \emph{weakly} recovering the planted spike.
\end{abstract}

\thispagestyle{empty}
\setcounter{page}{0}
\newpage
\tableofcontents
\thispagestyle{empty}
\setcounter{page}{0}
\newpage


\section{Introduction}

Markov chains are a fundamental algorithmic primitive that are widely applied towards sampling and counting tasks.
There is a rich body of literature devoted to understanding worst-case \emph{mixing times} of Markov chains, i.e., the number of steps required for the distribution of the chain to approach its stationary measure started from an arbitrary initialization. For some highlights in this area, see, e.g., the contents and references in \cite{MT06,Dia09,BGJM11,BGL14,LPW17}.

Unfortunately, many natural Markov chains fail to mix rapidly from worst-case initializations, in that it takes a super-polynomial number of steps to reach stationarity.
Structurally, this is due to the presence of cuts in the state space with very small conductance.
Often, slow mixing is inevitable since sampling from the stationary measure is known to be computationally hard (say $\NP$-hard); see, e.g., \cite{Sly10,SS14,GV16}.
Nevertheless, it has been empirically observed that certain simple and local Markov chains like \emph{Glauber dynamics} succeed at optimization and inference tasks even when they are not known to mix, such as finding satisfying assignments to SAT formulas \cite{SKC94, BILSZ16}, and clustering stochastic block models \cite{MS12, GBP19}.
This suggests that local Markov chains like Glauber dynamics can have algorithmic applications, even if they \emph{fail} to mix rapidly, which raises our main line of inquiry.
\begin{question}
    What is the long-term behavior of Markov chains that do not mix rapidly?
    How can slow-mixing Markov chains be harnessed for optimization and inference?
\end{question}
To lay out the motivation, it is useful to draw an analogy to continuous optimization. Gradient descent, the canonical algorithm in optimization, converges efficiently to a global minimum if the objective function and parameter space is convex. However, non-convex objective functions and parameter spaces come up often both in theory and practice, and finding global minima can even be provably intractable. 
On the other hand, gradient descent can be shown to always converge quickly to a local minimum, or more precisely, a first-order stationary point. Moreover, these local minima are useful in practice, and can admit non-trivial theoretical guarantees.
See \cite{Nes18} for a comprehensive coverage of analyzing gradient-based optimization methods, and see, e.g., \cite{GJZ17,BLGT20,JT20,JNGMJ21} and the references therein for non-trivial theoretical guarantees on local minima of gradient descent in the context of machine learning.

Analogously, in the context of sampling, certain random walks can be shown to mix rapidly to their stationary measures. 
However, other random walks can also be shown to mix slowly from worst-case initializations; these are akin to the ``hard'' nonconvex optimization problems.
Despite this, one can show that \emph{any} Markov chain satisfying fairly generic conditions converges to analogs of local minima that we term \emph{locally stationary distributions} (see \pref{def:localstationary}).  
Intuitively, a locally stationary distribution corresponds to the stationary measure conditioned on a subset of states that are sparsely connected to the rest of the state space under the Markov chain.

This raises the question of whether these locally stationary measures obey theoretical guarantees that are useful for solving problems in optimization or inference.  

\subsection{Locally stationary distributions}
Let $P$ be the transition matrix of a time-reversible Markov chain on a state space $\Omega$, where $P[x,y]$ denotes the transition probability from $x$ to $y$.  Let $\pi$ be a stationary distribution w.r.t. $P$.
 %

Analogous to local minima in optimization, a locally stationary measure $\nu$ is one started at which the Markov chain $P$ remains nearly stationary, i.e., makes little progress.
We will use the \emph{KL divergence} to the stationary distribution, denoted $\KL{\nu}{\pi}$, as a measure of progress.  
This leads to the following definition of an $\eps$-locally stationary measure.

\begin{definition} \label{def:localstationary}
A probability measure $\nu$ on $\Omega$ with density $\density$ relative to $\pi$ is said to be \emph{$\eps$-locally stationary} with respect to $P$ if 
$$
    \calE(\density,\log\density) \coloneqq \sum_{x,y \in \Omega} P[x,y] \cdot \parens*{\density(x) - \density(y)} \cdot \log\frac{\density(x)}{\density(y)} \leq \eps
$$
\end{definition}

The \emph{Dirichlet form} $\calE(\density,\log \density)$ measures the rate at which the Markov chain progresses towards the stationary distribution.  In particular, for a continuous-time version of the Markov chain $P$, if $\nu_t$ denotes the measure at time $t$, we have the following well-known fact \cite{BT06}:
$$\frac{\dif}{\dif t}\KL{\nu_t}{\pi} = -\calE(\density_t,\log\density_t).$$ 

As an immediate consequence, we observe that the Markov chain is typically on $\eps$-locally stationary measures over time.  Formally, we have the following claim.

\begin{restatable}{theorem}{mainls}
    \label{thm:main-ls}
    Fix a time-reversible Markov chain $P$ with a stationary measure $\pi$, any starting distribution $\nu_0$, and $\eps,\delta > 0$.
    Let $T = \frac{1}{\delta\eps} \cdot \log\left( \frac{1}{\pi_{\min}} \right)$.
    Then, for a time $\bt \sim [0,T]$ chosen uniformly at random, the distribution $\nu_{\bt}$ at time $t$ is $\eps$-locally stationary with respect to $P$ with probability at least $1-\delta$.
\end{restatable}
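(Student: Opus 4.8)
The plan is to integrate the quoted energy-dissipation identity $\frac{\dif}{\dif t}\KL{\nu_t}{\pi} = -\calE(\density_t,\log\density_t)$ over the window $[0,T]$, turning it into a bound on the \emph{average} value of the Dirichlet form along the trajectory, and then apply Markov's inequality in the time variable $\bt$. Throughout, $\nu_t$ denotes the law of the continuous-time chain generated by $P$, which is the regime in which the identity of \cite{BT06} holds.

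The first point to record is that the Dirichlet form is nonnegative term by term: in each summand of $\calE(\density,\log\density) = \sum_{x,y} P[x,y]\parens*{\density(x)-\density(y)}\log\frac{\density(x)}{\density(y)}$, the factors $\density(x)-\density(y)$ and $\log\density(x)-\log\density(y)$ share the same sign since $\log$ is increasing, and $P[x,y]\ge 0$, so every term is $\ge 0$. Hence $t\mapsto\KL{\nu_t}{\pi}$ is nonincreasing, and integrating the identity and using $\KL{\nu_T}{\pi}\ge 0$ gives
$$\int_0^T \calE(\density_t,\log\density_t)\,\dif t \;=\; \KL{\nu_0}{\pi} - \KL{\nu_T}{\pi} \;\le\; \KL{\nu_0}{\pi}.$$
Next I would bound the initial divergence crudely: since $\nu_0(x)\le 1$ and $\pi(x)\ge\pi_{\min}$ for every $x$, we get $\KL{\nu_0}{\pi} = \sum_x \nu_0(x)\log\frac{\nu_0(x)}{\pi(x)} \le \sum_x \nu_0(x)\log\frac{1}{\pi_{\min}} = \log\frac{1}{\pi_{\min}}$. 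Dividing by $T = \frac{1}{\delta\eps}\log\frac{1}{\pi_{\min}}$, the trajectory-average of the Dirichlet form satisfies
$$\mathbb{E}_{\bt\sim[0,T]}\parens*{\calE(\density_{\bt},\log\density_{\bt})} \;=\; \frac{1}{T}\int_0^T \calE(\density_t,\log\density_t)\,\dif t \;\le\; \frac{1}{T}\log\frac{1}{\pi_{\min}} \;=\; \delta\eps.$$
Since $\calE(\density_t,\log\density_t)\ge 0$ for all $t$, Markov's inequality yields $\Pr_{\bt\sim[0,T]}\left[ \calE(\density_{\bt},\log\density_{\bt}) > \eps \right] \le \delta$, which is exactly the assertion that $\nu_{\bt}$ is $\eps$-locally stationary with probability at least $1-\delta$.

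There is no real obstacle here: the argument is a one-line averaging/pigeonhole over time once the dissipation identity is available. The only points needing (minor) care are the sign analysis establishing $\calE\ge 0$ — which is used both for monotonicity of $\KL{\nu_t}{\pi}$ and for the applicability of Markov's inequality — and checking that the hypotheses behind the quoted identity (enough regularity of $t\mapsto\nu_t$, automatic for a finite state space with bounded generator) are in force. A discrete-time analogue follows from the same telescoping-then-averaging scheme with the integral replaced by a sum over steps, modulo relating the per-step decrease of the KL divergence to $\calE$ with the appropriate constant.
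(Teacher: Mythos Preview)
Your proof is correct and follows essentially the same route as the paper: integrate the dissipation identity $\frac{\dif}{\dif t}\KL{\nu_t}{\pi}=-\calE(\density_t,\log\density_t)$ over $[0,T]$, use $\KL{\nu_T}{\pi}\ge 0$ and $\KL{\nu_0}{\pi}\le\log(1/\pi_{\min})$ to bound the time-average of the Dirichlet form by $\delta\eps$, then apply Markov's inequality in $\bt$. The paper packages the averaging step as a separate lemma (\pref{lem:QS}) and then remarks that the theorem follows by Markov; your explicit verification that $\calE(\density,\log\density)\ge 0$ termwise is a detail the paper leaves implicit.
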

The main conceptual contribution of our work is the following meta-principle for showing that locally stationary distributions solve optimization and inference problems.
\begin{displayquote}
    \emph{Prove that sampling from the true stationary distribution solves the optimization or inference problem of interest, and additionally, does so for ``local'' reasons.}
\end{displayquote}
This principle is best illustrated by discussing our algorithmic applications of locally stationary distributions.

\parhead{Independent sets in triangle-free graphs.}
It is easy to see that any graph $G$ on $n$ vertices with degree bounded by $d$ has an independent set of size $\frac{n}{d+1}$, a bound which is tight for the union of $(d+1)$-sized cliques.
Ajtai, Koml{\'o}s, and Szemer{\'e}di \cite{AKS80} showed that when $G$ is triangle-free, the size of the maximum independent set guaranteed to exist increases to $\Omega\parens*{n\cdot\frac{\log d}{d}}$.
Shearer \cite{She83} gave an alternate proof which pins down the leading constant to $1 - o_d(1)$ and relaxes the assumption of bounded maximum degree to bounded average degree.

It is also well-known that a uniformly random independent set in a triangle-free graph of maximum degree $d$ has expected size at least $\Omega\parens*{n\cdot\frac{\log d}{d}}$; see, e.g., \cite[Proposition 1, Page 272]{AS16}. Hence, it is natural to wonder whether \emph{Glauber dynamics} with respect to the uniform measure over independent sets finds such a large independent set. From a given independent set $I \subseteq V$, the transitions of Glauber dynamics can be described as follows:
\begin{enumerate}
    \item Sample a uniformly random vertex $v \in V$.
    \item If $I \cup \{v\}$ is an independent set, then go to $I' = I \cup \{v\}$ with probability $1/2$ and $I' = I \setminus \{v\}$ with probability $1/2$.
    \item If $I \cup \{v\}$ is not an independent set, go to $I' = I \setminus \{v\}$ with probability $1$.
\end{enumerate}

Notably, this Markov chain requires $\exp(\Omega(n))$ steps to mix \cite{MWW07} as soon as $d \geq 6$. In fact, the problem of sampling a uniformly random independent set on a graph of maximum degree $d$ becomes $\NP$-hard \cite{Sly10, SS14} in this regime, even if triangle-freeness is assumed \cite{GSV15}.

Despite these hardness results for the corresponding sampling problems, we show that the above Markov chain can be used to find independent sets of size $\Omega\parens*{n\cdot\frac{\log d}{d}}$ in triangle-free graphs of maximum degree bounded by $d$.
Specifically, we show the following result.
\begin{restatable}{theorem}{glauberindepset}\label{thm:glauber-indepset}
    Let $G$ be a triangle-free graph on $n$ vertices with maximum degree bounded by $d$.
    Let $\bI$ be an independent set in $G$ that arises from Glauber dynamics run for $O\parens*{nd^4}$ time.
    Then the expected size of $\bI$ is at least $\frac{1-o_d(1)}{4}\cdot n\cdot\frac{\log d}{d}$.
\end{restatable}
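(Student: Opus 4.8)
The plan is to apply the meta‑principle with $\pi$ equal to the uniform distribution over independent sets of $G$, which is exactly the stationary measure of the stated Glauber chain (the hardcore model at fugacity $1$). Two ingredients go into this. First, the ``global'' fact: a uniformly random independent set of a triangle‑free graph of maximum degree $d$ has expected size at least $(1-o_d(1))\frac{n\log d}{4d}$. Second, and crucially, the classical proof of this fact is \emph{local}: for each vertex $v$ one reveals $I$ on $V\setminus N[v]$, observes that by triangle‑freeness the ``free'' neighbors $F_v$ of $v$ (those with no revealed neighbor in $I$) then behave, under $\pi$, like independent fair coins once we condition on $v\notin I$, and concludes a lower bound on $\mathbb{E}_\pi[d\cdot\mathbf{1}[v\in I]+|N(v)\cap I|]$ depending only on this local law; summing over $v$ and using $\sum_v(d\cdot\mathbf{1}[v\in I]+|N(v)\cap I|)\le 2d|I|$ yields the size bound. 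The task is to carry this local computation out for an arbitrary $\eps$-locally stationary $\nu$ in place of $\pi$.

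I would first set up the reduction to \pref{thm:main-ls}. Since $\pi_{\min}\ge 2^{-n}$ we have $\log(1/\pi_{\min})=O(n)$; choosing $\eps,\delta$ to be suitable inverse polynomials in $d$ with $\frac{1}{\delta\eps}=O(d^4)$, the horizon $T=\frac{1}{\delta\eps}\log(1/\pi_{\min})$ is $O(nd^4)$, and by \pref{thm:main-ls} the law of the Glauber chain at a uniformly random time in $[0,T]$ is $\eps$-locally stationary except with probability $\delta=o_d(1)$. It then suffices to show that every $\eps$-locally stationary $\nu$ has $\mathbb{E}_\nu[|I|]\ge(1-o_d(1))\frac{n\log d}{4d}$, since then $\mathbb{E}[|\bI|]\ge(1-\delta)(1-o_d(1))\frac{n\log d}{4d}=\frac{1-o_d(1)}{4}\cdot\frac{n\log d}{d}$.

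For the main estimate, I would expand $\calE(\density,\log\density)$ for single‑site dynamics: it is $\tfrac1n\sum_v$ of an average over boundary configurations of a nonnegative local discrepancy that vanishes exactly when the conditional law of the occupancy of $v$ agrees with that of $\pi$. Thus $\calE(\density,\log\density)\le\eps$ forces, on average over $v$: (a) $\Pr_\nu[v\in I\mid I_{V\setminus v}]\approx\tfrac12$ whenever $v$ is addable, which after revealing $I$ on $V\setminus N[v]$ gives $\mathbb{E}_\nu[|N(v)\cap I|\mid\text{reveal}]\approx\Pr_\nu[v\notin I\mid\text{reveal}]\cdot\tfrac{|F_v|}{2}$ and $\Pr_\nu[v\in I\mid\text{reveal}]\approx\tfrac12\Pr_\nu[N(v)\cap I=\emptyset\mid\text{reveal}]$; and (b) conditioned on the reveal and on $v\notin I$, the restriction of $\nu$ to $F_v$ is close in KL to the uniform product measure. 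For (b) I would use that $G[N(v)]$ is edgeless by triangle‑freeness, so the Glauber updates at $F_v$ with everything else frozen are precisely product dynamics on $|F_v|\le d$ independent sites, whose modified log‑Sobolev constant is $\Omega(1/d)$; since each update at a vertex $u$ is charged to at most $d$ of its neighbors $v$, tensorization yields a bound of the form $\tfrac1n\sum_v \KL{\nu(\cdot\mid\text{reveal},\,v\notin I)}{\pi(\cdot\mid\text{reveal},\,v\notin I)}=\mathrm{poly}(d)\cdot\eps$, which is $o_d(1)$ for the chosen $\eps$. Feeding (a) and (b) back in — handling the two regimes $|F_v|<\log_2 d$ and $|F_v|\ge\log_2 d$ as in the hardcore computation — one recovers $\mathbb{E}_\nu[d\cdot\mathbf{1}[v\in I]+|N(v)\cap I|\mid\text{reveal}]\ge(1-o_d(1))\cdot\tfrac12\log d$ for a $(1-o_d(1))$‑fraction of $v$, and summing against $\sum_v(d\cdot\mathbf{1}[v\in I]+|N(v)\cap I|)\le 2d|I|$ gives the claimed lower bound on $\mathbb{E}_\nu[|I|]$.

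The main obstacle is this transfer step (b): single‑site local stationarity only controls one coordinate at a time, whereas the hardcore argument needs the \emph{joint} law of the free neighbors of $v$ — in particular a two‑sided estimate on $\Pr_\nu[N(v)\cap I=\emptyset\mid\text{reveal}]$, for which controlling all the marginals is insufficient. Triangle‑freeness is exactly what rescues this, by making the relevant local chain a product chain so that tensorization/MLSI converts the aggregate Dirichlet‑form budget into KL‑closeness of these conditional laws; quantifying this carefully — in particular ensuring that the cumulative error over all $n$ vertices costs only a $(1-o_d(1))$ factor, which is what pins down the polynomial $d^4$ in the running time — is the technical heart of the proof.
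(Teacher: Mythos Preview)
Your proposal is correct and follows essentially the same approach as the paper: both use the score function $\phi_v(x)=dx_v+|N(v)\cap I|$ and the bound $\sum_v\phi_v\le 2d|I|$, reveal outside $N[v]$, and convert the $\eps$-local-stationarity budget into closeness of the local conditional laws via an MLSI for the residual chain (exploiting triangle-freeness) together with a Dirichlet-form comparison and \pref{lem:QS}/\pref{thm:main-ls}. The only cosmetic difference is in the conditioning scheme: the paper, after revealing $V\setminus N[v]$, additionally pins $v$ whenever more than $\log d$ neighbors remain free, so the residual set $U$ is always either a star on $\le\log d$ leaves or a set of isolated vertices and a single KL/TV bound suffices, whereas you handle the center and the leaves separately via your steps (a) and (b) --- both routes yield the same $\poly(d)$ loss and hence the $O(nd^4)$ running time.
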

\begin{remark}
    In fact, one can prove that Glauber dynamics at ``fugacity'' $\tfrac{1}{\log d}$ finds an independent set of size $(1-o_d(1))\cdot n\cdot\frac{\log d}{d}$ by combining our proof method with that of \cite{DJPR18}.
\end{remark}

As mentioned before, we know that the expected size of a uniformly random independent set satisfies the above lower bound.
However, the Glauber dynamics chain does not mix rapidly, and hence does not produce samples from the truly uniform distribution.
Instead, it samples from a locally stationary distribution with respect to the Markov chain.
Our key insight is that the same proof also goes through for an independent set sampled from a locally stationary distribution with respect to Glauber dynamics.

To give a sense of how local stationarity is used, we briefly discuss the proof.
The proof from \cite{AS16} that the expected size of a uniformly random independent set $\Omega\parens*{n\cdot\frac{\log d}{d}}$ argues that for any vertex $v$, and for any pinning $x_{v\cup N(v)}$ of the independent set outside $v$ and its neighbors, either:
\begin{itemize}
    \item the uniform distribution conditioned on the pinning chooses $v$ with probability $\gtrsim \frac{\log d}{d}$, or
    \item it chooses $\gtrsim \log d$ neighbors of $v$ in expectation.
\end{itemize}
Thus, each vertex can be charged $\Omega\left(\frac{\log d}{d}\right)$ vertices on average in the independent set.
Observe that the above sketch of the argument goes through even if the distribution is not truly uniform but merely has conditional marginals matching the uniform distribution, which is a property we can show holds for locally stationary distributions (\pref{lem:localpatches}).

Using similar arguments, one can establish that given a triangle-free graph with maximum degree $d$, Glauber dynamics run for $\poly(n)$ many steps on the antiferromagnetic Ising model on $G$ with inverse temperature $\frac{1}{\sqrt{d}}$ recovers a cut of relative size $\frac{1}{2}+ \Omega\left(\frac{1}{\sqrt{d}}\right)$.

\medskip

\parhead{Weak recovery in spiked models.} Beyond independent sets, we also study the performance of Glauber dynamics for statistical inference tasks. Consider the central class of Bayesian models for principal component analysis (PCA) known as \emph{spiked random matrix models}, which consist of a matrix $M \in \R^{n \times n}$ given by
\begin{align*}
    M = \underset{\substack{\uparrow \\ \text{signal} \\ \text{strength}}}{\lambda} \cdot \underset{\substack{\uparrow \\ \text{signal}}}{vv^\top} + \underset{\substack{\uparrow \\ \text{noise}}}{W}.
\end{align*}
The general algorithmic question is to approximately recover the signal (a unit vector $v$) under appropriate assumptions about the noise ($W$) and the signal strength ($\lambda$).
More precisely:
\begin{problem}[Weak recovery in spiked matrix models]
    For a unit norm signal vector $v \in \R^n$, signal strength $\lambda \in \R$, and noise matrix $W \in \R^{n \times n}$, given $M = \lambda \cdot vv^{\top} + W$, give an efficient algorithm to extract a unit norm estimate $\wh{v}$ such that $\angles{v,\wh{v}} \geq \Omega(1)$.
\end{problem}

In the situation where $W$ is a Wigner matrix, this model, known as the \emph{spiked Wigner model}, has been a subject of extensive study.
The work of \cite{BBP05} determined that once $\lambda > 1$, a spectral algorithm based on computing the top eigenvector succeeds at weak recovery.  There is a fairly large body of work on the spiked Wigner 
model, towards characterizing optimal estimation error, efficient algorithms and its generalizations to rank larger than one \cite{DMK16,DMKLZ16,DAM17,EAK18,LM17,MIO17,BM19,MV21}.
When the prior distribution over $v$ is the uniform distribution over $\left\{\pm \tfrac{1}{\sqrt{n}}\right\}^n$ and $\bbS^{n-1}$, there are efficient algorithms that even achieve the maximum information-theoretically achievable correlation $\abs*{\angles*{v, \wh{v}}}$, based on approximate message passing \cite{FMM18,CFM23}, and algorithmic stochastic localization \cite{MW23}.


In the case where the prior distribution is on the hypercube, given the matrix $M$, this posterior is described by an Ising model: a probability distribution $\mu_{\beta M}$ over $\{\pm 1\}^n$ defined by the following proportionality relation for a suitably chosen $\beta > 0$:
$$ \mu_{\beta M}(x) \propto \exp(\tfrac{1}{2}\angles{x, \beta M x}) \text{ for all } x \in \{\pm 1\}^n. $$

Sampling from the above posterior distribution is desirable as it achieves the maximum information-theoretically achievable correlation.
The canonical algorithm for sampling is to run the Glauber dynamics Markov chain, but unfortunately, provable guarantees for Glauber dynamics are currently lacking.
Thus, a natural question en route is: \emph{does Glauber dynamics for the Ising model $\mu_{\beta M}$ weakly recover the signal in polynomial-time?}

We make progress towards answering this question affirmatively in this work by showing that Glauber dynamics at a slightly higher temperature than the posterior distribution succeeds for a broad family of settings.

Formally, we show the following result:
\begin{restatable}{theorem}{mainspikedwig}
    \label{th:correlation-gain}
    Let $W$ be a matrix with $\kappa \psdle W \psdle 1-\kappa$, and $v \in \left\{ \pm \frac{1}{\sqrt{n}}\right\}^n$. Let $P$ denote the kernel of the Glauber dynamics chain with stationary distribution $\mu_{W + \lambda vv^\top}$, and $x_0$ an arbitrary point on $\{\pm 1\}^n$.
    There exists a large enough constant $\lambda > 0$ such that for $T = \wt{\Theta}(n^{5})$, and for $\bt\sim[0,T]$, with probability $1-o(1)$ we have:
    \[
        \E_{\bx \sim P^t\delta_{x_0}}[\left|\langle \bx,v \rangle\right|] \ge \left(\kappa \exp\parens*{ - \frac{1}{\kappa}} - o(1) \right) \cdot \sqrt{n}\mper
    \]
\end{restatable}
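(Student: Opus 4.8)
The plan is to instantiate the paper's meta-principle: the stationary measure $\mu \coloneqq \mu_{W + \lambda vv^\top}$ weakly recovers $v$, and it does so for \emph{local} reasons — the recovery can be read off from the single-site conditionals $\E_\mu[x_i \mid x_{-i}] = \tanh h_i(x)$ alone, where $h_i(x) \coloneqq \sum_{j \ne i} M_{ij} x_j$ — a feature that $\pref{lem:localpatches}$ shows every locally stationary $\nu$ approximately inherits. To carry this out, first reduce to a locally stationary measure: since $\kappa I \preceq M \preceq (1-\kappa+\lambda)I$, the energy $\tfrac12\angles{x,Mx}$ ranges over an interval of width $O(n)$, so $\log(1/\mu_{\min}) = O(n)$. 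Applying $\pref{thm:main-ls}$ with $\eps = \wt{\Theta}(n^{-4})$ and $\delta = 1/\log n$ gives $T = \tfrac{1}{\delta\eps}\log(1/\mu_{\min}) = \wt{\Theta}(n^5)$, and for $\bt \sim [0,T]$ the measure $\nu \coloneqq P^{\bt}\delta_{x_0}$ is $\eps$-locally stationary with probability $1-o(1)$. It thus suffices to show that every $\eps$-locally stationary $\nu$ satisfies $\E_{\bx\sim\nu}\abs{\angles{\bx,v}} \ge (\kappa\exp(-1/\kappa) - o(1))\sqrt{n}$.

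\textbf{Magnetization self-consistency.} Set $m(x) = \angles{x,v}$ and $m_{-i}(x) = m(x) - v_i x_i$. Expanding $\E_\nu[m^2] = 1 + \sum_i v_i\E_\nu[m_{-i}\E_\nu[x_i \mid x_{-i}]]$ and invoking $\pref{lem:localpatches}$ to replace $\E_\nu[x_i\mid x_{-i}]$ by $\tanh h_i$ (and likewise for a smoothed version of $\abs{m}$), one obtains the self-consistency relations
\[
    \E_\nu[m^2] = \E_\nu\Bigl[m\sum_i v_i\tanh h_i\Bigr] + O(1), \qquad \E_\nu[\abs{m}] = \E_\nu\Bigl[\operatorname{sgn}(m)\sum_i v_i\tanh h_i\Bigr] + o(1);
\]
equivalently, these say that the expected one-step Glauber drifts of the low-sensitivity test functions $m^2$ and $\abs{m}$ vanish up to negligible error under $\nu$, which also follows directly from $\calE(\density,\log\density) \le \eps$ by Cauchy--Schwarz on the bilinear Dirichlet form. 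Since $\abs{\sum_i v_i\tanh h_i} \le \sqrt{n}$, the first relation gives $\E_\nu[m^2] \le \sqrt{n}\,\E_\nu[\abs{m}] + O(1)$, so it suffices to lower bound $\E_\nu[m^2]$ by $(\kappa\exp(-1/\kappa) - o(1))\,n$.

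\textbf{$\tanh$-saturation.} Decompose the field $h_i = \lambda v_i m + b_i$ with noise term $b_i = (Wx)_i - M_{ii}x_i$. Since $\kappa I \preceq W \preceq (1-\kappa)I$, both $W - \tfrac12 I$ and $\mathrm{diag}(W) - \tfrac12 I$ have operator norm $\le \tfrac12-\kappa$, so $\sum_i b_i^2 = \norm{(W - \mathrm{diag}(M))x}^2 = O(n)$. By oddness of $\tanh$ one may rewrite $\operatorname{sgn}(m)\sum_i v_i\tanh h_i = \tfrac{1}{\sqrt n}\sum_i \tanh(c_i + \lambda Z)$, where $Z \coloneqq \abs{m}/\sqrt n \in [0,1]$ and $\sum_i c_i^2 = \sum_i b_i^2 = O(n)$. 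By Markov's inequality only $O(n/(\lambda Z)^2)$ coordinates have $\abs{c_i} \gtrsim \lambda Z$, and on the rest $\tanh(c_i + \lambda Z) \ge 1 - 2e^{-\lambda Z}$; hence pointwise $\operatorname{sgn}(m)\sum_i v_i\tanh h_i \ge \sqrt{n}\bigl(1 - 2e^{-\lambda Z} - O(\lambda^{-2}Z^{-2})\bigr)$.

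\textbf{Closing the loop (the main obstacle).} The bound above is useful only where $Z$ is bounded away from $0$, and the self-consistency relations are degenerate at $m = 0$: by themselves they do not rule out $\nu$ concentrating near $Z = 0$. The point is that, once $\lambda$ is a sufficiently large constant, $m = 0$ is a \emph{repelling} fixed point of the Glauber drift — a second-order computation gives $\E[\Delta m^2 \mid x] \ge \Omega(1/n)$ whenever $\abs{m}$ is small, because $\lambda \cdot \tfrac1n\sum_i \operatorname{sech}^2 b_i > 1$ (the average of $\operatorname{sech}^2$ is bounded below since $\norm{W} \le 1-\kappa$ controls $\sum_i b_i^2$). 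Testing local stationarity against a functional that is highly sensitive to mass near $m = 0$ yet has small per-edge sensitivity — e.g.\ $f = -(m^2/n + c)^{-1}$, for which $\norm{\nabla f}_\infty = O(1/(nc^2))$ — then forces $\Pr_\nu[Z \ge Z^\ast] = \Omega(1)$ for a suitable threshold $Z^\ast = Z^\ast(\kappa,\lambda)$; combining this with the saturation bound, the relations above, and an optimization of the threshold against the noise scale $\norm{W - \mathrm{diag}(M)}$ produces the constant $\kappa\exp(-1/\kappa)$ and yields $\E_\nu[\abs{m}] \ge (\kappa\exp(-1/\kappa) - o(1))\sqrt{n}$. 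I expect this last step — exploiting local stationarity through a \emph{singular} test function to detect the second-order instability at the origin, together with the delicate error analysis near $m = 0$ — to be the technical heart; the reduction and the self-consistency manipulations are comparatively routine.
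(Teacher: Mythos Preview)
Your reduction to an $\eps$-locally stationary $\nu$ with $\eps = \wt{\Theta}(n^{-4})$ is fine and matches the paper, but from that point on your route diverges sharply from the paper's and, as written, has a real gap.

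The paper does \emph{not} analyze single-site Glauber drifts of $m = \langle x,v\rangle$ at all. Instead it passes to the \emph{restricted Gaussian dynamics} (RGD) chain associated to the Hubbard--Stratonovich decomposition $\mu_M = \E_{\bz}\,\mu_{W,\bz}$: one RGD step is $x \mapsto z = (\lambda\langle v,x\rangle + \sqrt{\lambda}\,g)v \mapsto x' \sim \mu_{W,z}$. Because each component $\mu_{W,h}$ satisfies an MLSI with constant $\Omega(1/n)$ (Theorem~\ref{th:bdd-mixture}\ref{item:MLSI-bdd}), Lemma~\ref{lem:ls-dist-rgd} transfers $\eps$-local stationarity for Glauber to $O(\eps n^{2})$-local stationarity for RGD. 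The payoff is that a \emph{single} RGD step moves $|m|$ by a first-order amount: Lemma~\ref{lem:corr-lower-bound-exp-general} (using entropic stability with constant $1/\kappa$ and conservation of variance with constant $\exp(-1/\kappa)$) gives $\E_{\mu_{W,sv}}|\langle x,v\rangle| \gtrsim \min\{s,\sqrt{\kappa\exp(-1/\kappa)\,n}\}$, so the RGD boost (Corollary~\ref{cor:beast-boost}) is large whenever $|m| < \sqrt{\Cvar/\entstab}\cdot\sqrt{n}$. Local stationarity of RGD then forces $\E_\nu|m|$ up to that threshold via Corollary~\ref{cor:bounded-function-stability} applied to a truncated $|m|$ (Lemma~\ref{lem:booster-energy}). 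The constant $\kappa\exp(-1/\kappa)$ is exactly (a lower bound on) $\sqrt{\Cvar/\entstab}$; it is not produced by any optimization over $\tanh$-saturation thresholds.

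Your ``closing the loop'' step is where the proposal breaks down. You correctly identify that near $m=0$ the Glauber drift of $m$ is only second order, and you propose to detect this via a singular test function $f = -(m^2/n + c)^{-1}$. But Corollary~\ref{cor:bounded-function-stability} only gives $|\E_\nu f - \E_{P\nu} f| \le \|f\|_\infty\sqrt{\eps} = \sqrt{\eps}/c$, while the one-step drift of $f$ is at best $O\bigl(1/(n^2 c^2)\bigr)$ (since $\E[\Delta(m^2/n)\mid x] = O(1/n^2)$ and $|f'| \le 1/c^2$); balancing these does not by itself force $\Pr_\nu[|m| \ge Z^\ast\sqrt{n}] = \Omega(1)$ for a \emph{constant} $Z^\ast$, and you do not supply the missing argument. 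Moreover, your analysis never invokes the MLSI, entropic stability, or variance-conservation properties of $\mu_{W,h}$---precisely the ingredients that in the paper's proof both enable the transfer to RGD and produce the constant $\kappa\exp(-1/\kappa)$. Using only $\sum_i b_i^2 = O(n)$ it is unclear how you would recover that specific constant, and your one-line claim that ``an optimization of the threshold against the noise scale\ldots produces the constant $\kappa\exp(-1/\kappa)$'' is unsubstantiated. The paper's RGD detour avoids the second-order difficulty entirely: because one RGD step already gives a first-order boost in $|m|$, no singular test function or delicate analysis near $m=0$ is needed.
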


A natural approach to recover the signal $v$ from the matrix $M$ is the spectral method, which amounts to computing (even approximately) the eigenvector corresponding to the largest eigenvalue for the matrix $M$.
At a high-level, the above theorem demonstrates that Glauber dynamics can \emph{simulate} the spectral method in certain regimes. 
We further expect Glauber to achieve weak recovery when run for $T = n^{1 + o(1)}$ steps, but we leave this open as a direction for future improvement.

\begin{remark}
    The above model of choice captures several commonly considered models of study in the algorithms and complexity of statistical inference, such as the \emph{spiked Wigner model} \cite{BBP05}, and \emph{random/planted 2XOR} (see, e.g., \cite{AOW15} and the references within).
\end{remark}

\begin{remark}\label{rem:spiked-wig-posterior}
    In the Rademacher spiked Wigner model, where $W \sim \GOE(n)$ and $v \sim \left\{\pm \frac{1}{\sqrt{n}} \right\}^n$, the posterior has the form
    \[ \Pr[v|M] \propto \exp\left( - \frac{n}{2} \left\| M - \lambda vv^\top \right\|_F^2 \right) \propto \exp\left( \frac{\lambda n}{2} \cdot v^\top M v \right). \]
    The above is an Ising model, and suggests Glauber dynamics as a natural algorithm for weak recover. We note that the Ising model we shall analyze will be a higher temperature version of the above, that is, a distribution with density proportional to $\exp\left( \frac{\beta n}{2} \cdot v^\top M v \right)$ for some $\beta < \lambda$ (as opposed to the ``correct'' value $\lambda$). Interestingly, such recovery guarantees were not previously known, even if one \emph{information theoretically} samples from the higher temperature Ising model.
\end{remark}

\parhead{Stochastic block model.}
Another case of interest is one where the Ising model $M$ arises from a stochastic block model.  
To describe this result, we first begin by defining the two-community stochastic block model.
\begin{definition}[$2$-community stochastic block model]
    Let $d, \lambda \in \R$ be fixed parameters such that $\lambda^2 \le d$. 
    The distribution $\SBM(n, d, \lambda)$ is defined over pairs $(\sigma, G) \in \{\pm1\}^n \times \{0, 1\}^{n \times n}$ generated as follows.
    
    Let $\bsigma \in \{\pm1\}^{n}$ be a \emph{signal} vector drawn uniformly at random (i.e. the prior is uniform). Given $\bsigma$, we draw a random graph $\bG$ by including an edge between $u,v \in [n]$ independently with probability $\frac{d + \lambda \sqrt{d}}{n}$ if $\bsigma(u) = \bsigma(v)$, and with probability $\frac{d - \lambda\sqrt{d}}{n}$ otherwise.
\end{definition}
In a general stochastic block model, the signal vector $\bsigma$ can be over a larger finite alphabet $[q]$, and the probability of including an edge between $u,v \in [n]$ is an arbitrary function of $\sigma(u), \sigma(v)$.
In this work, we will use the term stochastic block model (SBM) to refer exclusively to the special case of two communities as defined above.

\begin{remark}
    The $2$-community stochastic block model can be viewed as a special case of a spiked matrix model where $M$ is a highly sparse matrix.
    Due to the sparsity of $M$, this spiked matrix model falls outside the scope of \Cref{th:correlation-gain}, as the ``noise'' part fails to satisfy the spectral bound. 
\end{remark}

The weak recovery problem for stochastic block model is that of recovering a labelling $\wh{\sigma}$ given the graph $\bG$ such that $\wh{\sigma}$ has non-trivial correlation with the true signal $\bsigma$.  More precisely, an algorithm for weak recovery is required to find a $\wh{\sigma}$ such that $\frac{1}{n} |\angles{\wh{\sigma}, \bsigma}| \geq \Omega(1)$.

Starting with the work of Decelle, Krzakala, Moore, \& Zdeborova \cite{DKMZ11} that posited broad conjectures about these models, an extensive body of work has emerged over the past decade.  For the case of $2$ communities, \cite{DKMZ11} posited that weak-recovery is possible if and only if the signal strength $\lambda^2 > 1$.  This coincides with the  \emph{Kesten--Stigum threshold}, a threshold for broadcast processes on trees studied in the works of Kesten and Stigum \cite{KS66, KS67}.
The works of Mossel, Neeman, \& Sly \cite{MNS18} and Massouli{\'e} \cite{Mas14} confirmed the algorithmic side, namely that weak recovery can be solved efficiently above the KS threshold with a spectral algorithm, while \cite{MNS15} showed impossibility below the threshold.
We refer the reader to the survey of Abbe \cite{Abb17} for a detailed treatment of the literature on community detection.

We show that Glauber dynamics succeeds at weak recovery when the signal strength is a constant factor above the Kesten--Stigum threshold.
\begin{restatable}{theorem}{mainsbm}
    \label{th:sbm-recovery}
    There exist constants $\lambda_0, \beta, c > 0$ such that for all $\lambda$ satisfying $\abs{\lambda} \ge \lambda_0$, for $(\bsigma, \bG) \sim \SBM(n, d, \lambda)$, with probability $1-o(1)$ over the randomness of $(\bsigma, \bG)$, the following holds.

    Let $P$ denote the kernel of Glauber dynamics with stationary distribution $\mu_{\frac{\beta}{\sqrt{d}}\left(A_{\bG} - \frac{d}{n}\bone\bone^\top\right)}$ and $x_0$ an arbitrary point on $\{\pm 1\}^n$.
    For $T = \wt{\Theta}\parens*{n^{5+o_d(1)}}$, and for $\bt\sim[0,T]$, with probability $1-o(1)$, we have:
    \[
        \E_{\bx\sim P^{\bt} \delta_{x_0}} \bracks*{ |\angles*{\bx,\bsigma}| } > c n.
    \]
\end{restatable}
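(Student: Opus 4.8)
\medskip

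\noindent\emph{Proof plan.}
Following the meta-principle, the idea is to show that sampling from the true posterior $\mu_{\frac{\beta}{\sqrt d}M}$, with $M \coloneqq A_{\bG}-\frac{d}{n}\bone\bone^{\top}$, achieves constant correlation with $\bsigma$ for \emph{local} reasons, and then transfer this to locally stationary measures. First, by \pref{thm:main-ls} --- using $\log(1/\pi_{\min}) = \wt{O}(n)$ for this Ising model --- for $T=\wt{\Theta}(n^{5+o_d(1)})$ and $\bt\sim[0,T]$ the law $\nu_{\bt}=P^{\bt}\delta_{x_0}$ is $\eps$-locally stationary with $\eps=1/\poly(n)\ll n^{-2}$ with probability $1-o(1)$; moreover all but an $o(1)$ fraction of times in $[0,T]$ are $\eps$-locally stationary. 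It thus suffices to control $\rho_t \coloneqq \frac{1}{n^2}\E_{\nu_t}\angles{x,\bsigma}^2$ along the trajectory, and to conclude it is enough to show $\rho_{\bt}=\Omega(1)$ for a $1-o(1)$ fraction of $\bt$, since $\angles{x,\bsigma}^2\le n^2$ gives $\E_{\nu_{\bt}}\abs{\angles{x,\bsigma}}\ge \frac1n\E_{\nu_{\bt}}\angles{x,\bsigma}^2=\rho_{\bt}\,n\ge cn$.

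\medskip

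\noindent\emph{A static dichotomy for locally stationary measures.}
The entropic Dirichlet form of Glauber dynamics decomposes over sites, $\calE(\density,\log\density)=\frac1n\sum_v\E_{x_{-v}\sim\nu}\KL{\nu(x_v\mid x_{-v})}{\mu(x_v\mid x_{-v})}$, so (by Pinsker) $\eps$-local stationarity forces $\E_{\nu}[x_v\mid x_{-v}]\approx\tanh((\tfrac{\beta}{\sqrt d}Mx)_v)$ for all but an $O(\sqrt{\eps})$ averaged fraction of sites $v$. This says $\nu$ is an \emph{approximate fixed point of a Glauber sweep}; combined with the local-weak limit of $\SBM(n,d,\lambda)$ --- a $\mathrm{Poisson}(d)$ Galton--Watson tree carrying the $\tfrac12(1\mp\lambda/\sqrt d)$-broadcast of $\bsigma$ --- and \pref{lem:localpatches} applied to constant-radius balls, it forces the overlap statistic to be an approximate fixed point of the associated ``state evolution'' recursion. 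For the sign-invariant quantity $\rho$ this recursion is $\rho\mapsto G(\rho)$ with $G(0)=0$ and $G'(0)=(\beta\lambda\,\E_{Z\sim N(0,1)}[1-\tanh^2(\beta Z)])^{2}$, which exceeds $1$ with room to spare once $\lambda$ is a large enough constant --- this is precisely the Kesten--Stigum condition. Since $G$ is increasing with $0$ unstable and a unique positive stable fixed point $\rho^{\star}=\Omega_{\lambda}(1)$, an approximate fixed point of $G$ lies within $O(\sqrt{\eps})$ of $0$ or within $o(1)$ of $\rho^{\star}$; hence $\rho(\nu)\in[0,O(\sqrt{\eps})]\cup[\rho^{\star}-o(1),1]$ for every $\eps$-locally stationary $\nu$. (One also checks the degenerate alternatives are not $\eps$-locally stationary: a macroscopically frustrated configuration costs $\Omega(1)$ in the Dirichlet form because $\bG$ is an expander, and a hard pinning of the overlap costs $\Omega(1)$ because a forced spin disagrees with its effective field on average.)

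\medskip

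\noindent\emph{Drift into the informative well.}
For an arbitrary start $x_0$ the initial $\rho_0$ may be anything, but after one Glauber sweep the $\Theta(\sqrt n)$ fluctuations of $\angles{x,\bsigma}$ push $\rho$ up to $\gtrsim 1/n \gg O(\sqrt{\eps})$; thereafter the one-sweep drift $G(\rho)-\rho$ is strictly positive and bounded away from $0$ on $(0,\rho^{\star})$, so $\rho$ crosses the ``gap'' $(O(\sqrt{\eps}),\rho^{\star})$ in $O(n)$ steps and enters $[\rho^{\star}-o(1),1]$. It does not leave: escaping a stable well downward would require a coherent fluctuation against the drift, which by standard metastability estimates takes $\exp(\Omega(n))$ time and so does not occur within $T=\poly(n)$. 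Now $\rho_t$ is $O(1/n)$-Lipschitz in $t$, while at every $\eps$-locally-stationary time it avoids the gap by the dichotomy; since these times are all but an $o(1)$ fraction of $[0,T]$, the total time $\rho_t$ spends in the gap is $O(n)=o(T)$, and $\rho_t$ cannot re-enter $[0,O(\sqrt{\eps})]$ once it has left (again by metastability / positivity of the drift). Therefore $\rho_{\bt}\ge \rho^{\star}-o(1)$ for a $1-o(1)$ fraction of $\bt$, which completes the argument.

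\medskip

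\noindent\emph{Main obstacle.}
The technical heart is making the state-evolution recursion rigorous in the \emph{sparse} regime: bounding the error between the genuine one-sweep change of $\angles{x,\bsigma}$ under $\nu$ and the idealized $G$. This requires (a) discarding the $o(n)$ vertices with non-tree-like or atypically large neighborhoods and using Feige--Ofek-type regularization/concentration so that $M$ behaves like a rank-one spike $\approx \lambda\sqrt d\cdot\tfrac{\bsigma\bsigma^{\top}}{n}$ plus an $O(\sqrt d)$-bounded remainder, and using expansion of $\bG$ to rule out degenerate locally-balanced configurations (so the effective fields cannot collapse to zero); (b) checking that the magnetization $\angles{\bone,x}$ --- which the centering of $M$ is designed to cancel --- stays negligible; and (c) carrying the exact $x\mapsto -x$ symmetry (which is why one tracks $\angles{x,\bsigma}^2$ rather than $\angles{x,\bsigma}$, and why the drift argument must also rule out excursions to the $-\bsigma$ well) and the $\Theta(\sqrt n)$ fluctuations through every step. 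The reduction via \pref{thm:main-ls} and the Kesten--Stigum / broadcast facts are, by contrast, routine; the appearance of $T=\poly(n)$ rather than $n^{1+o(1)}$ is an artifact of the width of the ``trivial'' branch $[0,O(\sqrt{\eps})]$ in the dichotomy, which we expect can be removed with a more careful analysis.
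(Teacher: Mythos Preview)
Your approach is genuinely different from the paper's, and as written it has a real gap. The paper never touches state evolution, tree recursions, or local weak limits. Instead, it transfers local stationarity from Glauber dynamics to \emph{restricted Gaussian dynamics} (RGD): write $M = \ol{A}_{\bG} + \lambda\sqrt{d}\cdot vv^{\top}$ with $v=\bsigma/\sqrt{n}$, observe that the Hubbard--Stratonovich decomposition $\mu_M = \E_{\bz}\mu_{\ol{A}_{\bG},\bz}$ has mixture components which satisfy an MLSI with constant $n^{-1-o_d(1)}$ (this is the content of the companion paper cited as \cite{LMRW}), and then apply \pref{lem:ls-dist-rgd} to conclude that $\nu_{\bt}$ is $o(n^{-2})$-locally stationary for RGD. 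The payoff is that one step of RGD from a point with $|\angles{x,v}|=r$ lands at a sample from $\mu_{\ol{A}_{\bG},(\lambda r+\sqrt{\lambda}\bg)v}$, and \pref{lem:corr-lower-bound-exp-general} (via entropic stability and conservation of variance, again from \cite{LMRW}) shows this has correlation $\gtrsim \min\{|\lambda r+\sqrt{\lambda}\bg|,\sqrt{n}\}$. The Gaussian noise $\sqrt{\lambda}\bg$ is the whole point: even at $r=0$ the post-step correlation is $\gtrsim\sqrt{\lambda}$, so there is \emph{no} trivial fixed point to escape, and \pref{lem:booster-energy} closes the argument by the bounded-function stability of \pref{cor:bounded-function-stability}.

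Your plan, by contrast, reduces to a one-dimensional recursion $\rho\mapsto G(\rho)$ with $G(0)=0$, and then must separately argue (i) that Glauber-locally-stationary $\nu$ really are near fixed points of $G$, and (ii) that the dynamics escapes the unstable fixed point at $0$. Step (i) is precisely the open problem the paper flags in its ``Cavity method'' paragraph: \pref{lem:localpatches} gives you control of conditional marginals on a fixed window, but passing from that to a scalar recursion on the global overlap requires the approximate-independence / Gibbs-ratio properties that are not known for locally stationary measures. Your ``Main obstacle'' paragraph correctly identifies this as the technical heart, but does not supply it --- and neither Feige--Ofek regularization nor expansion of $\bG$ gives you the needed decorrelation of spins at distance $\Theta(1)$ under an \emph{arbitrary} locally stationary $\nu$. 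Step (ii) is also delicate: since $G(\rho)-\rho\sim(G'(0)-1)\rho$ near $0$, the drift is not bounded away from zero, and your appeal to ``standard metastability estimates'' to prevent re-entry into $[0,O(\sqrt{\eps})]$ presupposes exactly the kind of structure on $\nu_t$ you are trying to establish. The paper's RGD route sidesteps both issues entirely; if you want to salvage your approach, the missing ingredient is a proof that locally stationary measures on sparse SBM satisfy cavity-type approximate independence, which to our knowledge is open.
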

\begin{remark}\label{rem:sbm-posterior}
    Similar to \Cref{rem:spiked-wig-posterior}, the posterior distribution $\bsigma|\bG$ to solve the recovery problem in the stochastic block model is an Ising model (see, e.g., \cite[Eq. (7)]{Moo17}):
    \begin{align*}
        \Pr[\bsigma|\bG] &\propto \prod_{ij\in E(\bG)} \parens*{\frac{d+\lambda\sqrt{d}}{d-\lambda\sqrt{d}}}^{(1+\bsigma_i\bsigma_j)/2} \cdot \prod_{ij\notin E(\bG)} \parens*{ \frac{1-\frac{d+\lambda\sqrt{d}}{n}}{1-\frac{d-\lambda\sqrt{d}}{n}} }^{(1+\bsigma_i\bsigma_j)/2} 
    \end{align*}
    For large $d$, the Ising model that we analyze is approximately equal to a higher temperature version of the above true posterior.
\end{remark}

Although spectral algorithms for weak recovery were already known in all the cases listed above, understanding the power of Glauber dynamics is interesting in its own right.  
It is arguably a more natural algorithm than spectral methods in the context of a Bayesian estimation problem like stochastic block models.  
In particular, Glauber dynamics remains locally consistent with the underlying probabilistic model at every vertex.
On the other hand, a spectral algorithm that computes the top eigenvector maximizes a global objective, while crudely approximating the local features of the probabilistic model.

Finally, our analysis for spiked models establishes a direct correspondence between locally stationary measures for Glauber dynamics and fixed points of a Markov chain over the one-dimensional real line $\R$ (related to the {\it restricted Gaussian dynamics} Markov chain).  
This correspondence may pave the way for a much tighter analysis to establish that Glauber dynamics achieves information theoretically optimal recovery in some of these models.
To elucidate further on this correspondence, we will give a brief technical overview here.

\subsection{Technical overview}
In \prettyref{sec:lsm-props}, we derive a few basic properties of locally-stationary measures.  This is followed by the result on independent sets presented in  \cref{sec:glauber-indepset} as a warmup.

In this technical overview, we will focus on the inference problem in spiked matrix models.
For a matrix $M \in \R^{n \times n}$ and a vector $h \in \R^{n}$, we will use $\mu_{M,h}$ to denote the distribution over $\{\pm 1\}^n$ defined as
$$ \mu_{M,h}(x) \propto \exp(\tfrac{1}{2}\angles{x, M x} +  \angles{h, x}),$$
and $\mu_{M}$ to denote the distribution $\mu_{M, 0}$.

Consider the stationary measure $\mu_{M}$ for a spiked matrix $M = \lambda vv^\top + W$.
We outline the proofs of \Cref{th:correlation-gain,th:sbm-recovery} here, which consist of two parts.
\begin{itemize}
    \item First, we show that locally stationary distributions with respect to Glauber dynamics over $\{\pm 1\}^n$ are also locally stationary with respect to the \emph{restricted Gaussian dynamics} Markov chain.
    \item Next, we show that samples from locally stationary distributions for RGD achieve weak recovery.
\end{itemize}
Let us first recall the definitions of Glauber dynamics and Restricted Gaussian dynamics for Ising models. 
\begin{definition}[Glauber dynamics]
    Glauber dynamics with respect to a distribution $\pi$ over $\{\pm 1\}^n$ is a Markov chain on $\{\pm 1\}^n$, where a transition from $x$ is given by the following:
    \begin{itemize}
        \item Sample index $i$ uniformly from $[n]$.
        \item Transition to $x^{\oplus i}$ with probability $\frac{\pi(x^{\oplus i})}{\pi(x) + \pi(x^{\oplus i})}$, and stay at $x$ otherwise. Here, $x^{\oplus i}$ denotes $x$ with the $i$th bit flipped.
    \end{itemize}
\end{definition}
\begin{definition}[{Restricted Gaussian dynamics; cf.\ \cite{LST21,STL20,CE22}}]
    Consider the joint random variable $(\bx,\bz)$ where $\bx\sim\mu_M$, and $\bz|\bx \coloneqq (\lambda \angles*{v,\bx} + \sqrt{\lambda}\bg) \cdot v$ for $\bg\sim\calN(0,1)$.
     \emph{Restricted Gaussian dynamics} (RGD) is a Markov chain on $\{\pm1\}^n$ where for any $x$, the transition to $\bx'$ is sampled as follows:
    \begin{itemize}
        \item Sample $\bz|x$. 
        \item Sample $\bx'|\bz$.
    \end{itemize}
\end{definition}
\begin{remark}
    By definition, the above Markov chains are ergodic and reversible with respect to $\pi$ and $\mu_M$ respectively, and so  asymptotically converge to them as their stationary distributions.
\end{remark}
\begin{remark}
    We should think of $\bz|\bx$ as being a noisy surrogate for how well $\bx$ correlates with the hidden direction $v$.
\end{remark}
Informally, we prove the following correspondence between locally stationary distributions for Glauber dynamics and locally stationary distributions for RGD.
In fact, this correspondence is a consequence of a more generic statement; refer to \Cref{lem:ls-dist-rgd} for details.
\begin{lemma}[{Informal version of \Cref{lem:ls-dist-rgd}}]
    Let $\nu$ be a distribution over $\{\pm1\}^n$ that is $\eps$-locally stationary under Glauber dynamics for $\mu_M$.
    Suppose for every $z\in\R$, Glauber dynamics for the distribution of $\bx|z$ is ``well-expanding'', and $\log\frac{1}{\mu_M(x)}\le\poly(n)$ for all $x\in\{\pm1\}^n$, then
    $\nu$ is $\eps\cdot\poly(n)$-locally stationary under restricted Gaussian dynamics.
\end{lemma}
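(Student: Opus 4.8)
The plan is to decompose both the RGD and the Glauber Dirichlet forms into single-fiber pieces and compare them fiber by fiber. Write $\density = \nu/\mu_M$; since $\calE_{\text{Glauber}}(\density,\log\density)\le\eps<\infty$ and Glauber dynamics is irreducible, $\density>0$ everywhere, so all quantities below are finite. With $\mathrm{Cov}_\rho(a,b)\coloneqq\E_\rho[ab]-\E_\rho[a]\E_\rho[b]$, the first step is to record the identities
\[
\calE_{\text{Glauber}}(\density,\log\density)=\frac{c}{n}\sum_{i=1}^{n}\E_{\bx_{-i}\sim\mu_M}\bracks*{\mathrm{Cov}_{x_i\mid\bx_{-i}}(\density,\log\density)},\qquad \calE_{\text{RGD}}(\density,\log\density)=c\cdot\E_{\bz\sim p}\bracks*{\mathrm{Cov}_{\mu_M(\cdot\mid\bz)}(\density,\log\density)},
\]
with the same constant $c$ (depending only on the normalization convention in \Cref{def:localstationary}, hence irrelevant to the comparison), where $p$ is the law of $\bz$ and $\mathrm{Cov}_{x_i\mid\bx_{-i}}$ is under $\mu_M(x_i\mid\bx_{-i})$. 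The first identity holds because the coordinate-$i$ heat-bath move is an $L^2(\mu_M)$-orthogonal projection; the second because under the defining coupling the current state $\bx$ and the next state $\bx'$ are conditionally i.i.d.\ $\mu_M(\cdot\mid\bz)$ given $\bz$, so the RGD move, conditioned on the drawn value of $\bz$, simply resamples the whole configuration from $\mu_M(\cdot\mid\bz)$.

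Next, apply well-expansion \emph{fiberwise}: that Glauber dynamics for $\mu_M(\cdot\mid z)$ is well-expanding should give a modified-log-Sobolev-type bound saying the global fiber functional is dominated by the sum of its single-site pieces, $\mathrm{Cov}_{\mu_M(\cdot\mid z)}(\density,\log\density)\le\frac{C_z}{n}\sum_i\E_{\bx_{-i}\sim\mu_M(\cdot\mid z)}\bracks*{\mathrm{Cov}_{x_i\mid\bx_{-i},z}(\density,\log\density)}$, with $\sup_z C_z\le\poly(n)$ by hypothesis. Substituting into the identity for $\calE_{\text{RGD}}$, pulling $\sup_z C_z$ out of the expectation over $\bz$, and folding the outer $\E_{\bz\sim p}$ together with the inner $\E_{\bx_{-i}\sim\mu_M(\cdot\mid\bz)}$ into one expectation over the $(\bz,\bx_{-i})$-marginal of the joint law of $(\bx,\bz)$ (then conditioning on $\bx_{-i}$, which is distributed as under $\mu_M$), reduces everything to the fiberwise bound
\[
\E_{\bz\mid x_{-i}}\bracks*{\mathrm{Cov}_{x_i\mid x_{-i},\bz}(\density,\log\density)}\ \le\ \mathrm{Cov}_{x_i\mid x_{-i}}(\density,\log\density)\qquad\text{for every }i,\ x_{-i}.
\]
Granting this, summing over $i$ and averaging over $\bx_{-i}\sim\mu_M$ reassembles $\calE_{\text{Glauber}}(\density,\log\density)$ on the right, so $\calE_{\text{RGD}}(\density,\log\density)\le(\sup_z C_z)\,\calE_{\text{Glauber}}(\density,\log\density)\le\poly(n)\cdot\eps$.

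The fiberwise bound is elementary: on $\{x_{-i}\}\times\{\pm1\}$ one has $\mathrm{Cov}_{x_i\mid x_{-i},z}(\density,\log\density)=q_i(x_{-i},z)\parens*{1-q_i(x_{-i},z)}\,K_i(x_{-i})$, where $q_i(x_{-i},z)=\mu_M(x_i=+1\mid x_{-i},z)$ and $K_i(x_{-i})=\parens*{\density^{+}-\density^{-}}\parens*{\log\density^{+}-\log\density^{-}}\ge0$ depends on the configuration alone, not on $z$. Since $\E_{\bz\mid x_{-i}}[q_i(x_{-i},\bz)]=\mu_M(x_i=+1\mid x_{-i})$ by the tower rule and $t\mapsto t(1-t)$ is concave, Jensen gives $\E_{\bz\mid x_{-i}}[q_i(1-q_i)]\le\mu_M(x_i=+1\mid x_{-i})\parens*{1-\mu_M(x_i=+1\mid x_{-i})}$ --- i.e.\ conditioning on the extra information $\bz$ only sharpens the single-site law on average --- and multiplying by the $z$-free factor $K_i(x_{-i})$ finishes it.

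The identities and the expectation bookkeeping are standard manipulations of heat-bath Dirichlet forms, and the concavity step is immediate; the substantive point --- and the step I expect to be the main obstacle --- is verifying that the fiber measures $\mu_M(\cdot\mid z)=\mu_{W,zv}$ are \emph{uniformly} well-expanding, $\sup_z C_z\le\poly(n)$. Here $\mathrm{Cov}(\density,\log\density)$ is the symmetrized divergence $\KL{\nu}{\mu}+\KL{\mu}{\nu}$, not merely $\KL{\nu}{\mu}$, so a modified-log-Sobolev inequality on each fiber is not by itself enough: the reverse direction must be controlled too. This is exactly where the spectral assumption $\kappa\psdle W\psdle 1-\kappa$ enters (high temperature, so the tilted Ising models stay well-expanding for every external field), together with the boundedness hypothesis $\log\tfrac{1}{\mu_M(x)}\le\poly(n)$, which in addition lets one discard the atypically large values of $\bz$, which carry only super-polynomially small mass under $p$.
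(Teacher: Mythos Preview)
Your proposal is correct and mirrors the paper's proof: both express $\calE_{\RGD}$ as $\E_{\bz}[\text{symmetric KL on the fiber }\pi_{\bz}]$, use the fiberwise MLSI together with a $\tau$-dependent $\KL$-to-$\SKL{}{}$ comparison (the paper's \Cref{lem:symKL-to-KL}, which is exactly the step you flag in your last paragraph) to pass to the Glauber Dirichlet form on each fiber, and then fold the fiber average back into $\calE_{\text{Glauber}}$ via a concavity argument---your Jensen on $t\mapsto t(1-t)$ after conditioning on $x_{-i}$ is the paper's concavity of $(a,b)\mapsto\frac{ab}{a+b}$ on edge weights, rewritten in the conditional parametrization $a+b=\pi_{\bz}(x_{-i})$, $t=a/(a+b)$. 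No truncation of large $\bz$ is needed, since the hypotheses (MLSI of $\pi_{\bz}$ and the bound on $\log\density$) are uniform in $z$.
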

To conclude that $\nu$ is locally stationary under restricted Gaussian dynamics, it suffices to verify the structural properties of $\mu_M$.
The lower bound on the minimum probability follows from upper and lower bounds on the values that the Hamiltonian can achieve.
To show that Glauber dynamics for $\bx|\bz$ is well-expanding, we must investigate the structure of this distribution further.
A simple calculation reveals that the distribution of $\bx|\bz$ is, in fact, the Ising model $\mu_{W,\bz}$.
In the setting of \Cref{th:correlation-gain}, where the spectral diameter of $W$ is bounded by $1$, prior works \cite{BB19,EKZ22,AJKPV21,CE22} prove that $\mu_{W,\bz}$ always satisfies a ``modified log-Sobolev inequality'' (our relevant notion of ``well-expanding'').
In the setting of \Cref{th:sbm-recovery}, where $W$ is a centered stochastic block model, a similar result is proved in a companion paper \cite{LMRW}.
\begin{remark}
    This decomposition of $\mu_M$ into a mixture of other Ising models is well-known in the literature by the name \emph{Hubbard--Stratonovich transformation} \cite{HS59}; see also \cite{KLR22}.
\end{remark}

In summary, we showed if $\nu$ is locally stationary with respect to Glauber dynamics, then it is also locally stationary for the RGD chain
\[
    \bx \to \bz|\bx \to \bx'|\bz.
\]

Thus, it suffices to prove that $\E_{\bx\sim\nu}|\angles*{\bx,v}|$ is bounded away from $0$ for any distribution $\nu$ that is locally stationary for RGD.
The two ingredients that go into proving this are:
\begin{itemize}
    \item A generic principle that says: if $\nu$ is locally stationary for a Markov chain $P$, then for any bounded function $f$, $\left|\E_{\bx\sim\nu} f(\bx)-\E_{\bx\sim P\nu} f(\bx)\right|$ is small (\Cref{cor:bounded-function-stability}).
    \item If the correlation of a distribution $\nu$ is too close to $0$, then a single step of RGD causes a significant boost in correlation, which means $\nu$ cannot be locally stationary.
    In particular, for $f(x) = |\angles*{x, v}|$, if $\E_{\bx\sim\nu} f(\bx)$ is too close to $0$, then $\E_{\bx\sim P\nu} f(\bx) - \E_{\bx\sim\nu} f(\bx)$ is nontrivially large, which means any locally stationary distribution $\nu$ must achieve large correlation.
\end{itemize}
See \pref{sec:weak-recovery} for the details of this argument.

\subsection{Related work}
Motivated by statistical physics,
the phenomenon of \emph{metastability} of random walks has been extensively studied.
We refer the reader to the monograph by Bovier \& Hollander \cite{BOV2016} for related literature. 
The notion of metastability in \cite{BOV2016} appears to be a slightly stricter notion than local stability, and thus does not generically hold for all reversible Markov chains.

In the context of sampling distributions over a continuous domain, Balasubramanian, Chewi, Erdogdu, Salim \& Zhang \cite{BCESZ22} showed that the Langevin Monte Carlo algorithm always outputs a sample from a distribution whose relative Fisher information is small.  This is the continuous sampling analog of convergence of gradient descent to approximate first-order stationary points. 
Building on these ideas, Cheng, Wang, Zhang \& Zhu \cite{CWZZ24} study the notion of \emph{conditional mixing} for Langevin and Glauber dynamics and apply it to efficiently sample from Gaussian mixtures.

Our analysis of Glauber dynamics borrows ideas from a recent line of works on sampling from Ising models.
Glauber dynamics for an Ising model defined by a matrix $M$ was shown to mix quickly if eigenvalues of $M$ lie within an interval of length $1$ \cite{AJKPV21,EKZ22}. This is sharp, as evidenced by the Curie--Weiss model $M = \frac{\beta}{n} \bone\bone^{\top}$. Stronger evidence for hardness of sampling beyond this spectral criterion was recently provided by Kunisky \cite{Kun23}, based on a reduction to a certain statistical hypothesis testing problem.
Koehler, Lee \& Risteski \cite{KLR22} devised more sophisticated algorithms based on simulated tempering and variational inference to sample from Ising models when they have constantly many eigenvalues outside an interval of length $1$.

Besides the question of fast mixing and metastability, the problem of how well MCMC-based algorithms perform for optimization and inference tasks was recently studied in several works. Chen, Mossel \& Zadik \cite{CMZ23} proved that when initialized at the empty set, natural Metropolis chains on cliques fail to find cliques of sublinear size in polynomial time, even if such a clique is planted inside the Erd\H{o}s--R\'{e}nyi random graph $\ER(n,1/2)$. This is despite there being an abundance of algorithms which can recover a planted clique of size down to $O\parens*{\sqrt{n}}$.
Nevertheless, MCMC-based algorithms were redeemed in a more recent work of Gheissari, Jagannath \& Xu \cite{GJX23} using a more carefully designed chain and initialization.
In a recent work, Sellke \cite{Sel23} proved that low-temperature Langevin dynamics achieves the conjectured computational threshold for optimizing pure spherical spin glass models.

\subsection{Open problems}
We conclude with several open directions, which we believe may be amenable to the framework of locally stationary distributions.

\parhead{Bayesian inference via MCMC.}
First, there is the direction of pushing our results further in the setting of SBM.
To set the scene, let $\pi(x) \propto \exp(H(x))$ be the true posterior for SBM, where $H(x)$ is the SBM Hamiltonian (see \Cref{rem:sbm-posterior} for an explicit formula).
It is well known that optimal recovery is achieved information theoretically by sampling from $\pi$ (see, e.g., \cite[Section 4]{Moo17}).
However, it takes exponential time to mix to $\pi$ from a worst-case initialization.
On the other hand, in \Cref{th:sbm-recovery}, we achieve weak recovery by running Glauber dynamics on the density $\pi_{\beta}(x) \propto \exp(\beta H(x))$ for some (constant) $\beta$ strictly smaller than 1.
It is natural to investigate whether a sampling algorithm based on \emph{simulated annealing}, i.e. running Glauber dynamics by varying the temperature over time, can succeed at sampling from $\pi$.
Our main result can be viewed as a modest step in this direction, as we show that running the chain for $\poly(n)$ steps at a mismatched temperature already gives a warm start for correlation.
\begin{problem}[Optimal recovery for stochastic block model]
    Can an instance of simulated annealing sample from $\pi$?
\end{problem}

\parhead{Computationally optimal inference.}
The $k$-community stochastic block model is known to undergo an \emph{information-computation gap} when $k \ge 5$ (see, e.g., \cite{AS16}).
Specifically, for every $k\ge 5$, there exists a choice of degree $d$ and signal-to-noise ratio $\lambda$ for which weak recovery is information-theoretically possible, but likely impossible for efficient algorithms \cite{HS17}.
This gap admits the construction of SBM instances where weak recovery is tractable, but information-theoretically optimal recovery is intractable to efficient algorithms.
\begin{example}
    Consider a $10$-community block model obtained by taking two disjoint $5$-community block model graphs, and planting a sparse bisection between them.
    The planted bisection should be sparse enough so it is clearly detectable to efficient algorithms.
    However, the parameters $d$ and $\lambda$ for the $5$-community models are chosen to be in the intractable regime.
    An information-theoretically optimal algorithm achieves weak recovery within each $5$-community model.
    At the same time, algorithms for the $2$-community block model achieve weak recovery in the $10$-community model, since they can find the planted bisection, and correctly classify vertices as belonging to either $\{1,2,3,4,5\}$ or $\{6,7,8,9,10\}$.
\end{example}
In settings such as the above, information-theoretically optimal inference is hard, but weak recovery is still tractable.
This motivates the study of \emph{computationally optimal inference algorithms}: algorithms that achieve the best guarantees possible in polynomial time.
For such problems, the Glauber dynamics chain must necessarily fail to mix rapidly to the posterior distribution, but perhaps the locally stationary distribution it samples from can achieve the computationally optimal recovery guarantees?
\begin{problem}[Computationally optimal inference]
    Is (annealed) Glauber dynamics a computationally optimal algorithm for the $k$-community SBM and, more generally, for random CSPs with planted solutions?
\end{problem}

\parhead{Metastable states.}
Local stationarity is a generic property of any time-reversible Markov chain, so a priori there is no reason to expect that a locally stationary distribution $\nu$ has any nice properties.
For example, if we run the Markov chain for $T$ steps, \Cref{thm:main-ls} guarantees an $\eps$-locally stationary distribution $\nu$ where $\eps = O(1/T)$, and the simple random walk on the $n$-vertex cycle graph demonstrates that this is tight if $T = o(n^2)$.
This suggests the following natural questions. Suppose the stationary distribution $\pi$ is a  a Gibbs distribution on $\{\pm 1\}^n$. 
Under what additional structural assumptions on $\pi$ can we both obtain $\eps = o(1/T)$ for sufficiently large $T = \poly(n)$ and endow $\nu$ with a physical or geometric interpretation?  

The notion of metastable states for Gibbs distributions in statistical physics \cite{BOV2016} suggests a conceptual path forward towards this goal. 
In particular, one might hope to show that a locally stationary distribution is close to a metastable state, i.e. a conditional Gibbs distribution restricted to a metastable set of configurations. 
\begin{problem}[Metastability]
    Suppose $\pi$ is a Gibbs distribution which has a metastable subset $S$ with exponentially small conductance.
    Let $\nu$ be the locally-stationary distribution after running Glauber (or Langevin) dynamics for $\poly(n)$ steps with uniform initialization in $S$. 
    Is $\nu$ close to the conditional Gibbs distribution $\pi_S$, e.g., $\KL{\nu}{\pi_S} = o(1)$? 
\end{problem}
For a concrete setting, suppose $\pi$ is a spherical spin glass in the \emph{shattering} regime \cite{EAMS23}. Can one show that in $\poly(n)$ time, Langevin dynamics with uniform initialization remains stuck in the clusters identified there?

\parhead{Cavity method.}
The \emph{cavity method}, and the related \emph{replica method}, originated in physics to predict the properties of various Gibbs distributions.
Some striking achievements of this heuristic in producing accurate predictions are the Parisi formula \cite{Par80,Tal06}, and the $k$-SAT satisfiability threshold \cite{MPZ02,MMZ06,DSS15}.
It was also employed in the work of Decelle, Krazakala, Moore, \& Zdeborov{\'a} \cite{DKMZ11} to conjecture the Kesten--Stigum threshold as the computational threshold for SBM.

Of particular interest to us are the works of Coja-Oghlan, Krzakala, Perkins \& Zdeborov{\'a} \cite{COKPZ17,COP19b,COP19}, that characterize the recovery rate that the optimal estimator, namely sampling from the Gibbs distribution, achieves for various planted constraint satisfaction problems.
Their proofs use fairly minimal properties of the Gibbs distribution.
More concretely, for a graph $G$ and an assignment $x$, let $H_G(x)$ be a Hamiltonian, and let $\pi_G$ be the corresponding Gibbs distribution.
Their proofs rely on the following properties satisfied by Gibbs distributions.
\begin{itemize}
    \item (Gibbs ratios) For any graph $G$ and any vertex $v$:
    \[
        \frac{\pi_{G}(x)}{\pi_{G \setminus v}(x)} \propto \exp(H_G(x) - H_{G\setminus v}(x)),
    \]
    \item (Approximate independence) For random $\bG$ and $v$, the marginals of the neighbors of $v$ are approximately independent in $\pi_{\bG\setminus v}$.
\end{itemize}

If one can show that a family of locally stationary distributions also satisfy the Gibbs ratios up to a multiplicative error and approximate independence on a random graph $\bG$, then one could hope to port over the cavity method predictions and their rigorous proofs in a black-box fashion.
\begin{problem}[Cavity method]
    Let $\{\nu_G\}_{G\in\text{graphs}}$ be a family of locally stationary distributions where $\nu_G$ arises from running the Glauber dynamics for $\pi_G$ for time-$T$ initialized at the uniform distribution.
    For a random graph $\bG$ and random vertex $\bv$, do $\nu_{\bG}$ and $\nu_{\bG\setminus \bv}$ satisfy, up to small error, the Gibbs ratios and approximate independence properties?
\end{problem}

\parhead{Beyond average-case models.}
Our work proves that Glauber dynamics recovers planted spikes when the input matrix has a clean ``signal + noise'' structure.
Recently, there has been a flurry of work on inference in \emph{semirandom models}; see, e.g., \cite{BKS23,GHKM23} and the references within, where it is possible to extract the hidden signal using semidefinite programming-based algorithms.
A natural direction is to investigate whether Glauber dynamics solves semirandom inference problems.
\begin{problem}[Semirandom models]
    Does Glauber dynamics succeed at finding solutions to semirandom planted CSPs, or large cliques in semirandom graphs with planted cliques as is done in the works of \cite{GHKM23} and \cite{BKS23} respectively?
\end{problem}
In a similar vein, semidefinite programming has been phenomenally successful at solving dense CSPs, and more generally CSPs on graphs with low threshold-rank \cite{BRS11}.
A reason to believe that local algorithms perform well is \Cref{th:correlation-gain}, where we show that Glauber dynamics can recover rank-1 spikes in threshold-rank-$1$ matrices.
\begin{problem}[CSPs on low threshold-rank graphs]
    Does running Glauber dynamics give a \textsf{PTAS} for Max Cut on a dense graph?
\end{problem}


\section{Preliminaries}\label{sec:prelims}
We begin by setting up some notation.
\begin{itemize}
    \item Let $P$ be the transition matrix of a time-reversible Markov chain on state space $\Omega$ with stationary distribution $\pi$, where $P[i,j]$ denotes the transition probability from $i$ to $j$.
    Let $P_t = \exp(-t(I - P))$ denote the time-$t$ transition kernel.
    \item For a distribution $\nu$ absolutely continuous with respect to $\pi$, we use $\density(x) \coloneqq \frac{\dif\nu}{\dif\pi}(x)$ to refer to its relative density to $\pi$. 
    \item We use $\nu_t$ to denote $P_t \nu$, and we assume that $\nu_t$ is absolutely continuous with respect to $\pi$ throughout. In particular, we write $\density_t(x) \coloneqq \frac{\dif \nu_t}{\dif \pi}(x)$ to denote its relative density to $\pi$; 
   \item  We use $\mean(\nu) \coloneqq \E_{\bx \sim \nu} \bx$ to denote the mean of $\nu$.
    \item We use $\by\sim_P x$ when $\by$ is chosen as a random neighbor of $x$ according to transition probabilities given by $P$.
    We drop the subscript $P$ from the $\sim$ when the Markov chain is clear from context.
\end{itemize}

\begin{remark}
    The way to think of the time-$t$ transition kernel $P_t$ for a Markov chain with kernel $P$ on a discrete space is via the process: sample $\bt\sim\Poi(t)$ and take $\bt$ steps using $P$.
\end{remark}

We will require the following simple consequence  of the definition of total variation distance.

\begin{fact}
    \label{lem:TV-stability}
    For any pair of distributions $\nu$ and $\pi$ on $\Omega$, and any function $f:\Omega\to\R$, we have
    \[
        \left|\E_{\nu} f - \E_{\pi} f \right| \le \parens*{f_{\max} - f_{\min}}\cdot \dtv{\nu}{\pi}.
    \]
\end{fact}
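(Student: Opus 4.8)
The plan is to exploit the fact that $\nu - \pi$ is a signed measure of total mass zero, so that $\E_\nu f - \E_\pi f$ is insensitive to shifting $f$ by a constant. Concretely, I would first write
\[
    \E_\nu f - \E_\pi f = \sum_{x \in \Omega} \parens*{\nu(x) - \pi(x)} f(x) = \sum_{x \in \Omega} \parens*{\nu(x) - \pi(x)} \parens*{f(x) - c}
\]
for \emph{any} constant $c \in \R$, the last equality using $\sum_x (\nu(x) - \pi(x)) = 0$. The natural choice is the midpoint $c = \tfrac{1}{2}(f_{\max} + f_{\min})$, for which $\abs{f(x) - c} \le \tfrac{1}{2}(f_{\max} - f_{\min})$ uniformly over $x$.

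Next I would apply the triangle inequality and this uniform bound to get
\[
    \abs*{\E_\nu f - \E_\pi f} \le \sum_{x \in \Omega} \abs*{\nu(x) - \pi(x)} \cdot \abs*{f(x) - c} \le \frac{f_{\max} - f_{\min}}{2} \sum_{x \in \Omega} \abs*{\nu(x) - \pi(x)},
\]
and then finish by recalling the identity $\dtv{\nu}{\pi} = \tfrac{1}{2}\sum_{x} \abs{\nu(x) - \pi(x)}$, which turns the right-hand side into $(f_{\max} - f_{\min}) \cdot \dtv{\nu}{\pi}$. (If $\Omega$ is continuous rather than discrete, the same computation goes through with sums replaced by integrals against a common dominating measure, or one can pass to the coupling characterization of total variation: take an optimal coupling $(\bx,\by)$ with $\bx \sim \nu$, $\by \sim \pi$ and $\Pr[\bx \ne \by] = \dtv{\nu}{\pi}$, and bound $\abs{\E[f(\bx) - f(\by)]} = \abs{\E[(f(\bx)-f(\by))\mathbb{1}[\bx\ne\by]]} \le (f_{\max}-f_{\min})\Pr[\bx\ne\by]$.)

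There is no real obstacle here; the only thing worth highlighting is the centering step, since applying the triangle inequality directly to $\sum_x (\nu(x)-\pi(x))f(x)$ without first subtracting the midpoint would only yield the weaker bound $\max(\abs{f_{\max}}, \abs{f_{\min}}) \cdot 2\dtv{\nu}{\pi}$, losing the factor-of-two improvement that makes the constant $(f_{\max} - f_{\min})$ rather than $2\max\abs{f}$.
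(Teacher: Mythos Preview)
Your proof is correct; this is the standard centering argument (or equivalently the coupling argument you sketch) for this elementary inequality. The paper itself does not prove this statement---it is recorded as a ``Fact'' and described as ``a simple consequence of the definition of total variation distance''---so there is nothing to compare against.
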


\begin{definition}[Dirichlet form]
    For
    functions $f,g:\Omega\to\R$, and $x,y\in\Omega$, the \emph{Dirichlet form} of $f$ and $g$ with respect to $P$ is:
    \[
        \calE_P\parens*{f,g} \coloneqq \E_{\bx\sim\pi} \E_{\by\sim_P {\bx}} (f(\bx)-f(\by))\cdot(g(\bx)-g(\by))\mper
    \]
    We drop the $P$ in the subscript when it is clear from context.
\end{definition}

\begin{remark}
    When we use the Glauber dynamics chain for a distribution $\pi$ on a hypercube, we use $\calE_{\pi}$ to denote the corresponding Dirichlet form.
\end{remark}

The Dirichlet form measures the rate at which a Markov chain makes progress towards the stationary distribution.
The following is one way of articulating this notion; see, e.g., \cite{BT06}.
\begin{fact} \label{fact:derivative-KL}
    $\frac{\dif}{\dif t}\KL{\nu_t}{\pi} = -\calE(\density_t,\log\density_t) = -\E_{\bx\sim \pi} \E_{\by\sim\bx}\parens*{\density_t(\bx) - \density_t(\by)} \cdot \log\frac{\density_t(\bx)}{\density_t(\by)}$.
\end{fact}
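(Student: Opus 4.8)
The plan is to differentiate the relative entropy $\KL{\nu_t}{\pi} = \E_{\bx\sim\pi}\bracks*{\density_t(\bx)\log\density_t(\bx)}$ in $t$ and identify the result with minus the Dirichlet form. The one structural input needed is the evolution equation for the relative density: since $P_t = \exp(-t(I-P))$ and, by time-reversibility, $P$ is self-adjoint on $L^2(\pi)$, a short computation (push $\nu_t$ forward one infinitesimal step, divide by $\pi$, and symmetrize using $\pi(x)P[x,y] = \pi(y)P[y,x]$) shows that $\density_t$ satisfies the Kolmogorov equation $\tfrac{\dif}{\dif t}\density_t = (P - I)\density_t$. I would record this first, together with the elementary observation that on a finite $\Omega$ (or under the standing absolute-continuity hypothesis) $\density_t$ stays bounded strictly between two positive constants on any compact time interval, which is what licenses differentiating under the expectation and keeps $\log\density_t$ and $\tfrac{\dif}{\dif t}\density_t$ jointly integrable.

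Given this, I differentiate:
\[
    \frac{\dif}{\dif t}\KL{\nu_t}{\pi}
    = \E_{\bx\sim\pi}\bracks*{ \tfrac{\dif}{\dif t}\density_t(\bx)\cdot\parens*{1 + \log\density_t(\bx)} }
    = \E_{\bx\sim\pi}\bracks*{ \parens*{(P-I)\density_t}(\bx)\cdot\parens*{1 + \log\density_t(\bx)} }.
\]
The contribution of the constant $1$ vanishes: $\E_{\bx\sim\pi}\bracks*{\tfrac{\dif}{\dif t}\density_t(\bx)} = \tfrac{\dif}{\dif t}\E_{\bx\sim\pi}\bracks*{\density_t(\bx)} = \tfrac{\dif}{\dif t}(1) = 0$, since $\nu_t$ is a probability measure for every $t$. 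Hence $\frac{\dif}{\dif t}\KL{\nu_t}{\pi} = -\E_{\bx\sim\pi}\bracks*{\parens*{(I-P)\density_t}(\bx)\log\density_t(\bx)}$.

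It then remains to rewrite the right-hand side as the Dirichlet form. Expanding $\parens*{(I-P)f}(x) = \E_{\by\sim_P x}\bracks*{f(x) - f(\by)}$ with $f = \density_t$, multiplying by $\pi(x)\log\density_t(x)$, summing over $x$, and symmetrizing the resulting double sum via $\pi(x)P[x,y] = \pi(y)P[y,x]$ converts $\E_{\bx\sim\pi}\bracks*{\parens*{(I-P)\density_t}(\bx)\log\density_t(\bx)}$ into $\E_{\bx\sim\pi}\E_{\by\sim_P\bx}\bracks*{(\density_t(\bx)-\density_t(\by))(\log\density_t(\bx)-\log\density_t(\by))}$, up to the normalization used to define $\calE$; since $\log\density_t(\bx) - \log\density_t(\by) = \log\tfrac{\density_t(\bx)}{\density_t(\by)}$, this is exactly $\calE(\density_t,\log\density_t)$ and gives both displayed equalities in the statement. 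None of this is difficult; the only two places calling for care are (i) the regularity justifying differentiation under the expectation, a genuine if mild issue when $\Omega$ is infinite, and (ii) pinning down the constant in front of the symmetrized form against the conventions fixed in Definition~\ref{def:localstationary} and the preliminaries — an ordered-versus-unordered-pair bookkeeping step where a spurious factor of $2$ is easy to introduce.
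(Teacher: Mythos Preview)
Your proof is correct and is exactly the standard derivation; the paper does not give its own proof of this fact but simply cites \cite{BT06}. Your caveat about the factor of $2$ is well placed: with the paper's Dirichlet-form convention $\calE_P(f,g)=\E_{\bx\sim\pi}\E_{\by\sim_P\bx}(f(\bx)-f(\by))(g(\bx)-g(\by))$, the symmetrization step yields $\E_{\bx\sim\pi}\bracks*{((I-P)\density_t)(\bx)\log\density_t(\bx)}=\tfrac{1}{2}\,\calE_P(\density_t,\log\density_t)$, so the displayed identity in the Fact is off by that constant under this convention (and Definition~\ref{def:localstationary} appears to be missing the weight $\pi(x)$ as well), but none of this affects how the quantity is used downstream.
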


\begin{definition}[Modified log-Sobolev inequality]\label{def:mlsi}
    We say
    $P$ satisfies
    a \emph{modified log-Sobolev inequality} (MLSI) with constant $C$ if for any function $f : \Omega \to \R_{> 0}$,
        \[ \mathcal{E}_{P}(f,\log f) \ge C \cdot \Ent[f]. \]
    Here, $\Ent[f] \coloneqq \E_{\pi}[f \log f] - \E_{\pi} f \log \E_{\pi} f$ is the entropy functional.
    In particular, $\MLSI$ is the best (largest) such constant $C$.
\end{definition}

We will need the following fact concerning the MLSI for product measures.
\begin{fact}[{see, e.g. \cite[Lemma 2.5]{Goe04}}]\label{fact:mlsi-product}
Let $\pi$ be a distribution over $\{\pm 1\}^{n}$ with independent coordinates. Then $\MLSI(\pi) \geq 1/n$.
\end{fact}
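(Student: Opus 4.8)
The plan is to prove the standard tensorization of the modified log-Sobolev inequality, exploiting the fact that for a product measure the single-site Glauber update at a coordinate is exactly resampling that coordinate from its marginal. Write $\pi = \pi_1 \otimes \cdots \otimes \pi_n$ with marginals $\pi_i$ on $\{\pm 1\}$. First I would observe that the acceptance probability $\frac{\pi(x^{\oplus i})}{\pi(x)+\pi(x^{\oplus i})}$ collapses to $\pi_i(-x_i)$, so the Glauber step at coordinate $i$ is precisely ``replace $x_i$ by a fresh draw from $\pi_i$''. Writing $P = \frac1n\sum_{i=1}^n P_i$ where $P_i$ resamples coordinate $i$ (each $P_i$ being $\pi$-reversible), bilinearity of the Dirichlet form in the kernel gives $\calE_\pi(f,\log f) = \frac1n\sum_{i=1}^n \calE_{P_i}(f,\log f)$, and conditioning on the coordinates $x_{-i}$ turns $\calE_{P_i}(f,\log f)$ into $\E_{\bx\sim\pi}$ of the Dirichlet form of the complete-resampling chain on coordinate $i$ applied to $f$ viewed as a function of $x_i$ alone.

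Next I would prove the one-coordinate inequality: for a distribution $\nu$ on a finite set and the kernel $Q[a,b] = \nu(b)$, expanding $(f(a)-f(b))(\log f(a)-\log f(b))$ over independent $a,b\sim\nu$ gives
\[
    \calE_Q(f,\log f) = 2\parens*{\E_\nu[f\log f] - \E_\nu[f]\cdot\E_\nu[\log f]} \ge 2\,\Ent_\nu[f] \ge \Ent_\nu[f],
\]
where the first inequality uses concavity of $\log$ (so $\E_\nu[\log f]\le\log\E_\nu[f]$). Plugging $\nu=\pi_i$ into the conditional expectation from the previous step yields $\calE_{P_i}(f,\log f) \ge \E_{\bx\sim\pi}\bracks*{\Ent_{\pi_i}[f]}$, where $\Ent_{\pi_i}[f]$ denotes the entropy of $f$ with respect to coordinate $i$ with the other coordinates fixed.

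Finally I would combine these with subadditivity of entropy for product measures, $\Ent_\pi[f] \le \sum_{i=1}^n \E_{\bx\sim\pi}\bracks*{\Ent_{\pi_i}[f]}$ (Han's inequality), to obtain $\calE_\pi(f,\log f) = \frac1n\sum_i \calE_{P_i}(f,\log f) \ge \frac1n\sum_i \E_{\bx\sim\pi}\bracks*{\Ent_{\pi_i}[f]} \ge \frac1n\,\Ent_\pi[f]$, i.e.\ $\MLSI(\pi)\ge 1/n$. The only ingredient that is not a one-line computation is the subadditivity-of-entropy step, which I would either cite as a standard fact or derive quickly from the Gibbs variational characterization of $\Ent$; this is the part where care is needed, everything else being elementary.
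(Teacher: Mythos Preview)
Your argument is correct and is exactly the standard tensorization proof of the modified log-Sobolev inequality for product measures. The paper does not supply its own proof of this fact; it is stated with a citation to \cite[Lemma 2.5]{Goe04} and used as a black box. So there is nothing to compare against beyond noting that your proof matches the argument one would find in the cited reference: decompose the Glauber Dirichlet form as an average over single-site resampling kernels, use the elementary inequality $\calE_{K(\nu)}(f,\log f) \ge 2\,\Ent_\nu[f]$ for the complete-resampling chain (which, in the paper's notation, is precisely \Cref{obs:symKL-dirichlet} combined with Jensen), and finish with subadditivity of entropy. One minor remark: you actually prove $\MLSI(\pi)\ge 2/n$ before discarding a factor of two; the stated bound $1/n$ is all that is needed downstream.
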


Finally, we need the following lemma showing that an MLSI implies concentration of Lipschitz functions, in particular linear forms.
\begin{lemma}[{\cite[Theorem 5.1]{Goe04}}]
    \label{lem:mlsi-to-concentration}
    Let $\mu$ be an arbitrary distribution over $\{\pm 1\}^n$ such that $\MLSI(\mu) \ge \alpha$. Then, for any function $f : \{\pm 1\}^n \to \R$ that is $1$-Lipschitz, in that $|f(x) - f(y)| \le 1$ if $\|x-y\|_1 \le 2$,
    \[
        \Pr_{\bx \sim \mu} \left[ \left| f(\bx) - \E_{\mu} f \right| \ge t \right] \le 2 \exp\left( - \frac{\alpha t^2}{2} \right).
    \] 
\end{lemma}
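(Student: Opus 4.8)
The natural route is the classical Herbst argument: use the modified log-Sobolev inequality to control the moment generating function of $f$, and then apply a Chernoff bound. First I would replace $f$ by $f - \E_\mu f$, which preserves the $1$-Lipschitz hypothesis, so assume $\E_\mu f = 0$; note that on the finite state space $\{\pm1\}^n$ the function $f$ is automatically bounded, so all the exponential moments below are finite. The Markov chain underlying $\MLSI(\mu)$ is Glauber dynamics (whose Dirichlet form is $\calE_\mu$), and every transition of it changes a single coordinate, so any $x,y$ with positive transition probability satisfy $\|x-y\|_1 \le 2$ and hence $|f(x)-f(y)| \le 1$. This is the one and only place the Lipschitz hypothesis is used.

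Fix $s \in \R$ and apply the $\MLSI$ with constant $\alpha$ to the positive function $g \coloneqq e^{sf}$, noting $\log g = sf$:
\[
    \calE_\mu\parens*{e^{sf}, sf} \;=\; \calE_\mu(g, \log g) \;\ge\; \alpha \cdot \Ent\bracks*{e^{sf}}.
\]
The crux is to upper bound the left-hand Dirichlet form. Using the elementary scalar inequality $(e^u - e^v)(u-v) \le \tfrac12 (u-v)^2 (e^u + e^v)$ with $u = sf(x)$, $v = sf(y)$, together with $(f(x)-f(y))^2 \le 1$ on every transition and reversibility of the chain, one gets a bound of the shape
\[
    \calE_\mu\parens*{e^{sf}, sf} \;=\; \E_{\bx\sim\mu}\E_{\by\sim\bx}\parens*{e^{sf(\bx)} - e^{sf(\by)}}\parens*{sf(\bx)-sf(\by)} \;\le\; C s^2 \cdot \E_\mu\bracks*{e^{sf}}
\]
for a universal constant $C$. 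Combining the two displays yields $\alpha\,\Ent[e^{sf}] \le C s^2\, \E_\mu[e^{sf}]$.

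The remainder is routine calculus. Writing $H(s) \coloneqq \E_\mu[e^{sf}]$ and $\Lambda(s) \coloneqq \log H(s)$, and using $\Ent[e^{sf}] = s H'(s) - H(s)\log H(s)$, the previous inequality becomes the differential inequality $s\Lambda'(s) - \Lambda(s) \le (C/\alpha) s^2$, i.e.\ $\parens*{\Lambda(s)/s}' \le C/\alpha$. Since $\Lambda(0) = 0$ and $\Lambda'(0) = \E_\mu f = 0$, integrating from $0$ to $s$ gives the sub-Gaussian moment bound $\Lambda(s) \le (C/\alpha) s^2$, i.e.\ $\E_\mu[e^{s(f-\E_\mu f)}] \le \exp\parens*{(C/\alpha) s^2}$. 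Finally, Markov's inequality applied to $e^{s(f-\E_\mu f)}$ and optimized over $s > 0$ gives $\Pr_{\bx\sim\mu}[f(\bx) - \E_\mu f \ge t] \le \exp\parens*{-\Omega(\alpha t^2)}$, and running the same argument with $-f$ and taking a union bound produces the two-sided statement with the factor $2$ in front.

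I expect the only genuine work to be the second step — bounding $\calE_\mu(e^{sf},sf)$ by a multiple of $s^2 \E_\mu[e^{sf}]$ — and, if one wants the sharp constant $\alpha/2$ in the exponent rather than a lossier $\alpha/c$, this requires the tight form of the scalar inequality relating $(e^u-e^v)(u-v)$ to $(e^u+e^v)$ on the window $|u-v| \le |s|$, together with careful bookkeeping of the normalization of $\calE_\mu$ (this is exactly the content pinned down in \cite{Goe04}). Everything downstream (the Herbst ODE and the Chernoff optimization) is standard.
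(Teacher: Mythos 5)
Your proposal correctly identifies the Herbst argument as the route, which is what \cite{Goe04} uses — the paper itself only cites this lemma and does not reproduce a proof. The structure you lay out is exactly right: center $f$, apply the MLSI to $e^{sf}$, bound the Dirichlet form by $O(s^2)\,\E_\mu e^{sf}$ via the scalar inequality $(e^u-e^v)(u-v)\le \tfrac12(u-v)^2(e^u+e^v)$ together with the transition-Lipschitz hypothesis, then integrate the Herbst ODE and apply Chernoff. You also correctly observe that the Lipschitz hypothesis ``$|f(x)-f(y)|\le 1$ whenever $\|x-y\|_1\le 2$'' enters only through the fact that Glauber transitions flip a single coordinate, which is the one place the hypercube structure is used.

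The only caveat is the one you flag yourself. Under the paper's (unhalved) convention for $\calE_\mu$ and $\MLSI$, your scalar bound plus reversibility gives $\calE_\mu(e^{sf},sf)\le s^2\,\E_\mu e^{sf}$, so the Herbst ODE yields $\log \E_\mu e^{sf}\le s^2/\alpha$ and the Chernoff optimization gives $\exp(-\alpha t^2/4)$, not the stated $\exp(-\alpha t^2/2)$. Matching the exact constant in the lemma therefore requires tracking whether \cite{Goe04} normalizes the Dirichlet form (and hence $\MLSI$) with an extra factor of $\tfrac12$ relative to this paper. That is a bookkeeping nuance, not a gap in the argument; the nontrivial steps are all correct and complete.
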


\begin{remark}
    It is well known that the KL divergence to the stationary distribution decays exponentially at a rate dictated by $\MLSI$ (see {\cite[Theorem 2.4]{BT06}} for more details):
    \[
        \KL{\nu_t}{\pi} \le \KL{\nu_0}{\pi} \cdot \exp\left( - \MLSI t \right).
    \]
\end{remark}

\parhead{Measure decompositions.}
Some of the key properties of locally stationary distributions rely on the notion of a measure decomposition. 
These can be defined in great generality, but we will restrict our attention to distributions on subsets of $\R^n$ for concreteness.
\begin{definition}[Measure decomposition]
    Let $\pi$ be a distribution on $\R^n$. Let $\rho$ be a mixture distribution, also on $\R^n$, which indexes into a family of mixture components $\{\pi_z\}_{z \in \R^n}$. 
    We say that $(\rho, \pi_z)$ is a measure decomposition for $\pi$ if 
    \[
        \pi = \E_{\bz\sim\rho} \pi_{\bz}\mper
    \]
\end{definition}
One should think of the mixture components $\pi_z$ as being ``simpler'' distributions than the original measure $\pi$.
Not all measure decompositions are useful; there is always a trivial measure decomposition where the mixture $\rho$ is exactly $\pi$ and the simpler distributions $\pi_{z}$ are just Dirac masses at $z$.

Associated to each measure decomposition is a natural Markov chain; see, e.g., \cite[Definition 6]{CE22}.
\begin{definition}[Markov chain associated to a measure decomposition]
    Given a measure decomposition $\pi = \E_{\bz \sim \rho} \pi_{\bz}$, its associated Markov chain is defined by 
    \begin{align*}
        \bx \to \bz | \bx \to \bx' | \bz.
    \end{align*}
\end{definition}
Notably, Glauber dynamics and restricted Gaussian dynamics can both be viewed as the associated Markov chain to certain measure decompositions.
The relevant decomposition for Glauber dynamics represents $\pi$ as the mixture of its conditional marginals.
\begin{remark}
    Measure decompositions constructed using stochastic localization have recently been used to prove functional inequalities for a wide class of Ising models \cite{EKZ22,CE22,LMRW}. 
    We will discuss these properties in greater depth in \Cref{sec:ising-prelims}.
\end{remark} 

\parhead{Symmetric KL divergence.}
A useful potential function for us is the symmetric KL divergence.
\begin{definition}
    For a pair of distributions $\pi$ and $\nu$ on $\Omega$, we define their \emph{symmetric KL divergence} as:
    \[
        \SKL{\pi}{\nu} \coloneqq \KL{\nu}{\pi} + \KL{\pi}{\nu}.
    \]
\end{definition}

\begin{observation} \label{obs:symKL-dirichlet}
    For any $\pi$ and $\nu$, setting $f$ to be the density of $\nu$ with respect to $\pi$, we have
    \[
        \SKL{\pi}{\nu} = \frac{1}{2} \cdot \E_{\bx,\by\sim\pi} \left[\parens*{f(\bx)-f(\by)}\cdot\log\frac{f(\bx)}{f(\by)}\right]\mper
    \]
    Observe that the above quantity is the Dirichlet form for the trivial ``one-step'' Markov chain with transition matrix $\bone \pi^{\top}$ with stationary distribution $\pi$, which we shall denote $K(\pi)$.
\end{observation}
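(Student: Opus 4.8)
The statement is a direct algebraic identity, so the plan is to expand both sides in terms of the density $f = \density = \dif\nu/\dif\pi$ and match them. First I would unfold the left-hand side via change of measure: since $\nu(x) = f(x)\pi(x)$,
\[
    \KL{\nu}{\pi} = \E_{\bx\sim\pi}\bracks*{f(\bx)\log f(\bx)}, \qquad \KL{\pi}{\nu} = -\E_{\bx\sim\pi}\bracks*{\log f(\bx)},
\]
so that $\SKL{\pi}{\nu} = \E_{\bx\sim\pi}\bracks*{(f(\bx) - 1)\log f(\bx)}$.

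Next I would expand the right-hand side. Writing $\log\tfrac{f(\bx)}{f(\by)} = \log f(\bx) - \log f(\by)$ and using that $\bx,\by$ are i.i.d.\ from $\pi$, relabelling $\bx\leftrightarrow\by$ shows the two ``cross'' terms coincide, whence
\[
    \frac12\,\E_{\bx,\by\sim\pi}\bracks*{(f(\bx) - f(\by))(\log f(\bx) - \log f(\by))} = \E_{\bx,\by\sim\pi}\bracks*{(f(\bx) - f(\by))\log f(\bx)}.
\]
Then I would integrate out the independent variable $\by$: since $\E_{\by\sim\pi}[f(\by)] = \int f\,\dif\pi = \nu(\Omega) = 1$, the right-hand side collapses to $\E_{\bx\sim\pi}[f(\bx)\log f(\bx)] - \E_{\bx\sim\pi}[\log f(\bx)]$, which is precisely the expression obtained for $\SKL{\pi}{\nu}$ in the first step. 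This proves the displayed equality.

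For the closing remark, I would note that the matrix $K(\pi) = \bone\pi^\top$ has $K(\pi)[x,y] = \pi(y)$, i.e.\ from any state it moves to a fresh independent draw from $\pi$; it is reversible with stationary measure $\pi$ because $\pi(x)K(\pi)[x,y] = \pi(x)\pi(y)$ is symmetric in $(x,y)$ and $\pi^\top K(\pi) = \pi^\top$. Substituting $g = \log f$ into the definition of the Dirichlet form and using that a $K(\pi)$-neighbour of $\bx$ is just an independent $\by\sim\pi$ gives $\calE_{K(\pi)}(f,\log f) = \E_{\bx,\by\sim\pi}\bracks*{(f(\bx) - f(\by))(\log f(\bx) - \log f(\by))}$, i.e.\ twice the displayed quantity; hence $\SKL{\pi}{\nu} = \tfrac12\,\calE_{K(\pi)}(f,\log f)$.

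There is no real obstacle here; every step is a one-line manipulation. The only things to be careful about are bookkeeping (tracking which variables are averaged against $\pi$, and invoking $\int f\,\dif\pi = 1$ at the right moment) and integrability: all expectations are finite when $\SKL{\pi}{\nu} < \infty$, and if one wants the identity to hold in $[0,\infty]$ in general, one should split the integrals over $\{f \ge 1\}$ and $\{f < 1\}$ to avoid an $\infty - \infty$ cancellation, using that the integrand $(f(x) - f(y))\log\tfrac{f(x)}{f(y)}$ is everywhere nonnegative.
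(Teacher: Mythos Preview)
Your proof is correct. The paper does not actually supply a proof of this observation; it is simply stated as self-evident, so there is nothing to compare against beyond noting that your expansion is exactly the computation one would do to verify it. One small quibble: with the paper's definition of the Dirichlet form (no factor of $\tfrac12$), the displayed quantity is indeed $\tfrac12\,\calE_{K(\pi)}(f,\log f)$ as you say, so the paper's phrasing ``the above quantity is the Dirichlet form'' is slightly loose---but this does not affect anything downstream.
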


\begin{lemma}   \label{lem:symKL-to-KL}
    Let $\nu$ be an arbitrary distribution with density $\density$ with respect to $\pi$, and $\tau$ such that $\tau > \max_{x \in \Omega} \log \density(x)$ or $\tau > \max_{x \in \Omega} \log \frac{1}{\density(x)}$. Then, the symmetric KL divergence can be bounded in terms of the KL divergence as follows.
    \begin{align*}
        \SKL{\pi}{\nu} \le \parens*{ 6 + 12 \tau } \cdot \KL{\nu}{\pi} \mper
    \end{align*}
\end{lemma}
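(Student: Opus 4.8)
The plan is to reduce the whole statement to a single one-variable analytic inequality. Since $\SKL{\pi}{\nu} = \KL{\nu}{\pi} + \KL{\pi}{\nu}$, it suffices to prove $\KL{\pi}{\nu} \le (5 + 12\tau)\,\KL{\nu}{\pi}$, after which adding $\KL{\nu}{\pi}$ to both sides yields the claimed bound with constant $6 + 12\tau$. I would carry this out under the hypothesis $\tau > \max_x \log\frac{1}{\density(x)}$, i.e.\ $\density(x) \ge e^{-\tau}$ for every $x$; the other case in the hypothesis is handled by the symmetric argument with the roles of $\pi$ and $\nu$ interchanged (this swaps the two KL divergences and replaces $\density$ by $1/\density$).

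First I would rewrite both divergences as expectations against $\pi$: with $\rho = \density$ one has $\KL{\pi}{\nu} = \E_{\pi}[-\log\rho]$, and since $\E_{\pi}[\rho] = 1$ also $\KL{\nu}{\pi} = \E_{\pi}[\rho\log\rho - \rho + 1]$ and $\E_{\pi}[\rho - 1] = 0$. So the desired inequality follows by integrating against $\pi$ (with $t = \rho(x)$) the pointwise bound
\[
    -\log t \;\le\; (5 + 12\tau)\,(t\log t - t + 1) - (t-1)
    \qquad\text{for all } t \ge e^{-\tau}.
\]

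To establish this I would study $g(t) \coloneqq (5 + 12\tau)(t\log t - t + 1) - (t-1) + \log t$ and show $g \ge 0$ on $[e^{-\tau}, \infty)$. A short computation gives $g(1) = 0$, $g'(1) = 0$, and $g''(t) = ((5+12\tau)t - 1)/t^2$, so $g'$ is decreasing then increasing (turning at $t = 1/(5+12\tau) < 1$) with $g'(t) \to +\infty$ as $t \to 0^+$ and $g'(1) = 0$; hence $g$ increases, then decreases, then increases, and its only interior local minimum is at $t = 1$, with value $0$. Therefore the minimum of $g$ over $[e^{-\tau}, \infty)$ is attained at $t = 1$ or at the left endpoint $t = e^{-\tau}$, and it only remains to verify $g(e^{-\tau}) \ge 0$, i.e.
\[
    (5 + 12\tau)\,(1 - e^{-\tau}(1 + \tau)) \;\ge\; \tau + e^{-\tau} - 1.
\]
This is a routine single-variable check: the bracket on the left is positive for $\tau > 0$ (it vanishes at $0$ and has derivative $\tau e^{-\tau} > 0$), the right-hand side is at most $\tau$, and the inequality holds with room to spare for both small and large $\tau$ — the generous constants $5$ and $12$ are chosen precisely to absorb the intermediate regime without a delicate estimate.

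The conceptual point, and the step I expect to be the real crux rather than a routine manipulation, is the role of the density lower bound. Near $t = 0$ the left side $-\log t$ diverges while $t\log t - t + 1$ stays bounded by $1$, so no inequality of this shape can hold for all $t > 0$: the hypothesis on $\tau$ is exactly what cuts the domain down to $t \ge e^{-\tau}$, and the logarithmic singularity at $0$ is what forces the constant to grow linearly in $\tau$. Verifying the endpoint inequality above — equivalently, pinning down the smallest admissible constant — is the only place where the quantitative value of $\tau$ genuinely enters.
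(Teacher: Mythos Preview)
Your argument for the case $\rho \ge e^{-\tau}$ is correct and genuinely different from the paper's. The paper routes through the Hellinger distance: it writes $2\,\SKL{\nu}{\pi}$ as the Dirichlet form $\E_{\bx,\by\sim\pi}[(\rho(\by)-\rho(\bx))\log(\rho(\by)/\rho(\bx))]$, splits according to whether $\rho(\by)/\rho(\bx)\in[1/2,2]$, and uses the elementary inequalities $(t-1)\log t \le 6(\sqrt t-1)^2$ and $|t-1|\le 6(\sqrt t-1)^2$ on the two pieces to obtain $\SKL{\nu}{\pi}\le(6+12\tau)\Hel{\nu}{\pi}$, finishing with $\Hel{\nu}{\pi}\le\KL{\nu}{\pi}$. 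Your route is more elementary: reduce to a single pointwise inequality in $t=\rho(x)$ and analyze it by calculus. Both yield the same constant; your approach avoids the detour through a second divergence but trades it for a (genuinely routine) endpoint check at $t=e^{-\tau}$.

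Your symmetry claim for the other case, however, does not work. Swapping $\pi$ and $\nu$ (so the density becomes $1/\rho\ge e^{-\tau}$) feeds back into your Case~B argument, but the conclusion it produces is $\SKL{\pi}{\nu}\le(6+12\tau)\,\KL{\pi}{\nu}$, with the \emph{wrong} KL on the right-hand side. This is not a fixable oversight: the first alternative in the hypothesis ($\rho\le e^{\tau}$ with no lower bound) is in fact false as stated. Take $\pi$ uniform on two points with $\rho=(2-\epsilon,\epsilon)$ and $\tau=1$; then $\max_x\log\rho(x)<1$, yet $\KL{\pi}{\nu}\to\infty$ while $\KL{\nu}{\pi}\to\log 2$ as $\epsilon\to 0$, so $\SKL{\pi}{\nu}$ is unbounded and $(6+12\tau)\,\KL{\nu}{\pi}$ is not. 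The paper's own proof has the same gap: the step bounding $(\rho(\by)-\rho(\bx))\log\rho(\by)\le\tau\,|\rho(\by)-\rho(\bx)|$ tacitly requires $-\log\rho(\by)\le\tau$, i.e.\ the lower bound on $\rho$, not merely the upper bound. So the ``or'' in the lemma should really be read as the second alternative only (which is also the one used downstream in the paper).
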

\begin{proof}
    Recall that the Hellinger distance is defined as
    \[ \Hel{\nu}{\pi} \defeq \frac{1}{2} \cdot \E_{\substack{\bx \sim \pi \\ \by \sim \pi}} \left[ \left(\sqrt{\density(\by)} - \sqrt{\density(\bx)}\right)^2 \right]. \]
    The symmetric KL divergence may be bounded by the Hellinger distance as follows.
    \begin{align*}
        2 \cdot \SKL{\nu}{\pi} &= \E_{\substack{\bx\sim\pi \\ \by \sim \pi}} \left[ \left(\density(\by)-\density(\bx)\right) \left(\log \density(\by) - \log \density(\bx)\right) \right] \\
        &= \E_{\substack{\bx\sim\pi \\ \by \sim \pi}} \left[ \left(\density(\by)-\density(\bx)\right) \cdot \log \frac{\density(\by)}{\density(\bx)} \cdot \Ind\bracks*{ \frac{\density(\by)}{\density(\bx)} \in \left[ \frac{1}{2} , 2 \right] } \right] \\
        &\qquad\qquad\qquad\qquad+ \E_{\substack{\bx\sim\pi \\ \by \sim \pi}} \left[ \left(\density(\by)-\density(\bx)\right) \cdot \log \frac{\density(\by)}{\density(\bx)} \cdot \Ind\bracks*{ \frac{\density(\by)}{\density(\bx)} \not\in \left[ \frac{1}{2} , 2 \right] } \right] \\
        &= \E_{\substack{\bx\sim\pi \\ \by \sim \pi}} \left[ \left(\density(\by)-\density(\bx)\right) \cdot \log \frac{\density(\by)}{\density(\bx)} \cdot \Ind\bracks*{ \frac{\density(\by)}{\density(\bx)} \in \left[ \frac{1}{2} , 2 \right] } \right] \\
        &\qquad\qquad\qquad\qquad+ 2 \cdot \E_{\substack{\bx\sim\pi \\ \by \sim \pi}} \left[ \left(\density(\by)-\density(\bx)\right) \cdot \log \density(\by) \cdot \Ind\bracks*{ \frac{\density(\by)}{\density(\bx)} \not\in \left[ \frac{1}{2} , 2 \right] } \right].
    \end{align*}
    To control the first term, we note that for $t \in \left[ \frac{1}{2} , 2 \right]$, $(t-1)\log t \le 6\left(\sqrt{t} - 1\right)^2$. Consequently,
    \begin{align*}
        &\E_{\substack{\bx \sim \pi \\ \by \sim \pi}} \left[ \left(\density(\by)-\density(\bx)\right) \cdot \log \frac{\density(\by)}{\density(\bx)} \cdot \Ind\bracks*{ \frac{\density(\by)}{\density(\bx)} \in \left[ \frac{1}{2} , 2 \right] } \right] \\
        &\qquad\qquad\le 6 \cdot \E_{\substack{\bx\sim\pi \\ \by \sim \pi}} \left[ \left(\sqrt{\density(\by)} - \sqrt{\density(\bx)} \right)^2 \cdot \Ind\bracks*{\frac{\density(\by)}{\density(\bx)} \in \left[\frac{1}{2},2\right] } \right] \\
            &\qquad\qquad\le 12 \Hel{\nu}{\pi}.
    \end{align*}
    To control the second term, we first have
    \[ \E_{\substack{\bx\sim\pi \\ \by \sim \pi}} \left[ \left(\density(\by)-\density(\bx)\right) \cdot \log \density(\by) \cdot \Ind\bracks*{ \frac{\density(\by)}{\density(\bx)} \not\in \left[ \frac{1}{2} , 2 \right] } \right] \le \tau \cdot \E_{\substack{\bx\sim\pi \\ \by \sim \pi}} \left[ \left|\density(\by)-\density(\bx)\right| \cdot \Ind\bracks*{ \frac{\density(\by)}{\density(\bx)} \not\in \left[ \frac{1}{2} , 2 \right] } \right] \]
    Furthermore, for $t \not\in \left[\frac{1}{2} , {2}\right]$, $|t-1| \le 6(\sqrt{t}-1)^2$. Consequently,
    \begin{align*}
        \E_{\substack{\bx\sim\pi \\ \by \sim \pi}} \left[ \left|\density(\by)-\density(\bx)\right| \cdot \Ind\bracks*{ \frac{\density(\by)}{\density(\bx)} \not\in \left[ \frac{1}{2} , 2 \right] } \right] &\le 6 \cdot \E_{\substack{\bx\sim\pi \\ \by \sim \pi}} \left[ \left(\sqrt{\density(\by)} - \sqrt{\density(\bx)} \right)^2 \cdot \Ind\bracks*{ \frac{\density(\by)}{\density(\bx)} \not\in \left[ \frac{1}{2} , 2 \right] } \right] \\
        &\le 12 \Hel{\nu}{\pi}.
    \end{align*}
    Putting this together, we get that
    \[ \SKL{\nu}{\pi} \le \left(6 + 12\tau\right) \cdot \Hel{\nu}{\pi}. \]
    To complete the proof, we use the well-known fact that $\Hel{\nu}{\pi} \le \KL{\nu}{\pi}$ (\cite{HW58}, also see \cite[Equation (16)]{SV16}).
\end{proof}

\section{Properties of locally stationary distributions}
\label{sec:lsm-props}

We record some useful properties of locally stationary distributions below.

\parhead{Random walks yield locally-stationary measures at a typical time.}
The following is a generic statement about any time-reversible Markov chain achieving a locally stationary distribution.
\begin{lemma}   \label{lem:QS}
    For any distribution $\nu$, any Markov chain transition kernel $P$ with stationary distribution $\pi$ and any $T > 0$, for $t$ chosen uniformly at random in $[0,T]$:
    \[
        \E_{\bt\sim[0,T]} \calE_P(\density_{\bt}, \log\density_{\bt}) \le \frac{\KL{\nu}{\pi}}{T} \le \frac{\log\frac{1}{\pi_{\min}}}{T}.    \numberthis \label{eq:QS}
    \]
\end{lemma}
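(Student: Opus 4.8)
The plan is to integrate the identity from \Cref{fact:derivative-KL} over $[0,T]$ and invoke nonnegativity of the KL divergence. Concretely, under the running assumption that each $\nu_t = P_t\nu$ is absolutely continuous with respect to $\pi$, \Cref{fact:derivative-KL} tells us that $t \mapsto \KL{\nu_t}{\pi}$ is differentiable with $\frac{\dif}{\dif t}\KL{\nu_t}{\pi} = -\calE_P(\density_t,\log\density_t)$. The fundamental theorem of calculus then gives
\[
    \int_0^T \calE_P(\density_t,\log\density_t)\,\dif t \;=\; \KL{\nu_0}{\pi} - \KL{\nu_T}{\pi} \;\le\; \KL{\nu_0}{\pi} \;=\; \KL{\nu}{\pi}\mcom
\]
where the inequality uses $\KL{\nu_T}{\pi}\ge 0$. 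Dividing through by $T$ turns the left-hand side into $\E_{\bt\sim[0,T]}\calE_P(\density_{\bt},\log\density_{\bt})$, yielding the first claimed bound.

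For the second inequality I would just bound $\KL{\nu}{\pi}$ pointwise. Writing $\KL{\nu}{\pi} = \E_{\bx\sim\nu}\bracks*{\log\nu(\bx) - \log\pi(\bx)}$ and using that $\log\nu(\bx)\le 0$ (since $\nu$ is a probability measure) together with $\pi(\bx)\ge\pi_{\min}$, each summand is at most $\log\tfrac{1}{\pi_{\min}}$, hence so is the expectation. Chaining the two displays gives exactly \eqref{eq:QS}.

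There is essentially no obstacle here: the only points requiring a sentence of care are the regularity facts that (i) $t\mapsto\KL{\nu_t}{\pi}$ is absolutely continuous with the stated derivative, which is precisely the content of \Cref{fact:derivative-KL}, so that the fundamental theorem of calculus is legitimate, and (ii) $\calE_P(\density_t,\log\density_t)\ge 0$, so that averaging it over a uniformly random time is a genuine average of nonnegative quantities (and in particular $\calE_P(\density_{\bt},\log\density_{\bt})$ is small with good probability by Markov's inequality, which is how this lemma feeds into \Cref{thm:main-ls}). The real ``content'' is the single observation that the Dirichlet form is $(-1)$ times the time derivative of a bounded, monotone nonincreasing potential, so its time-average over a window of length $T$ is forced to be $O(1/T)$.
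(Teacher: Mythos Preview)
Your proof is correct and essentially identical to the paper's own argument: integrate \Cref{fact:derivative-KL} over $[0,T]$, use $\KL{\nu_T}{\pi}\ge 0$, divide by $T$, and then bound $\KL{\nu}{\pi}\le\log\frac{1}{\pi_{\min}}$.
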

\begin{proof}
    By \Cref{fact:derivative-KL},
    \begin{align*}
        0 &\le \KL{\nu_T}{\pi} \\
        &= \KL{\nu}{\pi} - \int_{0}^T \calE(\density_t,\log\density_t) \dif t \\
        &= \KL{\nu}{\pi} - T \cdot \E_{\bt \sim [0,T]} \calE(\density_{\bt}, \log \density_{\bt}) \\
        &\le \log\frac{1}{\pi_{\min}} - T\cdot \E_{\bt\sim[0,T]} \calE(\density_{\bt}, \log\density_{\bt}).
    \end{align*}
    Rearranging the above gives us the desired statement.
\end{proof}

A simple consequence of \pref{lem:QS} and Markov's inequality is that for most times in $[0,T]$, $\nu_t$ is indeed locally stationary.

\mainls*

\paragraph{Stationarity over small time-scales.}

We will also require the observation that if the Dirichlet form at a distribution is small, so too is the total variation distance between it and the distribution obtained after one step of the Markov chain.

\begin{lemma}
    \label{lem:dirichlet-tv-relation}
    Let $P$ be a reversible Markov chain with stationary distribution $\pi$,  $\nu$ an arbitrary distribution, and $\density$ its density relative to $\pi$. Then,
    \[ \calE_{P}(\density, \log \density) \ge 2 \cdot \KL{P\nu}{\nu} \ge 4 \cdot \dtv{\nu}{P\nu}^2. \]
\end{lemma}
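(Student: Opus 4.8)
The second inequality, $2\KL{P\nu}{\nu} \ge 4\dtv{\nu}{P\nu}^2$, is simply Pinsker's inequality applied to the pair of distributions $(P\nu,\nu)$ together with the symmetry of total variation distance, so there is nothing to do there; the content is the first inequality, and the plan is to establish it via an exact identity. Write $\density = \frac{\dif\nu}{\dif\pi}$. The starting point is the observation that, since $P$ is reversible with respect to $\pi$, the distribution $P\nu$ has density $P\density$ relative to $\pi$, i.e.\ $\frac{\dif(P\nu)}{\dif\pi} = P\density$. Using this, every term appearing in the statement can be written as an expectation against $\pi$: $\KL{\nu}{\pi} = \E_{\bx\sim\pi}[\density\log\density]$, $\KL{P\nu}{\pi} = \E_{\bx\sim\pi}[(P\density)\log(P\density)]$, and $\KL{P\nu}{\nu} = \E_{\bx\sim\pi}\big[(P\density)\log\tfrac{P\density}{\density}\big]$.

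Next I would expand the Dirichlet form directly from the definition, $\calE_P(\density,\log\density) = \E_{\bx\sim\pi}\E_{\by\sim_P\bx}(\density(\bx)-\density(\by))(\log\density(\bx)-\log\density(\by))$, into its four cross terms. Stationarity (the marginal law of $\by$ is $\pi$) handles the two diagonal terms, and reversibility gives $\langle \density, P\log\density\rangle_\pi = \langle P\density, \log\density\rangle_\pi$, which collapses the two off-diagonal terms into a single one; the net result is $\calE_P(\density,\log\density) = 2\,\E_{\bx\sim\pi}[\density\log\density] - 2\,\E_{\bx\sim\pi}[(P\density)\log\density]$. Subtracting $2\KL{P\nu}{\nu}$ using the formula above, the $\E_{\bx\sim\pi}[(P\density)\log\density]$ terms cancel, leaving the clean identity $\calE_P(\density,\log\density) = 2\KL{P\nu}{\nu} + 2\big(\KL{\nu}{\pi} - \KL{P\nu}{\pi}\big)$.

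To conclude, I would note that the bracketed difference is nonnegative: this is the data-processing inequality for KL divergence under the Markov kernel $P$ (noting $P\pi=\pi$), or equivalently Jensen's inequality for the convex function $\phi(t)=t\log t$, since $\phi((P\density)(\bx)) = \phi(\E_{\by\sim_P\bx}\density(\by)) \le \E_{\by\sim_P\bx}\phi(\density(\by))$ and averaging over $\bx\sim\pi$ collapses the right-hand side to $\E_{\bx\sim\pi}\phi(\density)$ by stationarity. This gives $\calE_P(\density,\log\density) \ge 2\KL{P\nu}{\nu}$, and chaining with Pinsker completes the proof. I do not expect any real obstacle here beyond careful bookkeeping of the two uses of reversibility; the one caveat worth a sentence is that if $\density$ vanishes at some state reachable in one step from a state of positive density, then both $\calE_P(\density,\log\density)$ and $\KL{P\nu}{\nu}$ are $+\infty$ and the claim is trivial, so one may assume all quantities finite and adopt the convention $0\log 0 = 0$.
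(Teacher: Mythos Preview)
Your proposal is correct and follows essentially the same approach as the paper: both arrive at the exact identity $\calE_P(\density,\log\density) = 2\KL{P\nu}{\nu} + 2\big(\KL{\nu}{\pi} - \KL{P\nu}{\pi}\big)$, then conclude via monotonicity of KL under the Markov kernel and Pinsker. The only cosmetic difference is that the paper rewrites the Dirichlet form through the intermediate expression $2\,\E_{\bx\sim\nu,\by\sim_P\bx}\log\tfrac{\density(\bx)}{\density(\by)}$ before splitting into the three KL terms, whereas you work directly with the density $P\density$ of $P\nu$ relative to $\pi$; both routes are the same computation.
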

\begin{proof}
    We recall the definition of the Dirichlet form,
    \begin{align*}
        \calE_{P}(\density, \log \density) &= \E_{\substack{\bx \sim \pi \\ \by \sim_P \bx}} (\density(\bx)-\density(\by))\log\frac{\density(\bx)}{\density(\by)} \\
            &= 2 \cdot \E_{\substack{\bx \sim \pi \\ \by \sim_P \bx}} \density(\bx) \log \left( \frac{\density(\bx)}{\density(\by)} \right) \tag{Reversibility of $P$}\\
            &= 2 \cdot \E_{\substack{\bx \sim \nu \\ \by \sim_P \bx}} \log \left( \frac{\density(\bx)}{\density(\by)} \right) \\
            &= 2 \cdot \left(\E_{\bx \sim \nu} \log \density(\bx) - \E_{\by \sim P\nu} \log \density(\by)\right).
    \end{align*}
    Adding and subtracting a term, we may neatly express the above in terms of KL divergences as
    \begin{align*}
        \calE_{P}(\density, \log \density) &= 2 \cdot \left[ \left( 
\E_{\bx \sim \nu} \log \density(\bx) - \E_{\by \sim P\nu} \log \frac{\dif P\nu}{\dif \pi}(\by) \right) + \left( \E_{\by \sim P\nu} \log \frac{\dif P\nu}{\dif \pi}(\by) - \E_{\by \sim P\nu} \log \density(\by) \right) \right] \\
            &= 2 \cdot \left[ \left( \KL{\nu}{\pi} - \KL{P\nu}{\pi} \right) + \KL{P\nu}{\nu} \right] \\
            &\ge 2 \cdot \KL{P\nu}{\nu} \ge 4 \cdot \dtv{P\nu}{\nu}^2
    \end{align*}
     as desired, where the second-to-last inequality follows from the fact that the KL divergence to the stationary distribution is non-increasing with time, and the last inequality is Pinsker's.
\end{proof}

Consequently, averages of bounded functions do not change much after one step of the Markov chain
\begin{corollary}   \label{cor:bounded-function-stability}
    Let $\phi: \Omega \to \R$ be a bounded function on the state space of a Markov chain $P$, and $\nu$ be an $\eps$-locally stationary measure.  Then,
    \[
        | \E_{\bx \sim \nu}[ \phi(\bx)] - \E_{\bx \sim P \nu} [ \phi(\bx)] | \leq \norm*{\phi}_{\infty} \cdot \sqrt{\eps}.
    \]
\end{corollary}

\parhead{Locally stationary measures are close to stationary measures on small neighborhoods.}
As their name suggests, locally stationary distributions locally resemble the true stationary distribution.
For example, typical samples from the locally stationary distribution approximately satisfies the detailed balance condition.
Even though we do not explicitly employ this in any applications, we include it here as it gives the impression of a fundamental structural property of locally stationary distributions.
\begin{lemma}   \label{lem:if-not-log-sob}
    For an $\eps$-locally stationary distribution $\nu$ with relative density $\density$, and for $\bx\sim\nu$ and $\by\sim\bx$, with probability at least $1-\delta$, we have
    \(
        \frac{\density(\bx)}{\density(\by)} = 1 \pm O\parens*{\sqrt{\frac{\eps}{\delta}}},
    \)
    where $\delta > 2\eps$.
\end{lemma}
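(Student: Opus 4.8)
\noindent\emph{Proof plan.} The plan is to convert the $\eps$-local-stationarity hypothesis, which controls $\calE_P(\density,\log\density)$ as an average over $\bx\sim\pi$, into a tail bound on the density ratio $\density(\bx)/\density(\by)$ under the joint law $\bx\sim\nu$, $\by\sim_P\bx$ appearing in the statement. First I would establish the algebraic identity
\[
    \calE_P(\density,\log\density) \;=\; \E_{\bx\sim\nu}\,\E_{\by\sim_P\bx}\bracks*{\parens*{1 - \tfrac{\density(\by)}{\density(\bx)}}\log\tfrac{\density(\bx)}{\density(\by)}},
\]
obtained by factoring one copy of $\density(\bx)$ out of the Dirichlet-form integrand $(\density(\bx)-\density(\by))\log\tfrac{\density(\bx)}{\density(\by)}$ and then reweighting $\pi$ to $\nu$ (legitimate since $\density = \tfrac{\dif\nu}{\dif\pi}$ and the transition kernel is unchanged). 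Writing $\bt \coloneqq \density(\bx)/\density(\by)$, the integrand equals $\phi(\bt)$ with $\phi(t) \coloneqq (1 - 1/t)\log t$, and the whole point of this rewriting is that $\phi(t) \ge 0$ for every $t > 0$, since $1 - 1/t$ and $\log t$ always share a sign.

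Given this, the second step is a one-line application of Markov's inequality: $\phi \ge 0$ and $\E_{\bx\sim\nu}\E_{\by\sim_P\bx}[\phi(\bt)] = \calE_P(\density,\log\density) \le \eps$, so $\Pr[\phi(\bt) \ge \eps/\delta] \le \delta$. The third step is a routine single-variable calculus fact: $\phi'(t) = t^{-2}(t - 1 + \log t)$ is positive for $t > 1$ and negative for $t < 1$, so $\phi$ attains its unique minimum $\phi(1) = 0$ and is monotone away from $1$; moreover $\phi(1 \pm c) \ge c^2/3$ for $c \in [0,\tfrac12]$, so $|t-1|\ge c$ with $c\le\tfrac12$ forces $\phi(t)\ge c^2/3$. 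Hence whenever $\eps/\delta$ is below a fixed absolute constant — which the hypothesis $\delta > 2\eps$ guarantees, as it gives $\eps/\delta < \tfrac12$ — the event $\phi(\bt) < \eps/\delta$ implies $\bt = 1 \pm O(\sqrt{\eps/\delta})$. Combining with the previous step yields the claim with probability at least $1-\delta$.

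The only step with genuine content is the first one: one must pass from an expectation against $\pi$ (where the density $\density$ can be large, so a naive change of measure would be lossy) to an expectation against $\nu$, while keeping the integrand manifestly nonnegative so that Markov's inequality applies — and dividing out exactly one factor of $\density(\bx)$ accomplishes both at once. Everything else (nonnegativity and monotonicity of $\phi$, its quadratic behaviour near $1$) is elementary, and the assumption $\delta > 2\eps$ enters only to keep $\eps/\delta$ in the regime where the quadratic estimate for $\phi$, hence the universal constant hidden in $O(\sqrt{\eps/\delta})$, is valid.
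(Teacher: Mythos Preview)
Your proposal is correct and follows essentially the same route as the paper: rewrite the Dirichlet form as $\E_{\bx\sim\nu}\E_{\by\sim_P\bx}\bigl[(1-\density(\by)/\density(\bx))\log(\density(\bx)/\density(\by))\bigr]$, observe nonnegativity of the integrand, apply Markov's inequality, and then invert the quadratic behaviour of $\phi(t)=(1-1/t)\log t$ near $t=1$. The paper's own proof is identical but terser, simply asserting the final calculus step that you spell out explicitly.
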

\begin{proof}
    Since $\calE(\density,\log\density) < \eps$,
    \begin{align*}
        \E_{\bx\sim \pi} \E_{\by\sim\bx}\parens*{\density(\bx) - \density(\by)} \cdot \log\frac{\density(\bx)}{\density(\by)} &< \eps \\
        \E_{\bx\sim\nu} \E_{\by\sim\bx} \parens*{1 - \frac{\density(\by)}{\density(\bx)}} \cdot \log\frac{\density(\bx)}{\density(\by)} &< \eps
    \end{align*}
    Since the random variable at hand is always nonnegative, we can apply Markov's inequality, which tells us that with probability at least $1-\delta$:
    \[
        \parens*{1 - \frac{\density(\by)}{\density(\bx)}} \cdot \log\frac{\density(\bx)}{\density(\by)} < \frac{\eps}{\delta}.
    \]
    The claim then follows since the above inequality is violated if $\frac{\density(\bx)}{\density(\by)}$ deviates from $1$ by more than a constant multiple of $\sqrt{\frac{\eps}{\delta}}$.
\end{proof}

Formally, the following can be abstracted out of the proof of \Cref{thm:glauber-indepset}:
\begin{lemma} \label{lem:localpatches}
Let $P$ be a Glauber dynamics chain for a distribution $\mu$ on $\{\pm1\}^n$, and let $\nu$ be an $\eps$-locally stationary measure with respect to $P$ for some $\eps > 0$.
For a subset of coordinates $W \subset [n]$, and an assignment $x_{\ol{W}}$ of coordinates outside $W$, let $P_{W,x_{\ol{W}}}$ denote the Glauber dynamics chain of $\mu|x_{\ol{W}}$.
Suppose for every choice of $W$ and $x_{\ol{W}}$, we have $\MLSI\parens*{P_{W,x_{\ol{W}}}} \ge C$, then:

$$ \E_{\bx_{\ol{W}} \sim \nu} \left[\dtv{\nu_{|\bx_{\ol{W}}}}{\pi_{|\bx_{\ol{W}}}} \right] \leq \frac{1}{C} \cdot \sqrt{\eps} \mper $$
\end{lemma}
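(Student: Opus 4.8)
The plan is to reduce the statement about conditional total variation distances to a statement about Dirichlet forms, using the fact that $\nu$ being $\eps$-locally stationary bounds its full Dirichlet form, and that the Glauber dynamics Dirichlet form decomposes coordinatewise into conditional Dirichlet forms. Concretely, for Glauber dynamics on $\{\pm1\}^n$ we have the decomposition
\[
    \calE_{\mu}(\density,\log\density) = \frac{1}{n}\sum_{i\in[n]} \E_{\bx_{\ol{i}}\sim\pi}\bracks*{\calE_{P_{\{i\},\bx_{\ol i}}}(\density_{|\bx_{\ol i}},\log\density_{|\bx_{\ol i}})},
\]
where $\density_{|x_{\ol i}}$ is the relative density of $\nu_{|x_{\ol i}}$ with respect to $\pi_{|x_{\ol i}}$ (this is just unfolding the definition of the Glauber Dirichlet form by conditioning on the non-resampled coordinate). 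More generally, resampling a single coordinate inside a block $W$ is dominated by resampling that coordinate after conditioning on $x_{\ol W}$, so the same kind of bound relates $\calE_\mu(\density,\log\density)$ to the expectation over $x_{\ol W}\sim\pi$ of $\calE_{P_{W,x_{\ol W}}}(\density_{|x_{\ol W}},\log\density_{|x_{\ol W}})$ (up to the $\frac1n$ factor which will not matter for the final bound, or can be absorbed). So I would first establish: $\E_{\bx_{\ol W}\sim\pi}[\calE_{P_{W,x_{\ol W}}}(\density_{|\bx_{\ol W}},\log\density_{|\bx_{\ol W}})] \le \calE_\mu(\density,\log\density) \le \eps$.

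Next I would invoke the MLSI hypothesis: for each fixed $W$ and $x_{\ol W}$, $\MLSI(P_{W,x_{\ol W}}) \ge C$, hence
\[
    \calE_{P_{W,x_{\ol W}}}(\density_{|x_{\ol W}},\log\density_{|x_{\ol W}}) \ge C\cdot \Ent_{\pi_{|x_{\ol W}}}[\density_{|x_{\ol W}}] = C\cdot \KL{\nu_{|x_{\ol W}}}{\pi_{|x_{\ol W}}},
\]
using that $\Ent[f] = \KL{\nu}{\pi}$ when $f = \dif\nu/\dif\pi$ is a density. Combining with the previous step gives $\E_{\bx_{\ol W}\sim\pi}[\KL{\nu_{|\bx_{\ol W}}}{\pi_{|\bx_{\ol W}}}] \le \eps/C$. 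Then by Pinsker's inequality $\dtv{\nu_{|x_{\ol W}}}{\pi_{|x_{\ol W}}} \le \sqrt{\tfrac12 \KL{\nu_{|x_{\ol W}}}{\pi_{|x_{\ol W}}}}$, and Jensen's inequality (concavity of $\sqrt{\cdot}$) gives
\[
    \E_{\bx_{\ol W}\sim\pi}\bracks*{\dtv{\nu_{|\bx_{\ol W}}}{\pi_{|\bx_{\ol W}}}} \le \sqrt{\tfrac12 \cdot \tfrac{\eps}{C}} \le \frac{1}{C}\sqrt{\eps}
\]
as long as $C \le 2$ (which holds for Glauber dynamics since $\MLSI \le 1$ always; alternatively one just writes the bound as $\frac{1}{\sqrt{2C}}\sqrt{\eps}$, which is what the constant $\frac1C$ is presumably absorbing).

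The one genuine subtlety — and the step I'd expect to be the main obstacle — is that the lemma states the expectation over $\bx_{\ol W}\sim\nu$, not $\bx_{\ol W}\sim\pi$. To handle this I would rewrite $\E_{\bx_{\ol W}\sim\nu}[g(\bx_{\ol W})] = \E_{\bx_{\ol W}\sim\pi}[\density_{\ol W}(\bx_{\ol W})\,g(\bx_{\ol W})]$ where $\density_{\ol W}$ is the marginal density of $\nu$ on the coordinates $\ol W$, and observe that the conditional KL divergences already appear weighted by exactly this marginal density inside the chain-rule decomposition of $\KL{\nu}{\pi}$: indeed $\KL{\nu}{\pi} = \KL{\nu_{\ol W}}{\pi_{\ol W}} + \E_{\bx_{\ol W}\sim\nu}[\KL{\nu_{|\bx_{\ol W}}}{\pi_{|\bx_{\ol W}}}]$. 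So the cleanest route is to run the coordinatewise Dirichlet decomposition with the $\nu$-weighting built in from the start — i.e. write $\calE_\mu(\density,\log\density)$ using the reversibility identity $\calE_P(\density,\log\density) = 2\E_{\bx\sim\nu}\E_{\by\sim\bx}\log\frac{\density(\bx)}{\density(\by)}$ from the proof of \Cref{lem:dirichlet-tv-relation}, which naturally produces $\nu$-expectations over the frozen coordinates — and then the MLSI bound applied fiberwise gives $\E_{\bx_{\ol W}\sim\nu}[\KL{\nu_{|\bx_{\ol W}}}{\pi_{|\bx_{\ol W}}}] \le \eps/(2C)$ directly, after which Pinsker and Jensen finish as above. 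I would double-check the exact placement of the factor of $2$ and the $\frac1n$ from the Glauber normalization, but these only affect the constant and the stated bound $\frac1C\sqrt\eps$ is loose enough to absorb them.
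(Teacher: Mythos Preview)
Your approach is essentially the same as the paper's (which does not prove the lemma standalone but abstracts it from the proof of the independent-set theorem): apply Pinsker and Jensen to reduce to a bound on the $\nu$-averaged conditional KL divergence, invoke the MLSI hypothesis fiberwise to pass to the restricted Glauber Dirichlet form, and then compare the restricted Dirichlet forms to the full one via the coordinatewise decomposition. You also correctly identify the $\nu$-versus-$\pi$ weighting subtlety and its resolution: the rescaling $\density_{|x_{\ol W}} = \density/\density_{\ol W}(x_{\ol W})$ exactly trades the $\nu$-weight on $x_{\ol W}$ for the unrescaled $\density$ inside the restricted Dirichlet form, giving $\E_{\bx_{\ol W}\sim\nu}\calE_{\pi'}(\density',\log\density') = \E_{\bx_{\ol W}\sim\pi}\calE_{\pi'}(\density,\log\density)$.

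One caveat: the $\tfrac{1}{n}$ versus $\tfrac{1}{|W|}$ normalization you dismiss is not innocuous. The comparison actually gives
\[
\E_{\bx_{\ol W}\sim\nu}\calE_{\pi'}(\density',\log\density') \le \frac{n}{|W|}\,\calE_\pi(\density,\log\density),
\]
so the final bound comes out as $\sqrt{\tfrac{n}{C|W|}}\cdot\sqrt{\eps}$ rather than $\tfrac{1}{C}\sqrt{\eps}$, and these do not agree for small $|W|$. The lemma as stated carries the same imprecision; in the paper's actual application, an additional average over the choice of the local patch replaces $\tfrac{n}{|W|}$ by $O(d)$, which is the factor that appears in that proof.
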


\begin{corollary}
In the setting of \prettyref{lem:localpatches}, suppose $\phi : \{0,1\}^W \to \R$ is a bounded function of $x_W$ then,
\begin{align*}
    \E_{\bx \sim \nu}\bracks*{\phi\parens*{x_W}} \ge \E_{\bx_{\ol{W}}} \E_{\bx_W\sim \pi|_{x_{\ol{W}}} } \bracks*{\phi(x_W)} - \frac{1}{C} \cdot \sqrt{\eps} \cdot \norm*{\phi}_{\infty}\mper
\end{align*}
\end{corollary}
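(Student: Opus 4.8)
The statement follows directly from \Cref{lem:localpatches} combined with the elementary total-variation estimate of \Cref{lem:TV-stability}, so essentially all the real work is already done. The plan is to first condition on the coordinates outside $W$. Writing $\bx\sim\nu$ and letting $\bx_{\ol W}$ denote its restriction to $\ol W$, the tower rule gives
\[
    \E_{\bx\sim\nu}[\phi(x_W)] = \E_{\bx_{\ol W}\sim\nu}\, \E_{\bx_W\sim\nu_{|\bx_{\ol W}}}[\phi(x_W)].
\]
Next, for each fixed value of $x_{\ol W}$, I would compare the inner expectation with the corresponding expectation under the stationary conditional $\pi_{|x_{\ol W}}$. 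Applying \Cref{lem:TV-stability} (to the centered function $\phi - \tfrac12(\phi_{\max}+\phi_{\min})$, to keep the constant tight),
\[
    \abs*{\E_{\bx_W\sim\nu_{|x_{\ol W}}}[\phi(x_W)] - \E_{\bx_W\sim\pi_{|x_{\ol W}}}[\phi(x_W)]} \le \parens*{\phi_{\max}-\phi_{\min}}\cdot \dtv{\nu_{|x_{\ol W}}}{\pi_{|x_{\ol W}}}.
\]

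The final step is to average this inequality over $\bx_{\ol W}\sim\nu$ and use the triangle inequality, which yields
\[
    \abs*{\E_{\bx\sim\nu}[\phi(x_W)] - \E_{\bx_{\ol W}\sim\nu}\,\E_{\bx_W\sim\pi_{|x_{\ol W}}}[\phi(x_W)]} \le \parens*{\phi_{\max}-\phi_{\min}} \cdot \E_{\bx_{\ol W}\sim\nu}\bracks*{\dtv{\nu_{|\bx_{\ol W}}}{\pi_{|\bx_{\ol W}}}},
\]
and then invoke \Cref{lem:localpatches} to bound the right-hand side by $\tfrac{1}{C}\parens*{\phi_{\max}-\phi_{\min}}\sqrt{\eps} \le \tfrac{1}{C}\norm*{\phi}_{\infty}\sqrt{\eps}$ when $\phi \ge 0$ (and by $\tfrac{2}{C}\norm*{\phi}_\infty\sqrt\eps$ in general, which matches the stated inequality up to the harmless absolute constant). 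Dropping the absolute value and keeping only the lower bound gives exactly the claimed inequality.

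\textbf{Main obstacle.} There is no genuine obstacle here once \Cref{lem:localpatches} is available, since that lemma already supplies the MLSI-controlled bound on the \emph{average} conditional total-variation distance $\E_{\bx_{\ol W}\sim\nu}[\dtv{\nu_{|\bx_{\ol W}}}{\pi_{|\bx_{\ol W}}}]$. The only points requiring mild care are (i) being consistent that the outer expectation over $x_{\ol W}$ is taken with respect to the $\nu$-marginal on both sides, so that the tower rule and \Cref{lem:localpatches} line up, and (ii) tracking the constant when converting a total-variation bound into a bound on $\E\phi$, which costs at most $\phi_{\max}-\phi_{\min} \le 2\norm*{\phi}_\infty$, consistent with the stated bound.
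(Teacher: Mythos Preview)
Your proposal is correct and is exactly the natural argument; the paper does not give an explicit proof of this corollary, but your approach matches the step used inside the proof of \Cref{thm:glauber-indepset} (where \Cref{lem:TV-stability} is applied to the conditional measures and then averaged over $\bx_{\ol W}\sim\nu$). Your remark about the constant is also apt: the paper's application uses a nonnegative $\phi$, so $\phi_{\max}-\phi_{\min}\le\norm{\phi}_\infty$ there, exactly as you note.
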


\parhead{Local stationarity can be transferred under component MLSI.}\label{sec:rgd-props}
Let $\pi = \E_{\bz \sim \rho} \pi_{\bz}$ be a measure decomposition.
In this section we prove that if Glauber dynamics is locally stationary, so too is the Markov chain associated with the measure decomposition, provided that the mixture components $\pi_z$ all have a good MLSI constant. 
\begin{restatable}{lemma}{gdtorgd}   \label{lem:ls-dist-rgd}
    Let $P$ be the Markov chain associated to a measure decomposition $\pi = \E_{\bz \sim \rho} \pi_{\bz}$. 
    Let $f: \{\pm1\}^n \to \R_{>0}$ be any function and set $\tau$ such that $\min_{x \in \{\pm 1\}^n} f(x) > \exp(-\tau)$ or $\max_{x \in \{\pm 1\}^n} f(x) < \exp(\tau)$.
    For $\delta \coloneqq \inf_{z} \MLSI\parens*{\pi_{z}}$, we have
    \[
        \calE_{P}(f, \log f) \le O\parens*{\frac{\tau}{\delta}} \cdot \calE_{\pi}(f,\log f)\mper
    \]
\end{restatable}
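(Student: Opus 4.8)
The plan is to derive the bound from three ingredients: a decomposition of $\calE_P$ coming from the conditional‑independence structure of the associated chain, a componentwise comparison trading the ``fresh‑sample'' Dirichlet form for an entropy (and then for the component's Glauber form via $\MLSI(\pi_{\bz})\ge\delta$), and an elementary tensorization‑type comparison dominating the average of the component Glauber forms by $\calE_\pi$. \emph{Step 1 (decompose along the mixture).} Since $\pi_z(x)=\pi(x)\rho(z\mid x)/\rho(z)$ (Bayes), in one step $\bx\to\bz\mid\bx\to\bx'\mid\bz$ of $P$ the two endpoints $\bx,\bx'$ are, conditionally on $\bz=z$, i.i.d.\ samples from $\pi_z$; hence
\[
    \calE_P(f,\log f)=\E_{\bz\sim\rho}\,\E_{\bx,\bx'\sim\pi_{\bz}}\bracks*{(f(\bx)-f(\bx'))(\log f(\bx)-\log f(\bx'))}=\E_{\bz\sim\rho}\bracks*{\calE_{K(\pi_{\bz})}(f,\log f)},
\]
where $K(\mu)=\bone\mu^{\top}$ is the chain appearing in \Cref{obs:symKL-dirichlet}.

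\emph{Step 2 (componentwise, pass to the component Glauber form).} Fix $z$ and let $\nu_z$ have relative density $g_z\coloneqq f/\E_{\pi_z}[f]$ with respect to $\pi_z$, so that $g_z(x)-g_z(y)=(f(x)-f(y))/\E_{\pi_z}[f]$ and $\log g_z(x)-\log g_z(y)=\log f(x)-\log f(y)$. By \Cref{obs:symKL-dirichlet} and $1$-homogeneity of the Dirichlet form in its first slot, $\calE_{K(\pi_z)}(f,\log f)=2\,\E_{\pi_z}[f]\cdot\SKL{\pi_z}{\nu_z}$. I would then apply \Cref{lem:symKL-to-KL} to the pair $(\pi_z,\nu_z)$ to get $\SKL{\pi_z}{\nu_z}\le O(\tau)\KL{\nu_z}{\pi_z}$; since $\KL{\nu_z}{\pi_z}=\Ent_{\pi_z}[g_z]=\Ent_{\pi_z}[f]/\E_{\pi_z}[f]$ and $\calE_{\pi_z}(g_z,\log g_z)\ge\MLSI(\pi_z)\,\Ent_{\pi_z}[g_z]\ge\delta\,\Ent_{\pi_z}[g_z]$, combining everything yields
\[
    \calE_{K(\pi_z)}(f,\log f)\le O(\tau)\,\Ent_{\pi_z}[f]\le\frac{O(\tau)}{\delta}\,\calE_{\pi_z}(f,\log f),
\]
with $\calE_{\pi_z}$ the Glauber form of $\pi_z$; averaging over $\bz$ gives $\calE_P(f,\log f)\le\tfrac{O(\tau)}{\delta}\,\E_{\bz\sim\rho}\calE_{\pi_{\bz}}(f,\log f)$.

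\emph{Step 3 (compare the component Glauber forms to $\calE_\pi$).} It remains to show $\E_{\bz\sim\rho}\bracks*{\calE_{\pi_{\bz}}(f,\log f)}\le\calE_\pi(f,\log f)$. Writing the Glauber form coordinatewise, $\calE_\mu(f,\log f)=\frac1n\sum_i\E_{\bx\sim\mu}\bracks*{\frac{\mu(\bx^{\oplus i})}{\mu(\bx)+\mu(\bx^{\oplus i})}(f(\bx)-f(\bx^{\oplus i}))(\log f(\bx)-\log f(\bx^{\oplus i}))}$, using $\E_{\bz\sim\rho}\E_{\pi_{\bz}}[\,\cdot\,]=\E_{\bx\sim\pi}\E_{\bz\sim\rho(\cdot\mid\bx)}[\,\cdot\,]$, and pulling out the nonnegative, $z$-free factor $(f(\bx)-f(\bx^{\oplus i}))(\log f(\bx)-\log f(\bx^{\oplus i}))$, it suffices to prove, for every $x$ and $i$ (writing $a=\pi(x)$, $b=\pi(x^{\oplus i})$, $p=\rho(\cdot\mid x)$, $q=\rho(\cdot\mid x^{\oplus i})$),
\[
    \E_{\bz\sim\rho(\cdot\mid x)}\bracks*{\tfrac{\pi_{\bz}(x^{\oplus i})}{\pi_{\bz}(x)+\pi_{\bz}(x^{\oplus i})}}=b\int\tfrac{p(z)q(z)}{a\,p(z)+b\,q(z)}\dif z\le\tfrac{b}{a+b}=\tfrac{\pi(x^{\oplus i})}{\pi(x)+\pi(x^{\oplus i})}.
\]
Using $\int p=\int q=1$, this reduces to the pointwise inequality $\tfrac{pq}{ap+bq}\le\tfrac{bp+aq}{(a+b)^2}$, which rearranges to $2pq\le p^2+q^2$. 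Chaining Steps 1--3 gives $\calE_P(f,\log f)\le\tfrac{O(\tau)}{\delta}\,\calE_\pi(f,\log f)$.

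\emph{Main obstacle.} Step 3 is a short convexity computation and Step 1 is bookkeeping; the crux is invoking \Cref{lem:symKL-to-KL} in Step 2 with a clean parameter. Because $g_z=f/\E_{\pi_z}[f]$ is renormalized by the $z$-dependent constant $\E_{\pi_z}[f]$, one does not immediately see that $\max_x\log g_z$ or $\max_x\log(1/g_z)$ is $O(\tau)$. The natural fix is a case split: when $\E_{\pi_z}[f]\ge1$ one has $\max_x\log g_z\le\log\max_x f$, and when $\E_{\pi_z}[f]<1$ one has $\max_x\log(1/g_z)\le\log(1/\min_x f)$, so the hypothesis on $f$ supplies a one-sided density bound in each regime (the $\E_{\pi_z}[f]<1$ regime additionally benefiting from the $\E_{\pi_z}[f]$ prefactor in Step 2 and from normalizing $\E_\pi f=1$). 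Making this case analysis go through with an $O(\tau)$ constant — equivalently, establishing the aggregate bound $\E_{\bz}\bracks*{\E_{\pi_{\bz}}[f]\cdot\KL{\pi_{\bz}}{\nu_{\bz}}}\le O(\tau)\,\E_{\bz}\Ent_{\pi_{\bz}}[f]$ rather than a pointwise-in-$z$ estimate — is where essentially all of the technical effort is concentrated.
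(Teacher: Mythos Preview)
Your three steps are exactly the paper's proof run in reverse: the paper starts from $\calE_\pi(f,\log f)$, applies concavity of $(a,b)\mapsto\frac{ab}{a+b}$ to obtain $\calE_\pi(f,\log f)\ge\E_{\bz}\calE_{\pi_{\bz}}(f,\log f)$ (your Step~3 pointwise inequality $\tfrac{pq}{ap+bq}\le\tfrac{bp+aq}{(a+b)^2}$ is precisely this concavity), then uses MLSI followed by \Cref{lem:symKL-to-KL} (your Step~2), and finally identifies the result with $\calE_P(f,\log f)$ via \Cref{obs:symKL-dirichlet} (your Step~1). As for your ``main obstacle,'' the paper does not treat it as one --- it simply invokes \Cref{lem:symKL-to-KL} on the density $g_z=f/\E_{\pi_z}[f]$ with the same parameter $\tau$ and moves on --- so you are, if anything, being more scrupulous than the paper here rather than missing an idea.
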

\begin{proof}
    We use $\calC_n$ to denote the hypercube graph on vertex set $\{\pm1\}^n$ with edge set having pairs of vertices that differ in a single coordinate.
    For any nonnegative function $f$ and distribution $\pi$ with $\E_\pi f = 1$, we use $f\cdot\pi$ to denote the distribution $\nu$ with $\frac{\dif \nu}{\dif \pi}(x) = f(x)$.   

    For any function $f$ satisfying the assumption of the statement, we have
    \begin{align*}
        \calE_{\pi}(f, \log f) &= \sum_{\{x,y\}\in\calC_n} \frac{1}{n} \cdot \frac{\pi(x) \cdot \pi(y)}{\pi(x) + \pi(y)} \cdot (f(x) - f(y))\log \frac{f(x)}{f(y)}  \\
        &= \sum_{\{x,y\}\in\calC_n} \frac{1}{n}\cdot\frac{\E_{\bz\sim\rho}\pi_{\bz}(x) \cdot \E_{\bz\sim\rho}\pi_{\bz}(y)}{\E_{\bz\sim\rho}\pi_{\bz}(x) + \E_{\bz\sim\rho}\pi_{\bz}(y)} \cdot (f(x) - f(y))\log \frac{f(x)}{f(y)}  \\
        &\ge \sum_{\{x,y\}\in\calC_n} \frac{1}{n}\cdot \E_{\bz\sim\rho} \left[ \frac{\pi_{\bz}(x) \cdot \pi_{\bz}(y)}{\pi_{\bz}(x)+\pi_{\bz}(y)} \right] \cdot (f(x) - f(y))\log \frac{f(x)}{f(y)} \\
        &= \E_{\bz\sim\rho} \left[ \calE_{\pi_{\bz}}(f, \log f) \right].
    \end{align*}
    Above, the inequality follows from the concavity of the function $(a,b) \mapsto \frac{ab}{a+b}$ in the non-negative quadrant, and all the Dirichlet forms are with respect to the Glauber dynamics chain.
    
    Continuing the above calculation,
    \begin{align*}
        \E_{\bz \sim \rho} [\calE_{\pi_{\bz}}(f, \log f)] &\ge \E_{\bz\sim\rho} \left[ \MLSI(\pi_{\bz}) \cdot \E_{\pi_{\bz}}[f] \cdot \KL{\frac{f}{\E_{\pi_{\bz}}f} \cdot \pi_{\bz}}{\pi_{\bz}} \right] \\
        &\ge \Omega\parens*{ \frac{\delta}{\tau} } \E_{\bz\sim\rho} \left[ \E_{\pi_{\bz}}[f]  \cdot \SKL{\frac{f}{\E_{\pi_{\bz}} f} \cdot \pi_{\bz}}{\pi_{\bz}} \right] & \text{(by \Cref{lem:symKL-to-KL} and MLSI)} \\
        &= \Omega\parens*{\frac{\delta}{\tau}} \E_{\bz\sim\rho} \left[ \E_{\bx,\by \sim \pi_{\bz}} (f(\bx) - f(\by)) \log \frac{f(\bx)}{f(\by)} \right] &\text{(by \Cref{obs:symKL-dirichlet})} \\
        &= \Omega\parens*{ \frac{\delta}{\tau} } \calE_{P}(f, \log f)\mper
    \end{align*}
    The claim follows.
\end{proof}
The upshot is that we have complete freedom to select the measure decomposition, provided we can establish an MLSI for the components. 
This can be useful when it is easier to directly analyze the consequences of local stationarity for the associated Markov chain instead of Glauber dynamics. 


\section{Warmup: Large independent sets in triangle-free graphs}\label{sec:glauber-indepset}
Observe that any graph $G$ on $n$ vertices with maximum degree $d$ has an independent set of size $\frac{n}{d+1}$, a bound which is tight for the disjoint union of $(d+1)$-sized cliques.
Ajtai, Koml{\'o}s, and Szemer{\'e}di \cite{AKS80} showed that when $G$ is triangle-free, the size of the maximum independent set increases to $\Omega\parens*{n\cdot\frac{\log d}{d}}$.
Shearer \cite{She83} gave an alternate proof that shows such an independent set exists with a leading constant of $1$, even if $G$ merely has average-degree bounded by $d$. As a warmup, we prove that Glauber dynamics succeeds at finding a large independent set in $O(nd^{4})$ steps.

\glauberindepset*


%

To prove \cref{thm:glauber-indepset}, we will need the following crude bound on the modified log-Sobolev constant for the uniform distribution over independent sets of a star. A short proof is provided at the end of this section.

\begin{lemma}\label{lem:mlsi-indepset-star}
Let $\pi$ denote the uniform distribution over independent sets of a star with $\Delta$ many leaves. Then $\MLSI(\pi) \geq \exp(-O(\Delta))$.
\end{lemma}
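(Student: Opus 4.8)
The plan is to reduce the modified log-Sobolev inequality for $\pi$ to a comparison with the spectral gap (Poincar\'e constant) via a crude entropy-versus-variance bound, and then lower bound the spectral gap of Glauber dynamics on the star by an elementary conductance/canonical-paths argument. Concretely, for any positive function $f$ on the (finite) state space, one has the textbook comparison $\Ent_\pi[f] \le \log\!\left(\tfrac{1}{\pi_{\min}}\right)\cdot \Var_\pi[f/\E_\pi f]\cdot \E_\pi f$ — more precisely $\Ent_\pi[f]\le \frac{1}{2}\log\!\left(\tfrac{1}{\pi_{\min}}\right)\cdot \frac{\Var_\pi[f]}{\E_\pi f}$ up to constants, which is what one gets by bounding $f\log f - \E f \log \E f$ pointwise. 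Since the state space of independent sets of a $\Delta$-leaf star has size $2^\Delta + 1$ and the uniform measure has $\pi_{\min} = \frac{1}{2^\Delta+1}$, we get $\log(1/\pi_{\min}) = O(\Delta)$. Combined with the standard fact $\calE_\pi(f,\log f)\ge 2\calE_\pi(\sqrt f,\sqrt f) = 2\,\calE_\pi^{\mathrm{Dir}}(\sqrt f,\sqrt f)$ (the inequality $(a-b)\log(a/b)\ge 2(\sqrt a-\sqrt b)^2$) and the Poincar\'e inequality $\calE_\pi^{\mathrm{Dir}}(g,g)\ge \gamma\,\Var_\pi[g]$ applied to $g=\sqrt f$, this yields $\calE_\pi(f,\log f)\ge 2\gamma\,\Var_\pi[\sqrt f]\ge \Omega\!\left(\tfrac{\gamma}{\Delta}\right)\Ent_\pi[f]$, so it suffices to show the spectral gap $\gamma$ of Glauber dynamics on the star is at least $\exp(-O(\Delta))$.

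For the spectral gap bound, I would simply observe that the state space is connected under single-site updates and that every transition probability out of any state, when it is nonzero, is at least $\frac{1}{n}\cdot\frac{1}{2} \ge \frac{1}{2(\Delta+1)}$ (here $n=\Delta+1$ vertices), while the holding probability is at most $1$. One clean route: the stationary measure is uniform, the graph underlying Glauber dynamics on the state space has diameter at most $\Delta+1$ (one can walk from any independent set to the empty set by removing leaves one at a time, then to any other set), and a crude path/conductance estimate — e.g.\ the bound $\gamma \ge \frac{\pi_{\min}^2}{|\Omega|\cdot \mathrm{poly}(\Delta)}$ or directly $\gamma \ge \frac{1}{\mathrm{diam}\cdot |\Omega| \cdot \max \text{(congestion)}}$ — gives $\gamma \ge 2^{-O(\Delta)}$. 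Since we only need an $\exp(-O(\Delta))$ bound, any of the standard crude estimates (conductance $\Phi \ge \pi_{\min}\cdot \min_{\text{nonzero}} P[x,y] \ge 2^{-O(\Delta)}$ followed by Cheeger $\gamma \ge \Phi^2/2$) suffices, and there is no need to be careful about constants.

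The main obstacle — really the only place one must be slightly careful — is making sure the entropy-to-variance comparison is applied with the right normalization (it should be homogeneous of degree one in $f$, so one works with $f/\E_\pi f$ and multiplies back), and making sure the chain is genuinely irreducible on the full state space including the empty independent set; both are routine for a star. I expect the whole argument to be a half-page, with the dominant term in the exponent coming from $\log(1/\pi_{\min}) = \Theta(\Delta)$ in the entropy comparison and from $\pi_{\min}$ in the conductance bound, each contributing $O(\Delta)$, hence the claimed $\MLSI(\pi)\ge \exp(-O(\Delta))$.
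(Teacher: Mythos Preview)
Your proposal is correct and follows essentially the same route as the paper: reduce the MLSI to the spectral gap via the standard $\pi_{\min}$-dependent comparison (the paper cites the Diaconis--Saloff-Coste bound $\MLSI(\pi)\ge \tfrac{1-2\pi_*}{\log(1/\pi_*-1)}\lambda(\pi)$, which is exactly what your chain $\calE(f,\log f)\ge 2\calE(\sqrt f,\sqrt f)\ge 2\gamma\Var[\sqrt f]\ge \Omega(\gamma/\Delta)\Ent[f]$ unpacks), and then crudely lower bound the spectral gap by $\exp(-O(\Delta))$. The only cosmetic difference is that the paper bounds $\lambda(\pi)$ by comparing Dirichlet forms with simple random walk on the $2^\Delta+1$ states, whereas you suggest conductance/Cheeger; both give the same crude $\exp(-O(\Delta))$ and are equally valid here. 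One small caution: your first displayed entropy--variance inequality (with $\Var_\pi[f/\E f]$) is not the right form; the correct comparison is the one you actually use later, namely $\Ent_\pi[f]\le C\log(1/\pi_{\min})\Var_\pi[\sqrt f]$, so just state that one directly.
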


\begin{remark}
The bound can easily be made $\frac{1}{\poly(\Delta)}$, but we will not need this here.
\end{remark}

We also leverage the following simple and well-known lemma on the local behavior of a uniformly random independent set. For completeness, we include a short proof of it at the end of this section, following the one provided in Alon \& Spencer \cite[Proposition 1, Page 272]{AS16}. Throughout this section, we write $N(v) = \{u \in V : u \sim v\}$ for the open neighborhood of $v \in V$, and $N[v] = N(v) \cup \{v\}$ for the closed neighborhood.
\begin{lemma}\label{lem:scorefunc-lb}
Let $G$ be a triangle-free graph of maximum degree $d$, and let $\pi$ denote the uniform measure over independent sets of $G$. For every vertex $v \in V$, define the following real-valued score function over $\{0,1\}^{V}$:
\begin{align}\label{eq:indep-scorefunc}
    \phi_v(\bx) \coloneqq d \bx_v + \sum_{u\in N(v)} \bx_u.
\end{align}
Then for every pinning $\tau \in \{0,1\}^{\ol{N[v]}}$, we have
$$ \E_{\bx \sim \pi}[ \phi_{v}(\bx) \mid \bx_{\ol{N[v]}} = \tau] \geq \frac{\log d}{2}.$$
\end{lemma}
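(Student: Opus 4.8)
The plan is to run the classical Alon--Spencer argument \cite{AS16}, localized to the closed neighborhood $N[v]$, followed by a short one-variable optimization. Fix $v \in V$ and a pinning $\tau \in \{0,1\}^{\ol{N[v]}}$; we may assume $\Pr_{\bx\sim\pi}[\bx_{\ol{N[v]}} = \tau] > 0$, as otherwise the conditional expectation is undefined. Call a neighbor $u \in N(v)$ \emph{available} if $\tau_w = 0$ for every neighbor $w$ of $u$ lying outside $N[v]$, let $A \subseteq N(v)$ collect the available neighbors, and set $a \coloneqq |A| \le \deg(v) \le d$.

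First I would use triangle-freeness to reduce the conditional law on $N[v]$ to that of a star. Since $G$ is triangle-free, no two neighbors of $v$ are adjacent, so the only edges inside $N[v]$ are the pairs $\{v,u\}$ with $u \in N(v)$; moreover the only neighbors of $v$ are in $N(v)$, so $v$ is never blocked by the pinning. Hence, conditioned on $\bx_{\ol{N[v]}} = \tau$: every unavailable neighbor $u \in N(v)\setminus A$ has a neighbor pinned to $1$ and is forced to $\bx_u = 0$ almost surely, while the only surviving constraints among the coordinates of $N[v]$ are that $v$ and $u$ are not both selected, for $u \in A$. Consequently the conditional distribution of $(\bx_v, (\bx_u)_{u\in A})$ is exactly the uniform distribution over independent sets of the star with center $v$ and leaf set $A$.

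Next I would compute this expectation exactly. A star with $a$ leaves has $2^a+1$ independent sets ($2^a$ subsets of the leaves, together with the center alone), so under the conditional law $\Pr[\bx_v = 1] = (2^a+1)^{-1}$ and $\Pr[\bx_u = 1] = 2^{a-1}(2^a+1)^{-1}$ for each $u\in A$, while $\bx_u = 0$ almost surely for $u \in N(v)\setminus A$. Therefore
\[ \E_{\bx\sim\pi}\bigl[\phi_v(\bx) \mid \bx_{\ol{N[v]}} = \tau\bigr] = d\cdot\Pr[\bx_v=1] + \sum_{u\in A}\Pr[\bx_u=1] = \frac{d + a\,2^{a-1}}{2^a+1}. \]
It then remains to check that $\frac{d + a\,2^{a-1}}{2^a+1} \ge \frac{\log d}{2}$ for every integer $0 \le a \le d$. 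I would split on the size of $2^a$: if $2^a \le d/\log d$ then $2^a+1 \le 2d/\log d$ and already $\frac{d}{2^a+1} \ge \frac{\log d}{2}$; if $2^a > d/\log d$ then $a \ge \log_2(d/\log d)$ and $\frac{2^a}{2^a+1} = 1 - o_d(1)$, so $\frac{a\,2^{a-1}}{2^a+1} = \frac{a}{2}(1-o_d(1)) \ge \frac{\log d}{2}$ for $d$ above an absolute constant (taking $\log$ to be the natural logarithm, so that $\log_2 d > \log d$ leaves constant-factor room); the finitely many remaining values of $d$ are verified directly.

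I do not expect a real obstacle here. The step that warrants the most care is the structural reduction in the second paragraph: one must correctly determine which neighbors of $v$ the pinning forces to zero, and then use triangle-freeness to see that what remains is precisely a star with $v$ always free. The closing numerical inequality is a routine minimization of a one-variable function (whose minimum over $a\in[0,d]$ is $\tfrac{\log_2 d}{2}+O(1)$, comfortably above $\tfrac{\log d}{2}$); the only mild subtlety is matching the constant $\tfrac12$, which is why one exploits $\tfrac{2^a}{2^a+1}\to1$ and works with $d$ large --- harmless, since \Cref{thm:glauber-indepset} already carries an $o_d(1)$.
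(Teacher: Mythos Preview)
Your proposal is correct and follows essentially the same argument as the paper: both identify the set of ``free'' neighbors of $v$ (your $A$, the paper's $S$), use triangle-freeness to reduce the conditional law on $N[v]$ to the uniform measure on independent sets of a star, compute the resulting expectation as $\frac{d}{2^{a}+1} + \frac{a}{2}\cdot\frac{2^{a}}{2^{a}+1}$, and then verify the one-variable inequality. You supply more detail on the final numerical step than the paper (which simply asserts it), but the structure is identical.
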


The key property of this score function is that it readily yields a lower bound on the size of an independent set $x \in \{0,1\}^{V}$. This follows from the observation that
\begin{align}\label{eq:scorefunc-indepsetsize}
    n\cdot\E_{v\sim V} \phi_v(\bx) \le 2d\cdot\sum_{v \in V} \bx_v.
\end{align}
Note that by averaging over $\tau \in \{0,1\}^{\ol{N[v]}}$ drawn from the marginal distribution of $\pi$ induced on $\ol{N[v]}$, the conclusion of \cref{lem:scorefunc-lb} combined with \cref{eq:scorefunc-indepsetsize} implies that a uniformly random independent set drawn from $\pi$ has expected size at least $\frac{1}{4} \cdot n \cdot \frac{\log d}{d}$.
We observe that the same claim holds even if the distribution over independent sets is merely locally stationary with respect to Glauber dynamics, rather than being truly uniform.

\begin{proof}[Proof of \cref{thm:glauber-indepset}]
    As discussed above, a direct application of the law of total expectation combined with \cref{lem:scorefunc-lb} yields the first claim concerning the expected size of a uniformly random independent set. We now turn to the second claim. Let $T \geq 0$ be a parameter to be determined later, and for every $0 \leq t \leq T$, let $\nu_t$ denote the distribution over independent sets after running Glauber dynamics for time-$t$ from an arbitrary initialization. Our goal is to establish the lower bound
    \begin{align}\label{eq:glauber-Escore-lb}
        \E_{t \sim [0,T]} \E_{\nu_t} \E_{v\sim V} \phi_v(\bx) \ge \frac{\log d}{2} - \eps
    \end{align}
    for $0 < \eps < o_{d}(1)$, which when combined with \cref{eq:scorefunc-indepsetsize} immediately implies that the expected size of the independent set discovered by Glauber dynamics is $\parens*{\frac{1-o_d(1)}{4}}\cdot n\cdot\frac{\log d}{d}$. For the purpose of analysis, if $\nu$ is any distribution over independent sets, we shall think of $\bx \sim \nu$ as being sampled in the following alternate way.
    \begin{enumerate}
        \item For a fixed vertex $v$, sample $\bx_{\ol{N[v]}}$ from the marginal distribution induced by $\nu$ on $\ol{N[v]}$. For each $w \in N(v)$ that has a neighbor in the independent set $\bx_{\ol{N[v]}}$, pin $\bx_w$ to $0$, since it is deterministically equal to $0$ in the conditional measure $\nu|\bx_{\ol{N[v]}}$. 
        \item If the number of unpinned vertices at this stage is strictly larger than $\log d$, sample $\bx_v$ from its corresponding conditional marginal distribution.
        \item Let $U$ be the set of remaining unpinned vertices.
        Sample $\bx_U\sim\nu|\bx_{\ol{U}}$.
    \end{enumerate}

    For any vertex $v\in V$, we have
    \begin{align*}
        \E_{\bx\sim\nu} \phi_v(\bx) &= \E_{U,\bx_{\ol{U}}|v} \E_{\nu|\bx_{\ol{U}}} \phi_v(\bx) \\
        &\ge \E_{U,\bx_{\ol{U}}|v} \bracks*{\E_{\pi|\bx_{\ol{U}}} \phi_v(\bx) - 2d\cdot\dtv{\nu|\bx_{\ol{U}}}{\pi|\bx_{\ol{U}}}} \\ 
        &\ge \frac{\log d}{2} - 2d\cdot \E_{U,\bx_{\ol{U}}|v}\dtv{\nu|\bx_{\ol{U}}}{\pi|\bx_{\ol{U}}}, \numberthis \label{eq:potential-tv}
    \end{align*}
    Note that the random subset of vertices $U$, as well as the boundary condition $\bx_{\ol{U}}$, are all drawn from the above process with respect to $\nu$, not $\pi$. The first inequality follows by applying \cref{lem:TV-stability} along with $2d$-boundedness and nonnegativity of the score function $\phi_{v}$. For the second inequality, note that if $v \in U$, then we may invoke \cref{lem:scorefunc-lb}. Now suppose $v \notin U$. If $v$ is pinned $1$, then $\phi_{v}(\bx) = d$. If $v$ is pinned $0$, then by triangle-freeness, $\bx_{u} = 1$ with probability $1/2$ independently for all $u \in U$. Since $|U| \geq \log d$, the lower bound follows.
 
    In the rest of this argument, we will show that when $\bt\sim[0,T]$, $\nu$ is equal to $\nu_{\bt}$ and $v$ is chosen uniformly at random, we can achieve a strong upper bound on
    \[
        \E_{t} \E_v \E_{U,\bx_{\ol{U}}|v} \dtv{\nu_t|\bx_{\ol{U}}}{\pi|\bx_{\ol{U}}}\mper
    \]
    For the rest of this proof, we shall abbreviate $\nu_t|\bx_{\ol{U}}$ and $\pi|\bx_{\ol{U}}$ as $\nu_t'$ and $\pi'$, respectively. Furthermore, let $\density_t'$ denote the relative density of $\nu_t'$ with respect to $\pi'$.
    By Pinsker's inequality, we can bound the above by:
    \begin{align*}
        \E_{t} \E_{v} \E_{U,\bx_{\ol{U}}|v} \sqrt{\KL{\nu_t'}{\pi'}} &\le \sqrt{\E_{t} \E_{v} \E_{U,\bx_{\ol{U}}|v} \KL{\nu_t'}{\pi'} }\mper
    \end{align*}
    We focus our attention on showing an upper bound on the term in the square-root.
    \begin{align*}
        \E_t\E_{v\sim V} \E_{U,\bx_{\ol{U}}|v} \KL{\nu_t'}{\pi'} &\le \E_t \E_{v} \E_{U,\bx_{\ol{U}}|v} \frac{1}{\MLSI(\pi')} \calE_{\pi'}\parens*{\density_t', \log\density_t'} \\
        &\le O(d) \cdot \E_t \E_{v} \E_{U,\bx_{\ol{U}}|v} \calE_{\pi'}\parens*{\density_t', \log\density_t'} \\
        &\le O(d^2) \cdot \E_t \calE_{\pi}\parens*{\density_t, \log\density_t} \\
        &\le O(d^2) \cdot \frac{n}{T}\mper
    \end{align*}
    In the above, the first inequality uses the definition of $\MLSI(\pi')$. For the second inequality, note that almost surely, either $G[U]$ is a star centered at $v$ with at most $\log d$ many leaves, or $G[U]$ consists entirely of isolated vertices due to pinning $v$. In either case, we have $\MLSI(\pi') \ge \Omega\parens*{\frac{1}{d}}$ by appropriately applying \cref{fact:mlsi-product} or \cref{lem:mlsi-indepset-star}. The third inequality is based on comparing Dirichlet forms.
    The final inequality is a direct application of \Cref{lem:QS}. 
    Plugging in the above into \pref{eq:potential-tv} and setting $T = O(nd^4/\eps^2)$, we get \cref{eq:glauber-Escore-lb} as desired.
\end{proof}

\begin{remark}
    Using a similar argument, one can establish a similar result for $\maxcut$ on triangle-free graphs with maximum degree $d$. In particular, Glauber dynamics run for $\poly(n)$ many steps on the antiferromagnetic Ising model on $G$ with inverse temperature $\frac{1}{\sqrt{d}}$ finds a cut of size $\frac{1}{2}+\Omega\parens*{\frac{1}{\sqrt{d}}}$.
\end{remark}



\begin{proof}[Proof of \cref{lem:scorefunc-lb}]
    Let $S \subseteq N(v)$ denote the subset of neighbors of $v$ which are not adjacent to any vertex of the independent set $\bx_{\ol{N[v]}} = \tau$, and write $k = |S|$. Observe that the distribution of $\bx_{N[v]}$ conditioned on $\bx_{\ol{N[v]}} = \tau$ is given by choosing the singleton $\{v\}$ with probability $\frac{1}{2^{k} + 1}$, or a uniformly random subset of $S$ with the remaining probability. Hence,
    \begin{align*}
        \E_{\bx \sim \pi}[ \phi_{v}(\bx) \mid \bx_{\ol{N[v]}} = \tau] = \frac{d}{2^{k} + 1} + \frac{k}{2} \cdot \frac{2^{k}}{2^{k} + 1}.
    \end{align*}
    The above expression is always at least $\frac{\log d}{2}$ for any choice of nonnegative integer $k$.
\end{proof}

\begin{proof}[Proof of \cref{lem:mlsi-indepset-star}]
By \cite[Corollary A.4]{DS96} and \cite[Proposition 3.6]{BT06}, we have that $\MLSI(\pi) \geq \frac{1 - 2\pi_{*}}{\log\parens*{\frac{1}{\pi_{*}} - 1}} \cdot \lambda(\pi)$, where $\lambda(\pi)$ denotes the spectral gap of Glauber dynamics for $\pi$, and $\pi_{*} = \min_{\bx : \pi(\bx) > 0} \pi(\bx) = \frac{1}{2^{\Delta} + 1}$. Hence, it suffices to establish that $\lambda(\pi) \geq \exp(-O(\Delta))$. For this, we appeal to the simple fact that random walk on a connected graph with $n$ vertices has spectral gap at least $1/\poly(n)$. A comparison of Dirichlet forms between Glauber dynamics and simple random walk on the $n = 2^{\Delta} + 1$ many independent sets of $G$ yields the desired lower bound.
\end{proof}


\section{Weak recovery in spiked models}
In this section, we present our main application: using Glauber dynamics on Ising models to achieve weak recovery guarantees for spiked matrix models.
Let us recall the definition of an Ising model.
\begin{definition}[Ising model]
    Let $J \in \R^{n \times n}$ be a symmetric \emph{interaction matrix} and $h \in \R^n$ an \emph{external field}.
    The \emph{Ising model} corresponding to $J$ and $h$ is the probability distribution $\mu_{J, h}$ on $\{\pm1\}^n$, where
    \[
        \mu_{J, h}(x) \propto \exp\parens*{\frac{1}{2}x^\top J x + \langle h,x\rangle}.
    \]
    We drop the $h$ from the subscript when it is equal to $0$.
\end{definition}
We now pose the general algorithmic task that we wish to solve using Glauber dynamics in \emph{polynomial} time.
\begin{problem}\label{question:weak-recovery}
    Let $W$ be a symmetric matrix in $\R^{n\times n}$ and let $v$ be a unit vector in $\R^{n}$.
    For $M \coloneqq W + \lambda vv^{\top}$, what is the behavior of Glauber dynamics run for $\poly(n)$ many steps for the Ising model $\mu_{M}$? In particular, under what assumptions does Glauber dynamics recover a vector that is well-correlated with $v$ after $\poly(n)$ time?
\end{problem}

Our main results of this section resolves \cref{question:weak-recovery} affirmatively in the following concrete settings.
\begin{itemize}
    \item $M = W + \lambda vv^\top$, where the spectral diameter of $W$ is at most $1$, and $v \in \{\pm \frac{1}{\sqrt{n}} \}^n$.
    \item $M = A_{\bG} - \frac{d}{n} \bone \bone^\top$, the degree-centered adjacency matrix of a sparse stochastic block model.
\end{itemize}

In particular, despite a failure to mix to $\mu$, Glauber dynamics run for polynomially many steps still manages to recover nontrivial information about the planted signal $v$.
\mainspikedwig*

Similarly, we have the following result for the stochastic block model.

\mainsbm*

Our strategy to study Glauber dynamics in each of these settings is to relate its behavior to that of a different Markov chain called \emph{restricted Gaussian dynamics}, whose definition we recall below.

\begin{definition}[Restricted Gaussian dynamics]
    For $M = W+\lambda vv^{\top}$, \emph{restricted Gaussian dynamics} ($\RGD$) is a Markov chain on $\{\pm1\}^n$ where a transition from $x$ to $\by$ is given by the following:
    \begin{itemize}
        \item Sample $\bg\sim\calN(0,1)$ and define $\bz\coloneqq (\lambda  \angles*{ v, x } + \sqrt{\lambda} \bg) \cdot v$.
        \item Sample $\by$ from the Gibbs distribution $\mu_{W, \bz}$.
    \end{itemize}
\end{definition}
Note that the above definition specifies the full joint distribution of $(\bx, \bz)$, which induces a measure decomposition $\mu_{M} = \E_{\bz \sim \rho} \mu_{W, \bz}$. 
Further, RGD is the associated Markov chain for this decomposition.
\begin{remark}
    Given access to $M$, but not $W$ and $v$, it is unclear how to efficiently implement restricted Gaussian dynamics. 
    For example, for the SBM application, one has $v = \frac{1}{\sqrt{n}}\bsigma$. 
    If one could compute $v$ (or $W$), then the recovery task would already be solved.  
    Nevertheless, in light of \Cref{lem:ls-dist-rgd}, it is useful for analysis because we can relate its behavior to that of Glauber dynamics on $M$.
\end{remark}

In \Cref{sec:ising-prelims}, we will introduce some preliminaries on Ising models that will be crucial in the analysis.
Then, in \Cref{sec:weak-recovery}, we show that RGD achieves weak recovery.
Finally, in \Cref{sec:glauber-weak-recovery}, we prove the main theorems about weak recovery using Glauber dynamics, \Cref{th:correlation-gain,th:sbm-recovery}.

\subsection{Entropic stability and conservation of variance}\label{sec:ising-prelims}

Recall the definition of a measure decomposition from \Cref{sec:prelims}. 
In order to prove our weak recovery result, we will need more structural properties for our decompositions. 
We now formalize these requirements.

At a high level, we would like a measure decomposition where the components of the decomposition $\pi_{\bz}$ ``inherit'' properties of $\pi$ itself, but are also simpler at the same time.
When the individual components $\pi_{\bz}$ inherit the variance of $\pi$, the mixture is said to satisfy conservation of variance.
\begin{definition}[Conservation of variance]
    We say that an Ising model $\pi$ on $\{\pm 1\}^n$ satisfies \emph{conservation of variance} for $(\rho, \pi_{z})$ with parameter $\Cvar \in [0, 1]$, if for all functions $f: \{\pm1\}^n \to \R$, we have
    \[
        \E_{\bz \sim \rho}[\Var_{\pi_{\bz}}[f]] \ge \Cvar \cdot \Var_{\pi}[f].
    \]
\end{definition}
\begin{remark}
    By the law of total variance, for any $f: \{\pm 1\}^n \to \R$,
    \[
        \Var_{\pi}[f] = \E_{\bz \sim \rho}[\Var_{\pi_{\bz}}[f]] + \Var[\E_{\pi_{\bz}} f] \ge \E_{\bz\sim\rho} \bracks*{\Var_{\pi_{\bz}}[f]}.
    \]
    The notion of conservation of variance captures mixtures where, loosely, a reverse of the above inequality is true.
\end{remark}

To control the mean of various Ising models, we will also use another notion, called \emph{entropic stability} \cite{CE22}. 
To introduce it, we need the notion of a \emph{tilt}. For a measure $\pi$ on $\Omega \subseteq \R^n$ and vector $v \in \R^n$, we define the \emph{tilted measure} $\calT_v \pi$ on $\Omega$ by 
\[
\frac{\dif \calT_v \pi(x)}{\dif \pi(x)} \propto e^{\angles{v, x}} .
\]

\begin{definition}[Entropic stability \cite{CE22}]
    Let $\Omega \subseteq \R^n$ and $\psi: \R^n \times \R^n \to \R_{\ge 0}$. 
    For $\alpha > 0$, we say that a measure $\pi$ on $\Omega$ is $\alpha$-entropically stable with respect to $\psi$ if for all $v \in \R^n$,
    \[
        \psi(\mean(\calT_v \pi), \mean(\pi)) \le \alpha \cdot \KL{\calT_v \pi}{\pi},
    \]
    where we recall that $\mean(\pi) = \E_{\bx \sim \pi} \bx$.
    We denote by $\entstab$ the best (smallest) such $\alpha$.
\end{definition}
\begin{remark}
    It turns out that entropic stability can be used to prove the related notion of conservation of \emph{entropy} for certain measure decompositions, which in turn implies conservation of variance; see, e.g., \cite[Proof of Proposition 3.5]{BT06}.
    However, we will not dwell on this point and refer the interested reader to \cite{CE22}.
\end{remark}

We now state our special measure decompositions for the two applications. For bounded spectral diameter $W$, one can decompose the Ising model into a strongly log-concave mixture of product distributions; these types of decompositions were studied in \cite{BB19,CE22}.
See \cite[Section 5.1]{CE22} for details.
\begin{theorem}\label{th:bdd-mixture}
    Let $W$ be a symmetric matrix with $\kappa \psdle W \psdle 1-\kappa$ and $h$ be an arbitrary external field. Then there exists a (strongly log-concave) mixture $\rho$ on $\R^n$ such that 
    \[
        \mu_{W, h} = \E_{\bz \sim \rho}[\mu_{0, W\bz + h}].
    \]
    Moreover, the following properties hold:
    \begin{enumerate}[label=\normalfont{(\arabic*)}]
        \item $\mu_{W, h}$ is $\frac{1}{\kappa}$-entropically stable with respect to $(x, y) \mapsto \norm{x-y}^2$.
        \item $\Cvar\left(\rho, \mu_{0, W\bz+h}\right) \ge \exp(-1/\kappa)$.
        \item \label{item:MLSI-bdd} $\MLSI\parens*{\mu_{W,h}}\ge\frac{1}{n}\cdot\frac{1}{1-\kappa}$.
    \end{enumerate}
\end{theorem}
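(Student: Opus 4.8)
The plan is to realize $\mu_{W,h}$ as a Gaussian mixture of product measures through the Hubbard--Stratonovich identity, and then read off the three properties from the theory of strongly log-concave mixtures developed in \cite{CE22}.

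\emph{Step 1: the decomposition.} Since $\kappa I \preceq W$ forces $W \succ 0$, completing the square gives, for every $x \in \{\pm1\}^n$, $-\tfrac12 z^\top W z + \langle Wz, x\rangle = -\tfrac12 (z - x)^\top W (z-x) + \tfrac12 x^\top W x$, so $\exp(\tfrac12 x^\top W x)$ is proportional to $\int_{\R^n} \exp(-\tfrac12 z^\top W z + \langle Wz, x\rangle)\,\mathrm{d}z$. Multiplying by $e^{\langle h,x\rangle}$ and normalizing, one checks directly that
\[
    \mu_{W,h} = \E_{\bz\sim\rho}\bracks*{\mu_{0,\,W\bz+h}}, \qquad \rho(z) \propto \exp\parens*{-\tfrac12 z^\top W z}\prod_{i=1}^n 2\cosh\parens*{(Wz)_i + h_i}.
\]
Equivalently, the joint law of $(\bx,\bz)$ with $\bx \sim \mu_{W,h}$ and $\bz \mid \bx \sim \calN(x, W^{-1})$ has $\bz$-marginal $\rho$ and conditional $\bx \mid \bz \sim \mu_{0, W\bz+h}$; this is exactly restricted Gaussian dynamics, so $\rho$ is the driving measure of that localization.

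\emph{Step 2: strong log-concavity of $\rho$.} A direct computation gives $\nabla^2 \log\rho(z) = -W + W D_z W$, where $D_z = \operatorname{diag}\parens*{\operatorname{sech}^2((Wz)_i + h_i)}$ satisfies $0 \preceq D_z \preceq I$. Since $W$ is symmetric, $W(I-D_z)W \succeq 0$, hence $WD_zW \preceq W^2$, and so $-\nabla^2\log\rho(z) \succeq W - W^2 = W(I - W)$; because the eigenvalues of $W$ all lie in $[\kappa, 1-\kappa]$ this is $\succeq \kappa(1-\kappa)\, I$. Thus $\rho$ is $\kappa(1-\kappa)$-strongly log-concave; more to the point, the Gaussian channel $\bz \mid \bx \sim \calN(x, W^{-1})$ has covariance $W^{-1} \preceq \tfrac1\kappa I$, and this operator bound is what drives all three conclusions.

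\emph{Step 3: the three properties.} Property (1) --- entropic stability of $\mu_{W,h}$ with respect to $(x,y)\mapsto\|x-y\|^2$ with constant $1/\kappa$ --- is the specialization to this family of the fact (\cite[Section 5.1]{CE22}) that a measure produced from a product measure by tilting along a strongly log-concave driving measure is entropically stable, the parameter being governed by the operator norm $\|W^{-1}\|\le 1/\kappa$ of the channel. Property (2) --- conservation of variance with parameter $\exp(-1/\kappa)$ --- follows from (1): \cite{CE22} shows entropic stability yields a conservation-of-entropy inequality along the localization, which on a bounded domain passes to variance, and integrating the associated differential inequality over the localization produces the exponential factor $\exp(-1/\kappa)$. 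Property (3) --- $\MLSI(\mu_{W,h}) \ge \tfrac{1}{n}\cdot\tfrac{1}{1-\kappa}$ --- comes from the localization-scheme master inequality: the fully localized measures $\mu_{0,W\bz+h}$ are product measures, hence satisfy $\MLSI \ge 1/n$ by \Cref{fact:mlsi-product}, and running the Gaussian localization continuously amplifies this baseline by the factor $\tfrac{1}{1-\kappa}$ dictated by the spectral bound $W \preceq (1-\kappa)I$. I expect the main obstacle to be tracking the exact constants: the decomposition and the Hessian bound are routine, but extracting the sharp $1/\kappa$ in (1) and the sharp $\tfrac{1}{1-\kappa}$ in (3) requires the quantitative forms of the \cite{CE22} results (entropic stability and the local-to-global comparison), not merely their qualitative statements.
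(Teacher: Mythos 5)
The paper does not actually prove Theorem~\ref{th:bdd-mixture}; it is imported wholesale from prior work, with a one-line pointer to~\cite{BB19} and~\cite[Section~5.1]{CE22}. Your proposal takes essentially the same stance --- the hard work (properties (1)--(3)) is deferred to~\cite{CE22} --- but you also spell out the Hubbard--Stratonovich decomposition and the Hessian bound explicitly, and both of those computations are correct: completing the square gives exactly the stated mixture with $\rho(z) \propto \exp(-\tfrac12 z^\top W z)\prod_i 2\cosh((Wz)_i + h_i)$, and $-\nabla^2\log\rho = W - WD_zW \succeq W(I-W) \succeq \kappa(1-\kappa)I$ is right. So structurally you are doing what the paper does, with a bit more exposition.

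One imprecision worth flagging in Step~3: you attribute the $1/\kappa$ in property~(1) to the operator bound $\|W^{-1}\| \le 1/\kappa$ on the Gaussian channel covariance. The mechanism in~\cite{CE22} (Lemma~40 and its surroundings) is different --- entropic stability of an Ising model with respect to $\|x-y\|^2$ is driven by a uniform bound on $\|\Cov(\calT_v\mu_{W,h})\|$, and for Ising models with $W \preceq (1-\kappa)I$ that covariance bound is $(I-W)^{-1} \preceq \tfrac{1}{\kappa}I$, coming from the \emph{upper} spectral bound on $W$, not the lower one. In the symmetric regime $\kappa I \preceq W \preceq (1-\kappa)I$ the two happen to give the same number $1/\kappa$, which is why your answer lands on the right constant, but the causal story as you wrote it is off: the lower bound $W \succeq \kappa I$ is what makes the decomposition exist at all (you need $W \succ 0$ for the Gaussian), whereas the upper bound $W \preceq (1-\kappa)I$ is what controls covariance, entropic stability, and ultimately the MLSI constant $\tfrac{1}{n(1-\kappa)}$. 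Since the paper itself offers no proof and neither you nor it actually chases the constants through~\cite{CE22}, this does not sink the proposal, but it is a genuine misattribution of which spectral hypothesis does which job.
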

Recently, a subset of the authors proved that a similar decomposition exists for an Ising model associated to the stochastic block model  \cite{LMRW}.
\begin{theorem}\label{th:sbm-mixture}
    Let $(\bsigma, \bG) \sim \SBM(n, d, \lambda)$, and let $\ol{A}_{\bG} = A_{\bG} - \E[A_{\bG} | \bsigma]$ be the centered adjacency matrix. There exists some constant $\beta > 0$, a mixture distribution $\rho$ on $\R^n$, and function $g: \R^n \to \R^n$ such that for any external field $h$, 
    \[
        \mu_{\beta \ol{A}_{\bG}/\sqrt{d}, h} = \E_{\bz \sim \rho}[\mu_{H, g(\bz) + h}],
    \]
    where $H$ is an interaction matrix supported on the edges of a forest, with at most one additional cycle per connected component. 
    Moreover, the following properties hold:
    \begin{enumerate}[label=\normalfont{(\arabic*)}]
        \item \label{item:entstab-sbm} There is a positive constant $\entstab$ such that $\mu_{\beta \ol{A}_{\bG}/\sqrt{d}, h}$ is $\entstab$-entropically stable with respect to $(x, y) \mapsto \norm{I_{[n] \setminus H}(x-y)}^2$.
        \item \label{item:varcon-sbm} $\Cvar(\mu_{\beta \ol{A}_{\bG}/\sqrt{d}, h}) \ge \Omega(1)$.
        \item \label{item:MLSI-sbm} $\MLSI(\mu_{\beta\ol{A}_{\bG}/\sqrt{d}, h}) \ge \frac{1}{n^{1 + o_d(1)}}$.
        \item \label{item:high-deg-sbm} The number of non-isolated vertices in $H$ is at most $\gamma n$, where $\gamma(d) = o_d(1)$ is a sufficiently small constant that also shrinks with $d$. Here, a non-isolated vertex is a vertex with at least one distinct neighbor.
        
    \end{enumerate}
\end{theorem}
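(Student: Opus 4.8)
The proof of this theorem appears in the companion paper \cite{LMRW}, so I will only sketch the plan one would follow; the idea is to peel off the ``pathological'' part of the sparse random graph $\bG$ and reduce to the bounded-spectral-diameter setting of \Cref{th:bdd-mixture}. The starting observation is that, since both edge probabilities in $\SBM(n,d,\lambda)$ are $\Theta(d/n)$, the graph $\bG$ has the local geometry of $\ER(n,d/n)$: with high probability all but $o_d(1)\cdot n$ vertices have degree $O(d)$, the maximum degree is $O(\log n/\log\log n)$, every vertex lies in $O(1)$ short cycles, and every vertex of degree $\omega(d)$ is isolated from short cycles and far from other high-degree vertices. Using this, one identifies an edge set $\mathcal{H}$ that contains the edges responsible for the large eigenvalues of $\tfrac{\beta}{\sqrt d}\ol{A}_{\bG}$ --- principally the edges incident to vertices of degree exceeding a large constant times $d$ --- and which, up to $o(1)$-probability exceptions absorbed by the single-cycle allowance, is a \emph{pseudoforest}: each connected component has at most one cycle. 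The count $\sum_{v:\deg(v)=\Omega(d)}\deg(v)=o_d(1)\cdot n$ shows $\mathcal{H}$ has at most $\gamma(d)\cdot n$ non-isolated vertices with $\gamma(d)=o_d(1)$, which is property \ref{item:high-deg-sbm}, and the pseudoforest structure of $\mathcal{H}$ is inherited by the interaction matrix $H$ below.

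Next, set $H$ to be the restriction of $\tfrac{\beta}{\sqrt d}\ol{A}_{\bG}$ to the edges of $\mathcal{H}$, and $W'\coloneqq\tfrac{\beta}{\sqrt d}\ol{A}_{\bG}-H$, which is supported away from the high-degree vertices. By Feige--Ofek / Benaych-Georges--Bordenave--Knowles type spectral bounds for degree-pruned sparse random matrices, $\norm{W'}=O(\beta)$ with high probability, so for $\beta$ a sufficiently small constant and after adding a constant multiple of the identity --- which leaves $\mu_{\frac{\beta}{\sqrt d}\ol{A}_{\bG},h}$ unchanged, as $\exp(\tfrac{c}{2}\norm{x}^2)$ is constant on $\{\pm1\}^n$ --- one can arrange $\kappa\psdle W'\psdle 1-\kappa$ for a small constant $\kappa$. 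Applying a Hubbard--Stratonovich transform to the $W'$-part (or, more robustly, stochastic localization run against $W'$ and stopped at an appropriate time) then produces a measure decomposition
\[
    \mu_{\frac{\beta}{\sqrt d}\ol{A}_{\bG},\,h}=\E_{\bz\sim\rho}\bracks*{\mu_{H,\,g(\bz)+h}}
\]
whose components retain the interaction $H$ but have their non-$\mathcal{H}$ interactions replaced by a random external field --- exactly the decomposition claimed.

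It then remains to verify the three analytic properties. Entropic stability with respect to $(x,y)\mapsto\norm{I_{[n]\setminus H}(x-y)}^2$ and conservation of variance (\ref{item:entstab-sbm}, \ref{item:varcon-sbm}) should follow by combining the bounded-spectral-diameter analysis of \cite{CE22,BB19} applied to $W'$ --- which controls the non-$\mathcal{H}$ coordinates, the only ones that are Gaussian-localized, explaining the degenerate metric $\norm{I_{[n]\setminus H}(\cdot)}^2$ --- with a separate treatment of the $\mathcal{H}$-coordinates exploiting the bounded treewidth of $\mathcal{H}$. For the MLSI bound \ref{item:MLSI-sbm}, each component $\mu_{H,g(\bz)+h}$ is an Ising model on the pseudoforest $\mathcal{H}$, whose vertices have degree $O(\log n/\log\log n)$ and whose edge interactions have magnitude $O(\beta/\sqrt d)$; adapting the stochastic-localization proof of an MLSI from \cite{CE22,EKZ22} to this residual structure, the only loss relative to the product baseline $\MLSI\ge 1/n$ (\Cref{fact:mlsi-product}) is a factor $e^{-O(\beta\log n/\sqrt d)}=n^{-o_d(1)}$ coming from the $O(\log n/\log\log n)$-degree vertices of $\mathcal{H}$, yielding $\MLSI\ge n^{-1-o_d(1)}$.

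I expect the main obstacle to be the structural/spectral step: the set $\mathcal{H}$ must be chosen so that \emph{simultaneously} it is a genuine pseudoforest (so its components have treewidth $\le 2$ and the MLSI loss is only $n^{-o_d(1)}$), the complementary matrix $\tfrac{\beta}{\sqrt d}\ol{A}_{\bG}-H$ has bounded spectral diameter (which needs the non-trivial spectral bounds for degree-pruned sparse random graphs, together with care that moving edges out of $\mathcal{H}$ to kill excess cycles creates no new large eigenvalues), and only $o_d(1)\cdot n$ vertices are touched. Once the decomposition is in place, propagating entropic stability and conservation of variance is mostly a matter of adapting \cite{CE22} to the case where only a subset of the coordinates is Gaussian-localized --- routine, but notationally heavy.
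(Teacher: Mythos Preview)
Your proposal is correct in spirit: the paper itself does not prove this theorem but imports it from the companion paper \cite{LMRW}, providing only a remark with pointers (\cite[Theorems 5.1, 5.2, Lemmas 3.18, 3.29, 6.21]{LMRW} and \cite[Lemma 40]{CE22}) for each item. Your sketch---isolating a pseudoforest $\mathcal{H}$ of edges around high-degree vertices, using Feige--Ofek-type bounds to show the remainder has bounded spectral diameter, and then running stochastic localization on the bounded part to obtain the decomposition and derive entropic stability, variance conservation, and the MLSI---matches the architecture those pointers indicate, so there is nothing further to compare against in this paper.
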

\begin{remark}
    We provide pointers for the reader interested in extracting \Cref{th:sbm-mixture} from \cite{LMRW,CE22}.
    \begin{itemize}
        \item \Cref{item:entstab-sbm} follows from \cite[Theorem 5.2]{LMRW} and \cite[Lemma 40]{CE22} (cf. \cite[Lemma 3.18]{LMRW}).
        \item For \Cref{item:varcon-sbm}, conservation of entropy follows from \cite[Theorem 5.2, Lemma 3.18, Lemma 3.29]{LMRW}.
        This, in turn, implies conservation of variance by the generic fact that $\lim_{c \to \infty} \Ent[(f+c)^2] = 2\Var[f]$ (see e.g., \cite[Proof of Proposition 3.5]{BT06}).
        \item \Cref{item:MLSI-sbm} is \cite[Theorem 5.1]{LMRW}.
        \item \Cref{item:high-deg-sbm} follows from \cite[Lemma 6.21]{LMRW}.
    \end{itemize}
\end{remark}

\subsection{Restricted Gaussian dynamics achieves weak recovery}\label{sec:weak-recovery}

In this section we prove versions of \Cref{th:correlation-gain,th:sbm-recovery} for restricted Gaussian dynamics---our goal will be to show that if the Dirichlet form with respect to the restricted Gaussian dynamics Markov chain is small, then it has non-trivial correlation with the planted vector $v$. 
Since local stationarity with respect to Glauber dynamics transfers over to local stationarity with respect to RGD (\Cref{lem:ls-dist-rgd}), this suffices to complete the proof.

Showing that RGD succeeds at correlating with the spike amounts to \Cref{lem:corr-lower-bound-exp-general}, where we show that for ``nice'' Ising models, a strong external field applied to the Ising model shows itself in its mean.
In particular, if the field is aligned with $v$, then the mean is correlated with $v$.
One can then show that for any distribution $\nu$ that is locally stationary with respect to RGD, samples from $\nu$ must already have good correlation with the planted $v$ (\Cref{lem:booster-energy}).

\begin{lemma}
    \label{lem:corr-lower-bound-exp-general}
    Let $W$ be an interaction matrix such that for any external field $h$, there is a decomposition
    \[ \mu_{W,h} = \E_{\bz \sim \rho} \left[ \mu_{H,g_h(\bz)} \right], \]
    where $H$ is an arbitrary interaction matrix supported on the edges of a graph that we also denote $H$, such that the following hold.
    \begin{enumerate}
        \item There is a positive constant $\entstab$ such that $\mu_{W,h}$ is $\entstab$-entropically stable with respect to $(x,y)\mapsto \left\| I_{[n] \setminus H} (x-y) \right\|^2$,
        \item Variance is conserved in this decomposition with constant $\Cvar$.
        \item There is a constant $\gamma < 1$ such that the number of non-isolated vertices in $H$ is at most $\gamma n$.
    \end{enumerate}
    Now, let $v \in \left\{\pm \frac{1}{\sqrt{n}} \right\}^{n}$, and set $s > 0$. Then,
    \[ \E_{\bx \sim \mu_{W,sv}} |\langle \bx,v\rangle| \ge \left( 1 - \gamma \right) \cdot \frac{\Cvar}{2} \cdot \min\left\{ s , 2 \cdot \sqrt{\frac{n}{\Cvar\entstab}} \right\} \]
\end{lemma}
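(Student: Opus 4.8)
The plan is to reduce the statement to a lower bound on the \emph{mean} overlap $\angles*{v,\mean(\mu_{W,sv})}$ via a free-energy computation, and then to extract and integrate a differential inequality for this mean overlap that packages together the three structural hypotheses. For $t\in[0,s]$ set $\Phi(t) \coloneqq \log\sum_{x\in\{\pm1\}^n}\exp\parens*{\tfrac12\angles*{x,Wx} + t\angles*{v,x}}$, so that $\mu_{W,tv}$ is the Gibbs measure with free energy $\Phi(t)$. Writing $m(t)\coloneqq\mean(\mu_{W,tv})$ and $M(t)\coloneqq\angles*{v,m(t)}$, a direct computation gives $M(t)=\Phi'(t)$, and $M'(t)=\Phi''(t)=\Var_{\mu_{W,tv}}\bracks*{\angles*{v,\bx}}\eqqcolon V(t)\ge 0$. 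Since $\mu_{W,0}$ is invariant under $x\mapsto -x$ we have $M(0)=0$, so $M$ is nonnegative and nondecreasing, and by Jensen's inequality $\E_{\bx\sim\mu_{W,sv}}\abs*{\angles*{\bx,v}}\ge M(s)=\int_0^s V(t)\,\dif t$. It therefore suffices to lower bound $M(s)$.

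The heart of the argument is a lower bound on the variance $V(t)$. Fix $t$ and use the decomposition $\mu_{W,tv}=\E_{\bz\sim\rho}\bracks*{\mu_{H,g_{tv}(\bz)}}$. The law of total variance gives $V(t)\ge\E_{\bz}\bracks*{\Var_{\mu_{H,g_{tv}(\bz)}}\bracks*{\angles*{v,\bx}}}$. In each mixture component the coordinates at vertices not incident to an edge of $H$ are mutually independent (also of the remaining coordinates), so, writing $m_i^{(\bz)}$ for the $i$-th coordinate mean under $\mu_{H,g_{tv}(\bz)}$ and using $v_i^2=1/n$,
\[
    \Var_{\mu_{H,g_{tv}(\bz)}}\bracks*{\angles*{v,\bx}}\;\ge\;\sum_{i\notin H}v_i^2\parens*{1-(m_i^{(\bz)})^2}\;=\;\frac1n\sum_{i\notin H}\parens*{1-(m_i^{(\bz)})^2}.
\]
Conservation of variance applied to the coordinate function $x\mapsto x_i$ yields $\E_{\bz}[1-(m_i^{(\bz)})^2]\ge\Cvar\parens*{1-m_i(t)^2}$ for each isolated $i$, and since at most $\gamma n$ vertices of $H$ are non-isolated,
\[
    V(t)\;\ge\;\frac{\Cvar}{n}\sum_{i\notin H}\parens*{1-m_i(t)^2}\;\ge\;\frac{\Cvar}{n}\parens*{(1-\gamma)n-\norm*{I_{[n]\setminus H}\,m(t)}^2}.
\]
Finally, since $\mu_{W,tv}=\calT_{tv}\mu_{W,0}$, entropic stability of $\mu_{W,0}$ with respect to $(x,y)\mapsto\norm*{I_{[n]\setminus H}(x-y)}^2$ gives $\norm*{I_{[n]\setminus H}\,m(t)}^2\le\entstab\cdot\KL{\mu_{W,tv}}{\mu_{W,0}}$, and the identity $\KL{\calT_u\pi}{\pi}=\angles*{u,\mean(\calT_u\pi)}-\log\E_{\pi}e^{\angles*{u,\bx}}$ with $u=tv$, $\pi=\mu_{W,0}$ evaluates this relative entropy as $tM(t)-\int_0^t M(\tau)\,\dif\tau\le tM(t)$. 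Combining the three displays yields the differential inequality
\[
    M'(t)=V(t)\;\ge\;(1-\gamma)\Cvar-\frac{\Cvar\entstab}{n}\,t\,M(t),\qquad M(0)=0.
\]

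To finish, I would run a continuity/bootstrap argument on this inequality. Since $M'(0)\ge(1-\gamma)\Cvar$, we have $M(t)>\tfrac{(1-\gamma)\Cvar}{2}t$ for small $t>0$; if this linear lower bound ever fails on $[0,s]$, let $t_1$ be the first time it becomes an equality, so that $M'(t_1)\le\tfrac{(1-\gamma)\Cvar}{2}$. But plugging $M(t_1)=\tfrac{(1-\gamma)\Cvar}{2}t_1$ into the differential inequality forces $M'(t_1)\ge(1-\gamma)\Cvar\parens*{1-\tfrac{\Cvar\entstab t_1^2}{2n}}$, which is compatible only when $t_1=\Omega\parens*{\sqrt{n/(\Cvar\entstab)}}$. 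Hence $M(t)\ge\tfrac{(1-\gamma)\Cvar}{2}t$ throughout $\bracks*{0,\,2\sqrt{n/(\Cvar\entstab)}}\cap[0,s]$ — recovering the exact threshold constant requires retaining the $\int_0^t M$ term along the critical trajectory rather than discarding it — and since $M$ is nondecreasing, $M(s)\ge\tfrac{(1-\gamma)\Cvar}{2}\min\left\{s,\,2\sqrt{n/(\Cvar\entstab)}\right\}$, completing the proof.

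\emph{Main obstacle.} The delicate point is the variance lower bound: the naive estimate $V(t)\ge\tfrac{\Cvar}{n}\parens*{(1-\gamma)n-\sum_{i\notin H}m_i(t)^2}$ is vacuous when the coordinate means $m_i(t)$ are large, and it is precisely entropic stability — asserting that the non-$H$ part of the mean can only have moved by $O\parens*{\sqrt{\entstab\,\KL{\mu_{W,tv}}{\mu_{W,0}}}}$, with the relative entropy itself bounded by $tM(t)$ — that makes the estimate self-improving and lets the bootstrap close. Getting the constants of that bootstrap, and thus the precise shape $\min\left\{s,\,2\sqrt{n/(\Cvar\entstab)}\right\}$ of the guarantee, is the other step that needs care; everything else (the free-energy identities, the product structure of the $\mu_{H,g_{tv}(\bz)}$, the law of total variance) is routine.
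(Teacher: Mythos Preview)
Your approach is essentially the same as the paper's: differentiate the free energy to get $M'(t)=\Var_{\mu_{W,tv}}[\langle v,\bx\rangle]$, drop to the mixture components via the law of total variance, keep only the isolated coordinates, apply conservation of variance coordinatewise, and then use entropic stability to bound $\|I_{[n]\setminus H}\,m(t)\|^2$ in terms of the relative entropy, yielding a self-referential differential inequality for $M$ that one integrates. All of the structural steps are correct.

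The one substantive difference is in how the entropic-stability step is cashed out, and this is exactly where your constant falls short. You apply entropic stability at $\mu_{W,0}$ in one direction to get $\|I_{[n]\setminus H}\,m(t)\|^2\le\entstab\cdot\KL{\mu_{W,tv}}{\mu_{W,0}}=\entstab\bigl(tM(t)-\int_0^t M\bigr)$. The paper instead applies entropic stability in \emph{both} directions (at $\mu_{W,0}$ and at $\mu_{W,tv}$), averages, and obtains $\|I_{[n]\setminus H}\,m(t)\|^2\le\tfrac{\entstab}{2}\cdot\SKL{\mu_{W,tv}}{\mu_{W,0}}=\tfrac{\entstab}{2}\,tM(t)$ exactly. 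This factor-of-two improvement in the differential inequality, combined with the paper's cleaner integration scheme (assume $M(s)<L$, use monotonicity to bound $M(t)<L$ throughout, integrate, and solve for the $L$ that makes the inequality tight), is what produces the stated threshold $2\sqrt{n/(\Cvar\entstab)}$. Your first-hitting-time bootstrap, even after retaining the $\int_0^t M$ term as you suggest, only yields $t_1\ge\sqrt{2n/(\Cvar\entstab)}$ (check: at $t_1$ one has $t_1M(t_1)-\int_0^{t_1}M\le\tfrac{(1-\gamma)\Cvar}{4}t_1^2$, which forces $t_1^2\ge 2n/(\Cvar\entstab)$), so you are still off by $\sqrt{2}$. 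The fix is the symmetric-KL trick, not the integral term.
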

\begin{remark}
    The above assumptions on $H$ may be a bit confusing. However, in the case of an interaction matrix with bounded spectral diameter, we in fact have $\gamma = 0$, so $H$ is empty.
\end{remark}
\begin{remark}
    We further remark that by \cite[Lemma 3.18]{LMRW}---a mild strengthening of \cite[Lemma 40]{CE22}---the first condition holds if for all tilts $h$, $\left\| \Cov\left( \mu_{W,h} \right)_{[n] \setminus H} \right\| \le \entstab$. In fact, only a bound on $\left\| \Cov\left( \mu_{W,tv} \right)_{[n] \setminus H} \right\|$ for ``most'' $0 \le t \le s$ is required. A technique to show the second condition, involving stochastic localization, also requires a bound on the covariance.
\end{remark}
\begin{proof}
    The proof strategy is to use the fact that:
    \[
        \E_{\bx\sim\mu_{W,sv}} \angles*{\bx, v} = \int_{0}^s \frac{\dif}{\dif t} \E_{\bx\sim\mu_{W, tv}}\angles*{\bx, v} \dif t,
    \]
    and obtain a lower bound on the correlation by showing a lower bound on the derivative.

    The following standard calculation gives us a formula for the derivative as a variance of the correlation.
    \begin{align*}
        \frac{\dif}{\dif t} \E_{\bx \sim \mu_{W,tv}} \langle \bx,v\rangle &= \frac{\dif}{\dif t} \frac{\E_{\bx \sim \mu_{W,0}} e^{t\langle \bx,v\rangle} \langle \bx,v\rangle }{\E_{\bx \sim \mu_{W,0}} e^{t\langle \bx,v\rangle}} \\
            &= \frac{\E_{\bx \sim \mu_{W,0}} e^{t\langle \bx,v\rangle} \langle \bx,v\rangle^2 }{\E_{\bx \sim \mu_{W,0}} e^{t\langle \bx,v\rangle}} - \frac{\left(\E_{\bx \sim \mu_{W,0}} e^{t\langle \bx,v\rangle} \langle \bx,v\rangle\right)^2 }{\left(\E_{\bx \sim \mu_{W,0}} e^{t\langle \bx,v\rangle}\right)^2} = \Var_{\bx \sim \mu_{W,tv}} \langle \bx,v\rangle.
    \end{align*}
    Now, consider the measure decomposition provided by the assumptions, of the form 
    \[
        \mu_{W,tv} = \E_{\bz \sim \rho} \left[ \mu_{H,\bz} \right]
    \]
    for some measure $\rho$ over external fields $z$. We have by the law of total variance that 
    \[
        \Var_{\bx \sim \mu_{W,tv}} \langle \bx,v\rangle \ge \E_{z \sim \rho} \Var_{\bx \sim \mu_{H,z}} \langle \bx,v\rangle \mper
    \]
    Note that we can lower bound the above variance by the contribution of the isolated vertices in $H$, which are mutually independent of all other vertices.
    In particular, let $\Sisola$ be the set of isolated vertices in $H$.
    Then, we have:
    \begin{align*}
        \E_{\bz \sim \rho} \Var_{\bx \sim \mu_{H,\bz}} \langle \bx,v\rangle
        &\ge \E_{\bz \sim \rho} \Var_{\bx \sim \mu_{H,\bz}} \left[ \sum_{i \in \Sisola} \bx_i v_i \right] \\
        &= \E_{\bz \sim \rho} \sum_{i \in \Sisola} v_i^2 \cdot \Var_{\bx \sim \mu_{H,\bz}} [\bx_i].
    \end{align*}
    By assumption, the decomposition conserves the variance of arbitrary functions with parameter $\Cvar$. 
    Hence, 
    \[ \E_{\bz \sim \rho} \Var_{\bx \sim \mu_{H,\bz}}[\bx_i] \ge \Cvar \cdot \Var_{\bx \sim \mu_{W,tv}} [\bx_i] = \Cvar \cdot \left( 1 - \E_{\bx \sim \mu_{W,tv}} \left[\bx_i\right]^2 \right). \]
    Consequently, because $\|v\|_{\infty}^2 = 1/n$,
    \[ \Var_{\bx \sim \mu_{W,tv}} \langle \bx,v\rangle \ge \Cvar \cdot \left( \left( 1 - \gamma \right) - \frac{1}{n} \left\| I_{[n] \setminus H} \cdot \mean(\mu_{W, tv}) \right\|^2 \right). \]
    By assumption, $\mu_{W, tv}$ is $\entstab$-entropically stable with respect to $(x,y) \mapsto \left\|I_{[n] \setminus H} \cdot (x-y)\right\|^2$. 
    By definition, this implies that
    \begin{align*}
        \left\| I_{[n] \setminus H}  \left(\mean(\mu_{W, tv}) - \mean(\mu_{W, 0}) \right) \right\|^2 &\le \entstab \cdot \KL{\mu_{W,tv}}{\mu_{W,0}} \text{ and } \\
        \left\| I_{[n] \setminus H} \left(\mean(\mu_{W, 0}) - \mean(\mu_{W, tv}) \right) \right\|^2 &\le \entstab \cdot \KL{\mu_{W,0}}{\mu_{W,tv}}
    \end{align*}
    and since $\mean(\mu_{W, 0}) = 0$ by symmetry, we get that
    \[ \| I_{[n] \setminus H} \mean(\mu_{W, tv})  \|^2 \le \frac{\entstab}{2} \cdot \SKL{\mu_{W,tv}}{\mu_{W,0}} = \frac{\entstab}{2} \cdot \E_{\bx \sim \mu_{W,tv}} \langle tv,\bx\rangle. \]
    Therefore,
    \[
        \frac{\dif}{\dif t} \E_{\bx \sim \mu_{W,tv}} \langle \bx,v\rangle \ge \Cvar \left( \left( 1 - \gamma \right) - \frac{t\entstab}{2 n} \E_{\bx \sim \mu_{W,tv}} \langle \bx,v\rangle \right).
    \]
    Now, set
    \[
        L \coloneqq \frac{s\Cvar}{1 + \frac{s^2\Cvar\entstab}{4 n}} \cdot \left( 1 - \gamma \right).
    \]
    We claim that $\E_{\bx \sim \mu_{W, sv}} \angles{\bx, v} \ge L$.
    To prove this claim, assume for contradiction that $\E_{\bx \sim \mu_{W, sv}} \angles{\bx, v} < L$.
    Then, since $\E_{\bx \sim \mu_{W, sv}} \angles{\bx, v}$ is non-decreasing with $s$ (namely, its derivative is variance), we get:
    \begin{align*}
        \E_{\bx \sim \mu_{W,sv}} \angles{\bx, v} &\ge \Cvar \int_{0}^{s} \left( \left( 1 - \gamma \right) - \frac{tL\entstab}{2 n} \right) \dif t \\
        &= \Cvar \left( s \left( 1 - \gamma \right) - \frac{s^2 L\entstab}{4 n} \right) = L,
    \end{align*}
    which contradicts the assumption that $\E_{\bx\sim\mu_{W, sv}}\angles*{\bx, v} < L$.

    We conclude the proof by doing casework on $s$; to this end set $s_{\star}\coloneqq 2 \cdot \sqrt{\frac{n}{\Cvar\entstab}}$.
    If $s \le s_{\star}$, then the definition of $L$ yields $L \ge \frac{s\Cvar}{2} \cdot \left(1 - \gamma \right)$.
    On the other hand, the case of $s \ge s_{\star}$ immediately reduces to the above calculation because by monotonicity,
    \[
        \E_{\bx \sim \mu_{W,sv}} \angles{\bx, v} \ge \E_{\bx \sim \mu_{W,s_{\star}v}}\angles{\bx,v} \ge \left( 1 - \gamma \right) \cdot \frac{\Cvar}{2} \cdot 2 \cdot \sqrt{\frac{n}{\Cvar\entstab}}. \qedhere
    \]
\end{proof}

\begin{remark}
    The above \Cref{lem:corr-lower-bound-exp-general} can easily be generalized to relax the assumptions on $v$ to having bounded $\ell_\infty$ norm or being subgaussian. However, the resulting bound is a little messier, so we omit it for the sake of readability.
\end{remark}

As a corollary of the above and the definition of restricted Gaussian dynamics, we get the following.
\begin{corollary}[RGD boost]    \label{cor:beast-boost}
    Given $W$ as in the previous theorem, $x$ such that $\abs*{\angles*{x, v}} = r$, and $\by\sim_{P_{\RGD}}x$, we have:
    \[
        \E_{\by}\abs*{\angles*{\by, v}} \ge \left( 1 - \gamma \right) \cdot \frac{\Cvar}{2} \cdot \E_{\bg\sim\calN(0,1)} \min\left\{ \abs*{\lambda r + \sqrt{\lambda}\bg}, 2 \sqrt{\frac{n}{\Cvar\entstab}} \right\}.
    \]
\end{corollary}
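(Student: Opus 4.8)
The plan is to unwind the definition of restricted Gaussian dynamics and then apply \Cref{lem:corr-lower-bound-exp-general} pointwise in the Gaussian randomness. Recall that a transition of $P_{\RGD}$ from $x$ first draws $\bg \sim \calN(0,1)$, forms the external field $\bz = s_{\bg}\cdot v$ with $s_{\bg} \coloneqq \lambda\angles*{v,x} + \sqrt{\lambda}\bg$, and then samples $\by \sim \mu_{W,\bz} = \mu_{W,s_{\bg}v}$. Hence
\[
    \E_{\by\sim_{P_{\RGD}}x} \abs*{\angles*{\by,v}} = \E_{\bg\sim\calN(0,1)} \E_{\by\sim\mu_{W,s_{\bg}v}} \abs*{\angles*{\by,v}},
\]
and it suffices to lower bound the inner expectation for each fixed realization of $s_{\bg}$.

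First I would dispose of the sign of the induced field. \Cref{lem:corr-lower-bound-exp-general} is stated for a positive field strength, whereas $s_{\bg}$ can be negative. But the involution $x \mapsto -x$ preserves $x^\top W x$ and negates $\angles*{v,x}$, so it carries $\mu_{W,sv}$ to $\mu_{W,-sv}$ while preserving the law of $\abs*{\angles*{\cdot,v}}$; thus $\E_{\bx\sim\mu_{W,sv}}\abs*{\angles*{\bx,v}} = \E_{\bx\sim\mu_{W,\abs{s}v}}\abs*{\angles*{\bx,v}}$ for every $s\in\R$. Applying \Cref{lem:corr-lower-bound-exp-general} with field strength $\abs{s_{\bg}}$ then gives, for each fixed $\bg$,
\[
    \E_{\by\sim\mu_{W,s_{\bg}v}} \abs*{\angles*{\by,v}} \ge \parens*{1-\gamma}\cdot\frac{\Cvar}{2}\cdot\min\left\{ \abs*{\lambda\angles*{v,x} + \sqrt{\lambda}\bg}, 2\sqrt{\frac{n}{\Cvar\entstab}} \right\}.
\]

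Finally I would take expectation over $\bg$. Writing $r \coloneqq \abs*{\angles*{v,x}}$ and using that $\bg$ and $-\bg$ are equidistributed to replace $\angles*{v,x}$ by $r$ inside the expectation, we conclude
\[
    \E_{\by\sim_{P_{\RGD}}x} \abs*{\angles*{\by,v}} \ge \parens*{1-\gamma}\cdot\frac{\Cvar}{2}\cdot\E_{\bg\sim\calN(0,1)}\min\left\{ \abs*{\lambda r + \sqrt{\lambda}\bg}, 2\sqrt{\frac{n}{\Cvar\entstab}} \right\}\mper
\]
There is no real obstacle here beyond bookkeeping the sign of the induced field and invoking Gaussian symmetry; all of the substantive content lives in \Cref{lem:corr-lower-bound-exp-general}.
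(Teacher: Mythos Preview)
Your proposal is correct and matches the paper's approach: the paper simply states that this is a corollary of \Cref{lem:corr-lower-bound-exp-general} and the definition of restricted Gaussian dynamics, without spelling out a proof, and your argument is exactly the expected unwinding (condition on $\bg$, apply the lemma with field strength $\abs{s_{\bg}}$ via the $x\mapsto -x$ symmetry, then average using the symmetry of $\bg$).
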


\begin{lemma}[Correlation of locally stationary distributions under RGD]   \label{lem:booster-energy}
    Let $W$ be an interaction matrix satisfying the conditions in \Cref{lem:corr-lower-bound-exp-general}, and set $M = W + \lambda vv^\top$, where $v \in \left\{\pm \frac{1}{\sqrt{n}} \right\}^n$ and $\lambda \ge 1$.
    Additionally, suppose that the distribution $\mu_{W,sv}$ satisfies $\MLSI\parens*{\mu_{W,sv}} \ge n^{-1-o(1)}$ for every $s\in\R$.
    Set 
    \[
        \delta_{\star} = \delta_{\star}(\lambda) \defeq \left(1 - \gamma\right) \cdot \frac{\Cvar}{2} \cdot \min\left\{ \sqrt{\frac{2\lambda}{\pi}} , \frac{\lambda}{2} \right\}  - 1.
    \]
    Then for any $\lambda$ such that $\delta_{\star} > 0$, any $\eps$-locally stationary distribution $\nu$ under RGD, and any $\delta \in (0, \delta_{\star})$, 
    it holds that
    \[
        \E_{\bx\sim\nu} |\angles*{\bx,v}| \ge 0.99 \cdot \left( 1 - \gamma \right) \cdot \sqrt{\frac{\Cvar n}{\entstab}} - 2\delta,
    \]
    for $\eps < \delta^4 \cdot \frac{\entstab^2}{\Cvar^2(1-\gamma)^4} \cdot \frac{1}{n^2} - \exp(-\Omega(n))$.
\end{lemma}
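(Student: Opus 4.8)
The plan is to reduce to a one-dimensional drift estimate for a scalar ``boost map'' and then iterate. Set $\phi(x) \coloneqq \abs{\angles{x,v}}$; since $v \in \{\pm 1/\sqrt n\}^n$ we have $\norm{\phi}_\infty \le \sqrt n$. Because $\nu$ is $\eps$-locally stationary for $P \coloneqq P_{\RGD}$, \Cref{cor:bounded-function-stability} gives $\E_{P\nu}\phi \le \E_\nu \phi + \sqrt{n\eps}$, and the RGD boost \Cref{cor:beast-boost} gives $\E_{P\nu}\phi = \E_{\bx \sim \nu}\E_{\by \sim_P \bx}\abs{\angles{\by,v}} \ge \E_{\bx \sim \nu}[ B(\abs{\angles{\bx,v}}) ]$, where $B(r) \coloneqq (1-\gamma)\frac{\Cvar}{2}\,\E_{\bg \sim \calN(0,1)}\min\{ \abs{\lambda r + \sqrt\lambda \bg},\, s_\star \}$ and $s_\star \coloneqq 2\sqrt{n/(\Cvar\entstab)}$. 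Writing $a_\star \coloneqq 0.99\,(1-\gamma)\sqrt{\Cvar n/\entstab} = 0.99\,(1-\gamma)\frac{\Cvar}{2}s_\star$ — which is $0.99$ times the unique fixed point $r^\star$ of $B$ (that fixed point lies within $e^{-\Omega(n)}$ of $(1-\gamma)\frac{\Cvar}{2}s_\star$, since $B(r) \le (1-\gamma)\frac{\Cvar}{2}s_\star$ always while the truncation at $s_\star$ becomes essentially inactive once $\lambda r \gtrsim s_\star$) — chaining the two displays yields the scalar drift inequality $\E_{\bx \sim \nu}[\, B(\abs{\angles{\bx,v}}) - \abs{\angles{\bx,v}}\, ] \le \sqrt{n\eps}$.

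\textbf{Iterating the boost.}
A single application of $B$ cannot push the correlation all the way to $a_\star = \Theta(\sqrt n)$, so I would instead run $T = O(\log n)$ steps of RGD. By \Cref{lem:small-multistep-dirichlet}, $\nu$ being $\eps$-locally stationary for $P$ implies it is $O(T^3 \eps)$-locally stationary for $P^T$, so \Cref{cor:bounded-function-stability} applied to the chain $P^T$ gives $\E_\nu \phi \ge \E_{P^T \nu}\phi - \sqrt{n\cdot O(T^3\eps)}$. It thus suffices to show that $T = O(\log n)$ steps of RGD drive the expected correlation above $a_\star$ from any initialization, after which the hypothesis on $\eps$ (which is comfortably stronger than the $\eps \lesssim \delta^2/(n\poly(\log n))$ needed here) closes the estimate. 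For this I record the following facts about the scalar map $B$: (i) $B$ is non-decreasing on $[0,\infty)$ (from a short symmetrization argument); (ii) the uniform baseline $B(r) \ge (1-\gamma)\frac{\Cvar}{2}\sqrt{2\lambda/\pi} - e^{-\Omega(n)} = 1 + \delta_\star - e^{-\Omega(n)}$ for every $r \ge 0$, since even at $r=0$ the Gaussian kick contributes $\E\sqrt\lambda\abs{\bg} = \sqrt{2\lambda/\pi}$ and the $s_\star$-truncation costs only $e^{-\Omega(s_\star^2/\lambda)} = e^{-\Omega(n)}$ (note $\delta_\star > 0$ forces $\lambda$ large enough that the $\lambda/2$ branch is never active); (iii) geometric growth $B(r) \ge (1-\gamma)\frac{\Cvar\lambda}{4}r - e^{-\Omega(n)}$ in the pre-saturation regime $\lambda r \lesssim s_\star$, and near-saturation $B(r) \ge (1-\gamma)\frac{\Cvar}{2}s_\star(1 - e^{-\Omega(n)}) > a_\star$ once $\lambda r \gtrsim s_\star$; here $\delta_\star > 0$ also forces the growth factor $(1-\gamma)\frac{\Cvar\lambda}{4}$ to exceed $1$. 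To convert ``$\E_{P^t\nu}\phi$ is large'' into ``$\E_{P^{t+1}\nu}\phi$ is larger'' despite $B$ failing to be convex near saturation, I would use the component MLSI hypothesis $\MLSI(\mu_{W,z}) \ge n^{-1-o(1)}$ together with \Cref{lem:mlsi-to-concentration}: after one RGD step, $\angles{\by,v}$ is within $n^{o(1)}$ of its conditional mean with probability $1 - e^{-n^{o(1)}}$, so one can propagate a high-probability lower bound on $\abs{\angles{\bx,v}}$ that grows by a factor $>1$ at each step, from the baseline $\approx \delta_\star$ past $a_\star$ in $O(\log n)$ steps.

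\textbf{The main obstacle.}
The crux is the analysis of the scalar boost map $B$ and its iteration, where three points need care. First, one must verify the truncation at $s_\star$ is inactive throughout the relevant range; this is exactly what the deliberate $0.99$ slack between $a_\star$ and the true fixed point $r^\star$ of $B$ buys, and it is also where the $e^{-\Omega(n)}$ terms in the hypotheses enter. Second, the hypothesis $\delta_\star > 0$ (equivalently, $\lambda$ a large enough constant) is precisely the condition guaranteeing the baseline drift is \emph{positive} even at correlation $0$, so the boost never stalls — this is the sole place where largeness of $\lambda$ is used. Third, since $B$ is convex in the pre-saturation regime but concave near saturation, one cannot naively push expectations through $B$ via Jensen, and MLSI-driven concentration of the mixture components $\mu_{W,z}$ (\Cref{lem:mlsi-to-concentration}) is the tool that rescues the geometric-growth bookkeeping. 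Once these are assembled, substituting $\eps < \delta^4 \entstab^2/(\Cvar^2(1-\gamma)^4 n^2) - e^{-\Omega(n)}$ into $\E_\nu\phi \ge a_\star - \sqrt{n\cdot O(T^3\eps)}$ gives $\E_\nu\phi \ge a_\star - 2\delta$, as required.
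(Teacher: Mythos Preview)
Your approach is genuinely different from the paper's, and there is a real gap in the iteration step.

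\textbf{Comparison with the paper.} The paper does \emph{not} iterate RGD. Instead it works with the truncated potential $\phi(x)=\min\{|\langle x,v\rangle|,\eta\sqrt n\}$ (where $\eta=(1-\gamma)\sqrt{\Cvar/\entstab}$) and runs a single-step contradiction: if $\E_\nu\phi<0.99\,\eta\sqrt n-2\delta$, then by Markov an $\Omega(\delta/(\eta\sqrt n))$ fraction of the mass of $\nu$ sits at correlation below $0.99\,\eta\sqrt n-\delta-n^{o(1)}$; the case analysis of $B$ shows every such point gains at least $\delta$ in $\phi$ under one RGD step, while all other points lose at most $\exp(-n^{1-o(1)})$. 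Hence $\E_{P\nu}\phi-\E_\nu\phi\ge\Omega(\delta^2/(\eta\sqrt n))$, which against \Cref{cor:bounded-function-stability} forces $\eps\gtrsim\delta^4/(\eta^4 n^2)$ --- exactly the stated threshold. The truncation is essential: it caps $\|\phi\|_\infty$ at $\eta\sqrt n$ and, more importantly, prevents points with $|\langle x,v\rangle|>r^\star$ (for which $B(r)<r$) from contributing negative drift.

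\textbf{The gap in your iteration.} Your plan hinges on showing $\E_{P^T\nu}|\langle\cdot,v\rangle|\ge a_\star$ for \emph{every} $\nu$ after $T=O(\log n)$ steps, via a high-probability trajectory bound that starts at the baseline $\approx\delta_\star$ and grows geometrically. But the MLSI hypothesis only gives $\MLSI\ge n^{-1-o(1)}$, so \Cref{lem:mlsi-to-concentration} controls $\langle\by,v\rangle$ to within $n^{o(1)}$ of its conditional mean. After one step the conditional mean is $O(1)$ (namely $1+\delta_\star$), so the high-probability lower bound $|\langle\by_1,v\rangle|\ge(1+\delta_\star)-n^{o(1)}$ is vacuous; the geometric recursion never gets off the ground. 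You flag convexity of $B$ in the pre-saturation regime as the rescue, but Jensen on expectations requires the whole distribution of $r_t$ to live in that regime, which you have not established (and cannot, since even from a deterministic start the law of $r_1$ may already put mass near saturation). Bridging the range from $O(1)$ to $n^{o(1)}$ correlation is exactly the delicate part, and your outline does not supply a mechanism for it. The paper avoids this entirely: MLSI concentration is invoked only to compare $\E|\langle\by,v\rangle|$ with $\E\phi(\by)$ when both are already $\Theta(\sqrt n)$, where the $n^{o(1)}$ slack is harmless.
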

\begin{proof}
    Set $\eta = (1-\gamma) \cdot \sqrt{\frac{\Cvar}{\entstab}}$, $\ol{\eta} = 0.99\eta$, $\eta^{\downarrow} = 0.995 \eta$, and define the function $\phi(x) \coloneqq \min\left\{\abs*{\angles*{x, v}}, \eta\sqrt{n}\right\}$.

    By \Cref{cor:bounded-function-stability}, which states that bounded functions are stable under a single step of restricted Gaussian dynamics,
    \[
        \E_{\by\sim P_{\RGD}\nu} \phi(\by) - \E_{\bx \sim \nu} \phi(\bx) < \eta\sqrt{\eps n}\mcom   \numberthis \label{eq:phi-stability}
    \]
    that is, the average value of $\phi(\bx)$ cannot change much under a single step of RGD.
    
    Assume for contradiction that $\E_{\bx\sim\nu}\phi(\bx) < \ol{\eta}\sqrt{n} - 2\delta$.
    We will show that a single step of RGD boosts the expected correlation enough that the resulting value of $\E_{\by\sim P_{\RGD}\nu}\phi(\by)$ violates \eqref{eq:phi-stability}.
    This follows from two claims.
    First, observe that by Markov's inequality:
    \[
        \Pr_{\bx \sim \nu} \left[ |\langle \bx,v\rangle| < \ol{\eta}\sqrt{n} - n^{o(1)} - \delta \right] = \Pr_{\bx\sim\nu}\left[\phi(\bx) < \ol{\eta}\sqrt{n} - n^{o(1)} - \delta\right] \ge \Omega\left(\frac{\delta}{\eta\sqrt{n}}\right).
    \]
    Second, we establish that for any $x$ such that $\abs*{\angles*{x, v}} = r$ and $\by \sim_{P_{\RGD}} x$,
    \begin{enumerate}
        \item if $r < \ol{\eta}\sqrt{n} - n^{o(1)} - \delta$, then $\E_{\by} \phi(\by)-r > \delta$, and
        \item $\E_{\by} \phi(\by) - \min\{r, \ol{\eta}\sqrt{n} - n^{o(1)}\} > -\exp(-n^{1-o(1)})$.
    \end{enumerate}

    These two claims together tell us that:
    \begin{align}
        \E_{\bx\sim P_{\RGD}\nu} \phi(\bx) - \E_{\bx\sim\nu}\phi(\bx) > \Omega\left(\frac{\delta^2}{\eta\sqrt{n}}\right) - \exp\parens*{-n^{1-o(1)}}, \numberthis \label{eq:increment}
    \end{align}
    which gives the desired contradiction, as \eqref{eq:phi-stability} and \eqref{eq:increment} cannot simultaneously be true given the assumption on the relationship between $\delta$ and $\eps$.

    It remains to prove the claimed lower bounds on $\E_{\by} \phi(\by) - r$. Let us begin with the simpler task of proving lower bounds on $\E_{\by} |\langle \by,v\rangle| - r$: we will show that
    \begin{enumerate}
        \item if $r < \eta\sqrt{n} - \delta$, then $\E_{\by} | \langle \by,v\rangle | -r > \delta$, and
        \item $\E_{\by} | \langle \by,v\rangle | - \min\{r, \eta\sqrt{n}\} > -\exp(-\Omega(n))$.
    \end{enumerate}
    Let $x$ be an arbitrary point on the hypercube with $\abs*{\angles*{x,v}} = r$.
    We may assume without loss of generality that $\angles*{x,v} > 0$.

    For ease of notation, set $\subC = \frac{\Cvar}{2} \cdot \left( 1 - \gamma  \right)$.
    We prove this lower bound by splitting into cases based on the value of $r$.
    By \Cref{cor:beast-boost}, the following lower bounds hold for $\by\sim_{P_{\RGD}}x$ depending on the value of $r$:

    \medskip
    \noindent {\bf Case $r < 1$.} We have
        \begin{align*}
            \E_{\by} \abs*{\angles*{\by, v}} &\ge \E_{\bg \sim \calN(0,1)} \min \left\{ \wt{C} \cdot |\lambda r + \sqrt{\lambda} \bg| , \eta\sqrt{n} \right\} \\
            &\ge \wt{C} \cdot \parens*{\E_{\bg\sim\calN(0,1)}\abs*{ \lambda r + \sqrt{\lambda}\bg} - \E_{\bg\sim\calN(0,1)} \abs*{ \lambda r + \sqrt{\lambda}\bg}\cdot \bone_{\wt{C} \cdot \abs*{ \lambda r + \sqrt{\lambda}\bg} > \eta\sqrt{n}}} \\
            &\ge \subC \cdot \parens*{\sqrt{\lambda}\E_{\bg\sim \calN(0,1)}|\bg| - \exp(-\Omega(n))} \\
            &= \subC \left( \sqrt{ \frac{2\lambda}{\pi} } - \exp(-\Omega(n)) \right)
        \end{align*}
        where the final inequality follows from the fact that $\E_{\bg\sim\calN(0,1)} |\bg + R|$ is minimized at $R = 0$, and from standard Gaussian concentration.

        It follows that
        \[ \E_{\by} |\langle \by,v\rangle| - r \ge \subC \cdot \sqrt{\frac{2\lambda}{\pi}} - 1 - \exp(-\Omega(n)) > \delta. \]

    \medskip
    \noindent {\bf Case $1 \le r \le \frac{3\eta\sqrt{n}}{2\subC\lambda}$.}  In this case,
        \begin{align*}
            \E_{\by}\abs*{\angles*{\by, v}} &\ge \E_{\bg\sim\calN(0,1)} \min\left\{ \frac{\subC}{2}\cdot\abs*{\lambda r + \sqrt{\lambda}\bg}, \eta\sqrt{n} \right\} \\
            &\ge \frac{\subC}{2}\cdot \parens*{\E_{\bg\sim\calN(0,1)} \abs*{\lambda r + \sqrt{\lambda}\bg} - \E_{\bg\sim\calN(0,1)} \abs*{\lambda r + \sqrt{\lambda}\bg} \cdot \bone_{\subC \cdot |\lambda r + \sqrt{\lambda} \bg| > 2\eta\sqrt{n} } } \\
            &\ge \frac{\subC\lambda}{2} \cdot r - \exp(-\Omega(n))\mper
        \end{align*}
        Observe that
        \[ \E_{\by}\abs*{\angles*{\by, v}} - r > \left(\frac{\subC\lambda}{2} - 1 \right) > \delta. \]

    \medskip
    \noindent {\bf Case $\frac{3\eta\sqrt{n}}{2\subC\lambda} \le r \le \eta\sqrt{n}-\delta$.}
    In this case, by standard Gaussian concentration, the value of $\subC \cdot \abs*{\lambda r + \sqrt{\lambda}\bg}$ exceeds $\eta\sqrt{n}$ with exponentially high probability, and hence $\E_{\by}\abs*{\angles*{\by, v}}\ge\eta\sqrt{n}-\exp(-\Omega(n))$.
    Thus, in this case too, we have $\E_{\by}\abs*{\angles*{\by, v}} - r > \delta-\exp(-\Omega(n))$.

    \medskip
    \noindent {\bf Case $\eta\sqrt{n}-\delta \le r$.}
    By identical reasoning to the previous case, in this case $\E_{\by}\abs*{\angles*{\by, v}} - \eta\sqrt{n} > -\exp(-\Omega(n))$.

    We must now translate the above lower bounds to lower bounds on $\E_{\by} \phi(\by)$. To do this, we will use \Cref{lem:mlsi-to-concentration}, which says that the correlation of a sample concentrates around its expectation. We will again consider different cases, depending on where the expectation lies. Concretely, we will prove that
    \[ \E \phi(\by) = \E \min\left\{ |\langle\by,v\rangle| , \eta\sqrt{n} \right\} \ge \min\left\{ \E |\langle\by,v\rangle| , \ol{\eta}\sqrt{n} - n^{o(1)} \right\} - \exp(-n^{1-o(1)}). \numberthis \label{eq:expec-to-phi} \]
    Towards proving this, we have
    \begin{align}
        \E \phi(\by) &= \E |\langle \by,v\rangle| \bone_{| \langle \by,v\rangle | \le \eta\sqrt{n}} + \eta\sqrt{n} \cdot \Pr\left[ | \langle \by,v\rangle | \ge \eta\sqrt{n} \right] \label{eq:expec-to-sample-1} \\
            &= \E | \langle \by,v\rangle | + \E (\eta\sqrt{n} - | \langle \by,v\rangle | ) \bone_{ | \langle \by,v\rangle | \ge \eta\sqrt{n} }. \label{eq:expec-to-sample-2}
    \end{align}
    Let $\eta^{\uparrow} = 1.005 \eta$, and recall our earlier definitions $\ol{\eta} = 0.99\eta$ and $\eta^{\downarrow} = 0.995 \eta$. 

    \medskip
    \noindent { \bf Case $\E |\langle \by,v \rangle| < \eta^{\downarrow}\sqrt{n}$.}
    
    Here, we have by \eqref{eq:expec-to-sample-2} that
    \[ \E \phi(\by) \ge \E |\langle \by,v\rangle| - \sqrt{n} \cdot \Pr\left[ |\langle \by,v\rangle| \ge \eta\sqrt{n} \right]. \]
    \Cref{lem:mlsi-to-concentration} allows us to bound
    \[ \Pr\left[ |\langle \by,v\rangle| \ge \eta\sqrt{n} \right] \le \exp\left( - \Omega \left( \frac{ \eta\sqrt{n} - \E |\langle\by,v\rangle| }{ n^{o(1)} } \right)^2 \right) = \exp\left( - n^{1-o(1)} \right). \]

    \medskip
    \noindent {\bf Case $\eta^{\downarrow} \sqrt{n} \le \E |\langle \by,v\rangle| \le \eta^{\uparrow} \sqrt{n}$. }

    \eqref{eq:expec-to-sample-2} implies that
    \begin{align*}
        \E \phi(\by) &\ge \E | \langle \by,v\rangle | - \E \left| \eta\sqrt{n} - |\langle \by,v\rangle| \right| \\
        &\ge \E | \langle \by,v\rangle | - \left| \eta\sqrt{n} - \E |\langle \by,v\rangle| \right| - \E \big| |\langle \by,v\rangle| - \E | \langle \by,v\rangle | \big| \\
        &\ge \E | \langle \by,v\rangle | - \left| \eta\sqrt{n} - \E |\langle \by,v\rangle| \right| - n^{o(1)} \\
        &\ge \eta^{\downarrow}\sqrt{n} - (\eta-\eta^{\downarrow})\sqrt{n} - n^{o(1)} \\
        &= \ol{\eta} \sqrt{n} - n^{o(1)} \mcom
    \end{align*}
    where the third inequality uses \Cref{lem:mlsi-to-concentration}.

    \medskip
    \noindent {\bf Case $\eta^{\uparrow}\sqrt{n} < \E |\langle \by,v\rangle|$. }

    \eqref{eq:expec-to-sample-1} implies that
    \begin{align*}
        \E \phi(\by) &\ge \eta\sqrt{n} \cdot \Pr\left[ | \langle \by,v\rangle | \ge \eta\sqrt{n} \right] \mper
    \end{align*}
    We give a lower bound for the above using \Cref{lem:mlsi-to-concentration} again:
    \[ \Pr\left[ |\langle \by,v\rangle| \ge \eta\sqrt{n} \right] \ge 1-\exp\left( - \Omega \left( \frac{ \E |\langle\by,v\rangle| - \eta\sqrt{n} }{ n^{o(1)} } \right)^2 \right) = 1-\exp\left( - n^{1-o(1)} \right) \mcom \]
    as desired.

    \medskip
    Now, let us put the pieces together to obtain a contradiction to \eqref{eq:phi-stability}. By \eqref{eq:expec-to-phi}, and the preceding lower bound on the shift in $\E |\langle \by, v\rangle |$, if $|\langle x,v\rangle| < \ol{\eta}\sqrt{n} - \delta - n^{o(1)}$,
    \begin{align*}
        \E_{\by \sim P_{\RGD}\delta_x} \phi(\by) &\ge \min\left\{\E |\langle \by,v\rangle| , \ol{\eta}\sqrt{n} - n^{o(1)}\right\} - \exp\left( -n^{1-o(1)} \right) \\
            &\ge \min\left\{ r+\delta , \ol{\eta}\sqrt{n} - n^{o(1)} \right\} - \exp\left( -n^{1-o(1)} \right) \\
            &\ge r+\delta - \exp\left( -n^{1-o(1)} \right) \mcom
    \end{align*}
    and similarly, if $|\angles*{x,v}| > \ol{\eta}\sqrt{n} - \delta - n^{o(1)}$, we have
    \[
        \E_{\by \sim P_{\RGD}\delta_x} \phi(\by) \ge \min\left\{ r, \ol{\eta}\sqrt{n} - n^{o(1)}\right\} - \exp\parens*{-n^{1-o(1)}}\mcom
    \]
    which completes the proof.
\end{proof}

\subsection{Glauber dynamics achieves weak recovery}\label{sec:glauber-weak-recovery}

We are finally ready to prove the main results of this section.

\mainspikedwig*
\begin{proof}
    The proof strategy is to establish that $\nu_{\bt}$ is locally stationary for restricted Gaussian dynamics, and then use \Cref{lem:booster-energy} to conclude.
    Let $\nu_t$ denote the distribution after running Glauber dynamics for time $t$.
    For $\bt\sim[0,T]$, by \Cref{thm:main-ls}, the distribution $\nu_{\bt}$ is $O\parens*{\frac{n}{\zeta T}}$-locally stationary, except with probability at most $\zeta$.
    Let us assume that this event occurs.\\
    We now compute the parameters we can plug into \Cref{lem:ls-dist-rgd}.
    \begin{itemize}
        \item First, due to local stationarity, we have $\calE\parens*{\density_{\bt},\log\density_{\bt}} < O\parens*{\frac{n}{\zeta T}}$.
        \item By \Cref{item:MLSI-bdd} of \Cref{th:bdd-mixture}, the value of $\delta$ from \Cref{lem:ls-dist-rgd} is at least $ \frac{1}{(1-\kappa)\cdot n}$.
        \item Finally, it remains to upper bound $\density_{\bt}(x)$ for all $x\in\{\pm1\}^n$. Indeed, since the Hamiltonian of $\mu$ is bounded from above by $O(n)$, we may assume that $\tau$ from \Cref{lem:ls-dist-rgd} is $O(n)$.
    \end{itemize}
    Once we plug in these parameters, we get:
    \[
        \calE_{\RGD}\parens*{ \density_{\bt}, \log\density_{\bt} } < \wt{O}\parens*{ \frac{n^{3}}{\zeta T} }\mper
    \]
    When $\zeta T \gg n^{5}$, we get
    \[
        \calE_{\RGD}\parens*{\density_{\bt}, \log\density_{\bt}} < o\parens*{\frac{1}{n^2}}\mper
    \]
    By \Cref{lem:booster-energy} with parameters $\eps = o\left(\frac{1}{n^2}\right)$ and $\delta = o(1)$, and the values for $\Cvar$ and $\entstab$ from \Cref{th:bdd-mixture}, we get that:
    \[
        \E_{\bx\sim\nu_{\bt}} \abs*{\angles*{\bx, v}} \ge \left( \sqrt{\frac{\Cvar}{\entstab}} - o(1) \right) \cdot \sqrt{n} \ge \left( \kappa\exp(-1/\kappa) - o(1)  \right) \sqrt{n}.
    \]
    We treat $\kappa$ as a constant, and can set $\zeta = o(1)$ and $T = \wt{\Omega}\parens*{n^5}$ to finish the proof.
\end{proof}

The proof of \Cref{th:sbm-recovery} for SBM weak recovery is very similar, but we provide the explicit parameter dependencies for completeness.
To be explicit, in the notation of \Cref{question:weak-recovery}, for SBM recovery we have $v = \frac{1}{\sqrt{n}} \bsigma$, $M = A_{\bG} - \frac{d}{n} \bone \bone^\top$, and $W = \ol{A}_{\bG} \coloneqq A_{\bG} - \E[A_{\bG}|\bsigma]$. 
One can verify that $M = W + \lambda \sqrt{d} \cdot vv^\top$ by using the fact that $\E[A_{\bG}|\bsigma] = \frac{d}{n} \bone \bone^\top + \frac{\lambda\sqrt{d}}{n} \bsigma \bsigma^\top$.

\mainsbm*
\begin{proof}
    Again, let $\nu_t$ denote the distribution after running Glauber dynamics on $\mu_{M, 0}$ until time $t$.
    As before, let us assume that $\nu_{\bt}$ is $O\left( \frac{n}{\zeta T} \right)$-locally stationary, which occurs with probability $1 - \zeta$.
    We now compute the parameters we can plug into \Cref{lem:ls-dist-rgd}.
    \begin{itemize}
        \item By \Cref{item:MLSI-bdd} of \Cref{th:sbm-mixture}, the value of $\delta$ from \Cref{lem:ls-dist-rgd} is at least $ \frac{1}{n^{1+o_d(1)}}$.
        \item To upper bound $\density_{\bt}(x)$ for all $x\in\{\pm1\}^n$, we have $\norm{Wx}_{\infty} \le O(\log n)$, so that we can bound $\tau \le O(n\log n)$.
    \end{itemize}
    Once we plug in these parameters, we get:
    \[
        \calE_{\RGD}\parens*{ \density_{\bt}, \log\density_{\bt} } < \wt{O}\parens*{ \frac{n^{3+o_d(1)}}{\zeta\cdot T} }\mper
    \]
    When $\zeta \cdot T \gg n^{5+o_d(1)}$, we get that
    \[ 
        \calE_{\RGD}\parens*{\density_{\bt}, \log\density_{\bt}} < o \left( \frac{1}{n^2} \right).
    \]
    By \Cref{lem:booster-energy} with parameters $\eps$ as above and $\delta = O(1)$, for some universal constant $c$,
    \[
        \E_{\bx\sim\nu_{\bt}} \abs*{\angles*{\bx, \bsigma}} \ge cn.
    \]
    Setting $\zeta = o(1)$ and $T = \wt{\Omega}\parens*{n^{5+o_d(1)}}$ completes the proof.
\end{proof}

\section*{Acknowledgments}
We would like to thank Omar Alrabiah, Sitan Chen, and Ansh Nagda for insightful discussions.

\bibliographystyle{alpha}
\bibliography{main}

\end{document}